\newif\ifarxiv
\spnewtheorem{clm}{Claim}{\bfseries}{\itshape}
\newenvironment{obs}{\smallbreak \noindent \textbf{Observation.}}{\smallbreak}
\newenvironment{rmk}{\smallbreak\noindent Remark.}{\smallbreak}
\newcommand{\graphClass}{accordion\xspace}
\newcommand{\GraphClass}{Accordion\xspace}
\newcommand{\R}{\ensuremath{\mathbb R}}
\newcommand{\A}{\ensuremath{\mathcal A}}
\newcommand{\As}{\ensuremath{\mathbb A}}
\newcommand{\B}{\ensuremath{\mathcal B} }
\newcommand{\D}{\ensuremath{\mathcal D}}
\newcommand{\Q}{\ensuremath{\mathbb Q}}
\newcommand{\Po}{\ensuremath{\mathcal P}\xspace}
\newcommand{\Or}{\ensuremath{\mathcal O}}
\newcommand{\f}{\ensuremath{\mathfrak f}}
\newcommand{\outdeg}{\ensuremath{\text{outdeg}}}
\newcommand{\Det}{\ensuremath{\text{Det}}}
\newcommand{\area}{\ensuremath{\textsc{area}}}
\newcommand{\diam}{dia\-mond addition\xspace}
\newcommand{\FER}{\ensuremath{\forall \exists \mathbb R}\xspace}
\newcommand{\ER}{\ensuremath{\exists \mathbb{R}}\xspace}
\newcommand{\AreaEqA}[1]{\ensuremath{\textsc{aeq}(T,\A,#1)}}
\newcommand{\AreaEq}[2]{\ensuremath{\textsc{aeq}(T,#1,#2)}}
\newcommand{\N}{\ensuremath{\mathcal N}}
\newcommand{\Nx}{\ensuremath{\mathcal{N}^x}}
\newcommand{\Ny}{\ensuremath{\mathcal N^y}}
\newcommand{\pOrder}{p-order\xspace}
\newcommand{\pred}{\text{pred}\xspace}
\newcommand{\lastfaceFun}{last face function\xspace}
\newcommand{\lkGraph}{$\ell k$-graph\xspace}
\newcommand{\crr}{crr-free\xspace}
\newcommand{\acc}[1]{\ensuremath{\mathcal K_{#1}}\xspace}
\newcommand{\dS}[2]{\ensuremath{\mathcal H_{#1,#2}}\xspace}
\def\arVP{almost realizing vertex place\-ment\xspace}
\crefname{section}{Section}{Sections}
\crefname{theorem}{Theorem}{Theorems}
\crefname{corollary}{Corollary}{Corollaries}
\crefname{lemma}{Lemma}{Lemmas}
\crefname{observation}{Observation}{Observations}
\crefname{figure}{Figure}{Figures}
\crefname{equation}{Equation}{Equations}
\crefname{clm}{Claim}{Claims}
\crefname{proposition}{Proposition}{Propositions}
\begin{document}

\title{On the Area-Universality of Triangulations}
\titlerunning{Area-universal triangulations}
\author{Linda Kleist}
\authorrunning{L. Kleist}
\institute{Technische Universit\"at Berlin, Germany, \email{kleist@math.tu-berlin.de}}
\maketitle             

\begin{abstract}
We study straight-line drawings of planar graphs with prescribed face areas. A plane graph is \emph{area-universal} if for every area assignment on the inner faces, there exists a straight-line drawing realizing the prescribed areas. 

For triangulations with a special vertex order, we present a sufficient criterion for area-universality that only requires the investigation of one area assignment.
Moreover, if the sufficient criterion applies to one plane triangulation, then all embeddings of the underlying planar graph are also area-universal. To date, it is open whether area-universality is a property of a plane or planar graph.

We use the developed machinery to present area-universal families of triangulations. Among them we characterize area-universality of accordion graphs showing that area-universal and non-area-universal graphs may be structural very similar. 
\end{abstract}
\keywords{area-universality \and triangulation \and planar graph \and face area}

\graphicspath{{figures/}}

%

\section{Introduction}
By Fary's theorem \cite{fary,stein1951,wagner_bemerkungen}, every plane graph has a straight-line drawing. 
We are interested in straight-line drawings with the additional property that the face areas correspond to prescribed values. Particularly, we study \emph{area-universal} graphs for which all prescribed face areas can be realized by a straight-line drawing. Usually, in a \emph{planar} drawing, no two edges intersect except in common vertices. It is worthwhile to be slightly more generous and allow  \emph{crossing-free} drawings, i.e., drawings that can be obtained as the limit of a sequence of planar straight-line drawings.
Note that a crossing-free drawing of a triangulation is not planar (\emph{degenerate}) if and only if the area of at least one face vanishes.
Moreover, we consider two crossing-free drawings of a plane graph as \emph{equivalent} if the cyclic order of the incident edges at each vertex and the outer face coincide.

For a plane graph $G$, we denote the set of faces by $F$, and the set of inner faces by $F'$. An \emph{area assignment} is a function $\A\colon F'\to \mathbb \R_{\geq0}$. 
We say $G$ is \emph{area-universal} if for every area assignment $\A$ there exists an equivalent crossing-free drawing  where every inner face  $f\in F'$ has area $\A(f)$. We call such a drawing \emph{$\A$-realizing} and the area assignment $\A$ \emph{realizable}.

\paragraph{Related Work.} 

Biedl and Ruiz Vel{\'{a}}zquez \cite{biedl} showed that planar partial 3-trees, also known as subgraphs of \emph{stacked triangulations} or \emph{Apollonian networks}, are area-universal.
In fact, every subgraph of a plane area-universal graph is area-universal.
Ringel \cite{ringel} gave two examples of graphs that have drawings where all face areas are of equal size, namely the octahedron graph and the icosahedron graph. 
Thomassen \cite{thomassen}  proved that plane 3-regular graphs are area-universal. 
Moreover, Ringel \cite{ringel} showed that the octahedron graph is not area-universal. 
Kleist \cite{kleist1Journal} 
generalized this result by introducing a  simple counting argument which shows that no Eulerian triangulation, different from $K_3$, is area-universal. 
Moreover, it is shown in 
\cite{kleist1Journal}
that every  1-subdivision of a plane graphs is area-universal; that is, every area assignment of a plane graph has a realizing polyline drawing where each edge has at most one bend.
Evans et al.\ \cite{kleistQuad,kleistPhD} present classes of area-universal plane quadrangulations. In particular, they verify the conjecture that plane bipartite graphs are area-universal for quadrangulations with up to 13 vertices. Particular graphs have also been studied: It is known that the square grid \cite{tableCartograms} and the unique triangulation on seven vertices \cite{BachelorThesis} are area-universal. Moreover, non-area-universal triangulations on up to ten vertices have been investigated in~\cite{Henning}. 

The computational complexity of the decision problem of area-universality for a given graph was studied by Dobbins et al.\ \cite{kleist2}. The authors show that this decision problem belongs to \textsc{Universal Existential Theory of the Reals}  $(\FER)$, a natural generalization of the class \textsc{Existential Theory of the Reals} (\ER), and conjecture that this problem is also \FER-complete. They show hardness of several variants, e.g., the analogue problem of volume universality of simplicial complexes in three dimensions.

In a broader sense, drawings of planar graphs with prescribed face areas can be understood as {\em cartograms}.  
Cartograms have been intensely studied for duals of triangulations  \cite{cartogramsTs,deBerg07,BR-WADS11,Nagamochi} and in the context of 
rectangular layouts, dissections of a 
rectangle into rectangles \cite{Eppstein,felsner,Floorplans}. For a detailed survey of the cartogram literature, we refer to~\cite{NusratKobourovSurvery}.

\paragraph{Our contribution.} In this work we present three characterizations of area-uni\-ver\-sal triangulations. We use these characterizations for proving area-universality of certain triangulations. 
Specifically, we consider triangulations with a vertex order,  where (most) vertices have at least three neighbors with smaller index, called \emph{predecessors}. We call such an order a \emph{\pOrder}.
For triangulations with a \pOrder, 
the realizability of an area assignment reduces to finding a real root of a univariate polynomial. If the polynomial is surjective, we can guarantee area-universality. 
In fact, this is the only known method to prove the area-universality of a triangulation besides the simple argument for plane 3-trees relying on $K_4$.

We discover several interesting facts. First, to guarantee area-universality it is enough to investigate one area assignment. Second, if the polynomial is surjective for one plane graph, then it is for every embedding of the underlying planar graph. Consequently, the  properties of one area assignment can imply the area-universality of all embeddings of a planar graph. 
This may indicate that area-universality is a property of planar graphs.

We use the method to prove area-universality for several graph families including  accordion graphs. 
To obtain an \emph{\graphClass graph} from the plane octahedron graph,
 we introduce new vertices of degree 4 by subdividing an edge of the central triangle. 
\cref{fig:funnyClass2} presents four examples of \graphClass graphs. 
Surprisingly, the insertion of an even number of vertices  yields a non-area-universal graph while the insertion of an odd number of vertices  yields an area-universal graph.  Accordions with an even number of vertices are Eulerian and thus  not area-universal \cite{kleist1Journal}. Consequently, area-universal and non-area-universal graphs may have a very similar structure. (In \cite{kleistPhD}, we use the method to classify small triangulations with \pOrder{}s on up to ten vertices.)

\begin{figure}[hbt]
 \centering
 \includegraphics[page=1]{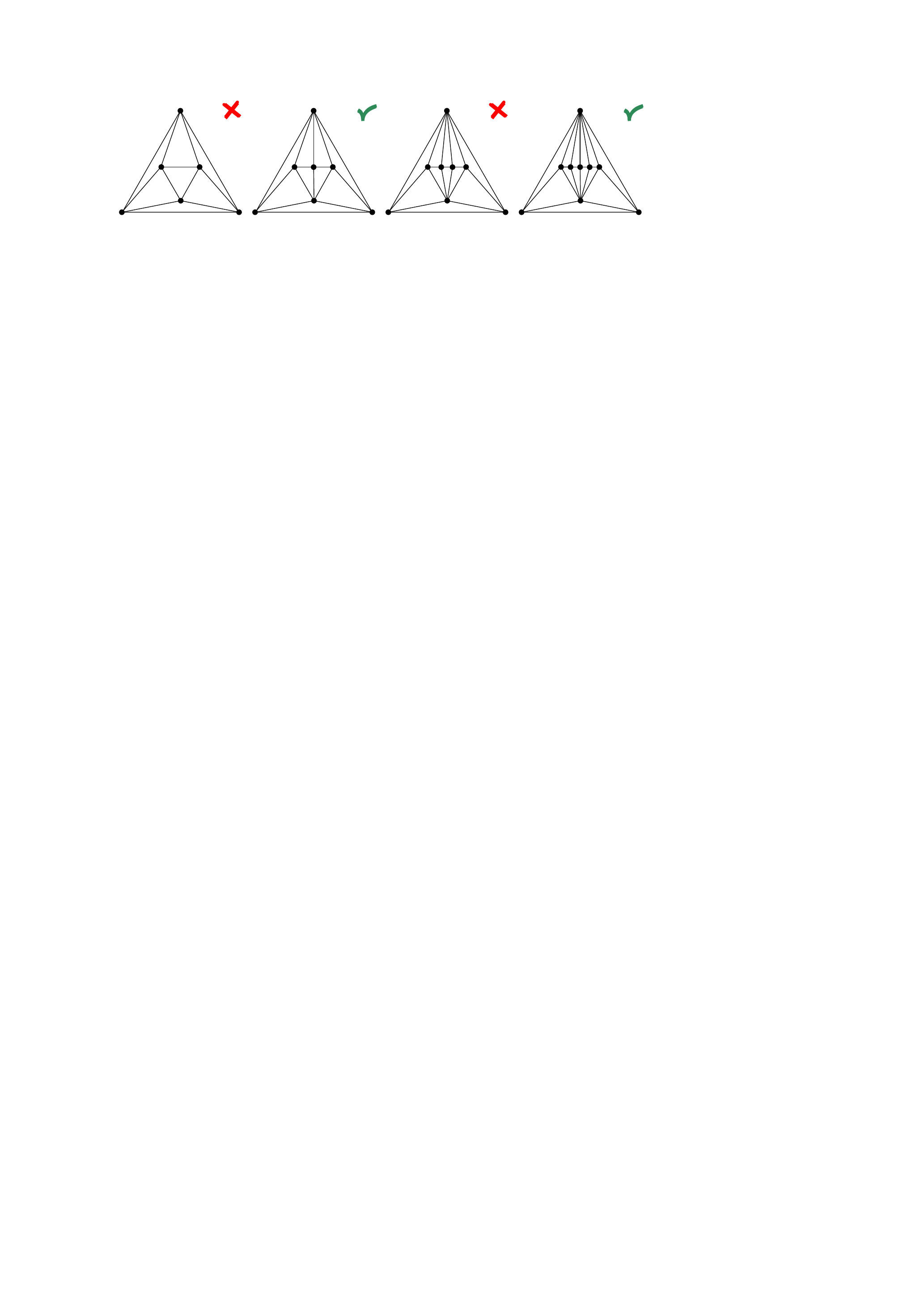}
 \caption{Examples of accordion graphs. A checkmark indicates area-universality and a cross non-area-universality.}
 \label{fig:funnyClass2}
\end{figure}

\paragraph{Organization.} 
We start by presenting three characterizations of area-universality of triangulations in \cref{sec:prelim}.
In \cref{sec:poly}, we turn our attention to triangulations with \pOrder{}s and show how the analysis of one area assignment can be sufficient to prove area-universality of all embeddings of the given triangulation.
Then, in \cref{sec:applications}, we apply the developed method to prove area-universality for certain graph families; among them we characterize the area-universality of \graphClass graphs. We end with a discussion and a list of open problems in \cref{sec:end}.
\ifarxiv
In \cref{app:2,app:3,app:applications} we present omitted proofs of \cref{sec:prelim,sec:poly,sec:applications}.
\else
For omitted proofs consider the appendices of the full version~\cite{kleist3Arx}.
\fi
\section{Characterizations of Area-Universal Triangulations}\label{sec:prelim}
Throughout this section, let $T$ be a plane triangulation on $n$ vertices.
A straight-line drawing of $T$ can be encoded by the $2n$ vertex coordinates, and hence, by a point in the Euclidean space $\R^{2n}$.
We call such a vector of coordinates a \emph{vertex placement} and denote the set of all vertex placements encoding crossing-free drawings  by $\D(T)$; we also write $\D$ if $T$ is clear from the context. 

It is easy to see that an  $\A$-realizing drawing of a triangulation can be transformed by an affine linear map into an $\A$-realizing drawing where the outer face corresponds to any given triangle of correct total area $\Sigma \A:=\sum_{f\in F'}\A(f)$, where $F'$ denotes the set of inner faces as before.
\begin{lemma}\emph{\cite[Obs. 2]{kleist1Journal}}\label{lem:outerFace}
A plane triangulation $T$ with a realizable area assignment~$\A$, has an $\A$-realizing drawing within every given outer face of area $\Sigma\A$.
\end{lemma}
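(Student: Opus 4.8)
The plan is to realize the lemma with a single affine map. First I would use the hypothesis that $\A$ is realizable to fix an $\A$-realizing crossing-free drawing $\Gamma$ of $T$; let $t$ be its outer triangle and let $t'$ be the prescribed target triangle, which by assumption also has area $\Sigma\A$. If $\Sigma\A=0$ there is nothing to do (every inner area is $0$ and any degenerate drawing placed on $t'$ works), so assume $\Sigma\A>0$, whence both $t$ and $t'$ are non-degenerate. I would also record the standard fact that in a crossing-free straight-line drawing of a triangulation the area of the outer face equals the sum of the inner face areas; applied to $\Gamma$ this yields $\area(t)=\Sigma\A=\area(t')$.

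The core step is to take the unique affine map $\phi\colon\R^2\to\R^2$ sending the three vertices of $t$, in the cyclic order induced by the outer face of $\Gamma$, to the three vertices of $t'$ in the corresponding order (three points in general position determine an affine map, so $\phi$ exists and is unique). An affine map scales every planar area by the absolute value of the determinant of its linear part; since $\phi$ carries a triangle of area $\Sigma\A$ onto a triangle of the same area, this determinant has absolute value $1$, and by choosing the labeling of the vertices of $t'$ so that the two orientations agree we may take $\phi$ orientation-preserving with determinant exactly $1$. Hence $\phi$ preserves the area of every individual face, so $\phi(\Gamma)$ is a straight-line drawing realizing the same area assignment $\A$ and having $t'$ as its outer face.

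It remains to check that $\phi(\Gamma)$ lies in $\D(T)$ and is equivalent to $\Gamma$. Since $\phi$ is an orientation-preserving homeomorphism of the plane mapping segments to segments, it takes any planar straight-line drawing of $T$ to a planar straight-line drawing of $T$ with the same rotation system and the same outer face; and, being continuous, it maps a sequence of planar drawings converging to $\Gamma$ (which exists because $\Gamma$ is crossing-free) to a sequence of planar drawings converging to $\phi(\Gamma)$, so $\phi(\Gamma)$ is crossing-free and equivalent to $\Gamma$. This completes the argument.

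The proof is essentially the observation that the group of area-preserving affine maps acts on the set of $\A$-realizing drawings and acts transitively on (positively oriented) outer triangles of area $\Sigma\A$; the only points requiring care are bookkeeping ones, namely fixing the vertex correspondence so that $\phi$ is orientation-preserving (to preserve the equivalence class of the drawing) and noting that crossing-freeness, defined via limits of planar drawings, is preserved by the homeomorphism $\phi$. Neither is a genuine obstacle.
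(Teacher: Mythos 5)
Your proof is correct and follows exactly the route the paper indicates (the paper cites this as an observation from prior work and justifies it in one sentence by the same affine-map argument): compose a realizing drawing with the unique orientation-preserving affine map carrying its outer triangle onto the target triangle, note the determinant is $1$ so all face areas are preserved, and observe that such a map preserves straight lines, rotation systems, and limits of planar drawings, hence crossing-freeness and equivalence. The only points needing care — matching orientations when labelling the corners of the target triangle, and invoking $\area(f_o,D)=\sum_{f\in F'}\area(f,D)$ to see that the source triangle already has area $\Sigma\A$ — are both handled.
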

Likewise, affine linear maps can be used to scale realizing drawings by any factor. 
For any positive real number $\alpha\in\mathbb R$ and area assignment $\A$, let $\alpha \A$ denote the \emph{scaled area assignment} of $\A$ where $\alpha\A(f):=\alpha\cdot \A(f)$ for all $f\in F'$.
\begin{lemma}\label{lem:scaling}
Let $\A$ be an area assignment of a plane graph and $\alpha>0$. The scaled area assignment $\alpha\A$ is realizable if and only if $\A$ is realizable. 
\end{lemma}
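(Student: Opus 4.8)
The plan is to exploit that a linear map of the plane with positive determinant $\alpha$ multiplies every area by exactly $\alpha$ while preserving the combinatorial structure of a drawing. Concretely, I would fix the map $\phi\colon \R^2\to\R^2$ given by $\phi(x,y)=(\sqrt{\alpha}\,x,\ \sqrt{\alpha}\,y)$, whose Jacobian determinant equals $\alpha$. Since $\phi$ is an orientation-preserving homeomorphism of $\R^2$, applying $\phi$ to every vertex of a drawing of $G$ yields a new drawing that (i) is crossing-free if and only if the original is — a linear homeomorphism maps a convergent sequence of planar straight-line drawings to a convergent sequence of planar straight-line drawings, and the same holds for $\phi^{-1}$ — and (ii) is equivalent to the original, because $\phi$ preserves the cyclic order of edges around each vertex and around the outer face (here orientation-preservation is what matters).

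Next I would observe that for any face $f$, the closed region bounded by the drawing of $f$ is sent by $\phi$ to the closed region bounded by the drawing of $f$ in the image, and the (Lebesgue) area of the latter equals $\alpha$ times the area of the former. Hence a drawing realizes $\A$ if and only if its image under $\phi$ realizes $\alpha\A$. Note that this also covers degenerate drawings without extra work: a face has vanishing area in the original drawing exactly when it does in the image, which is consistent with $\alpha\cdot 0=0$.

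Combining these facts gives both directions. If $\A$ is realizable, take an equivalent crossing-free $\A$-realizing drawing; its image under $\phi$ is then an equivalent crossing-free $\alpha\A$-realizing drawing, so $\alpha\A$ is realizable. For the converse, run the same argument with $\alpha$ replaced by $1/\alpha>0$ and the map $\phi^{-1}$, starting from an $\alpha\A$-realizing drawing, to obtain a $\tfrac1\alpha(\alpha\A)=\A$-realizing drawing.

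I do not expect a substantial obstacle here; the statement is essentially a scaling argument. The only points that warrant a line of care are verifying that $\phi$ preserves the equivalence class of a crossing-free drawing (which uses that $\phi$ is an orientation-preserving homeomorphism, not merely a bijection) and being precise that the area-scaling identity is applied to the closed regions bounded by the faces rather than to the edge set, so that degenerate faces of area $0$ are handled uniformly.
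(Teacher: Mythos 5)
Your proof is correct and follows exactly the route the paper intends: the lemma is justified there by the one-line remark that affine linear maps can be used to scale realizing drawings by any factor, which is precisely your orientation-preserving linear map $\phi(x,y)=(\sqrt{\alpha}\,x,\sqrt{\alpha}\,y)$ with Jacobian determinant $\alpha$. Your additional care about degenerate faces and preservation of the equivalence class is a sound elaboration of the same argument.
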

For a plane graph and $c>0$, let $\As^c$ denote the set of area assignments with a total area of $c$. \cref{lem:scaling} directly implies the following property.
\begin{lemma}\label{lem:fixedTotal}
 Let $c>0$. A plane graph is area universal if all area assignments in~$\As^c$ are realizable.
\end{lemma}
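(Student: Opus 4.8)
The plan is straightforward: \cref{lem:fixedTotal} should follow almost immediately from \cref{lem:scaling} by a rescaling argument. First I would recall what needs to be shown: assuming every area assignment in $\As^c$ is realizable for some fixed $c>0$, I must show that an arbitrary area assignment $\A$ of the plane graph is realizable.

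The key step is to rescale $\A$ into $\As^c$. Given an arbitrary area assignment $\A$ with total area $\Sigma\A = \sum_{f\in F'}\A(f)$, set $\alpha := c/\Sigma\A$, which is a positive real provided $\Sigma\A > 0$. Then the scaled area assignment $\alpha\A$ has total area $\alpha \cdot \Sigma\A = c$, so $\alpha\A \in \As^c$ and is therefore realizable by hypothesis. Applying \cref{lem:scaling} with this $\alpha$ (or rather its reciprocal), realizability of $\alpha\A$ implies realizability of $\A$. Since $\A$ was arbitrary, the graph is area-universal.

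The only point requiring a moment of care is the degenerate case $\Sigma\A = 0$, i.e., when every inner face is assigned area zero; then the rescaling factor is undefined. I expect this to be the main (very minor) obstacle. It is handled by observing that the all-zero area assignment is trivially realizable by a crossing-free degenerate drawing — indeed, any crossing-free drawing that collapses to a segment, or one obtained as a limit of planar drawings shrinking to a point, realizes all-zero areas; alternatively one notes that in any triangulation one can always find such a degenerate equivalent drawing. (If the paper's conventions already guarantee $\Sigma\A>0$ or treat the zero assignment as vacuous, this case can simply be dismissed in one line.) With that caveat dispatched, the proof is a one-line consequence of \cref{lem:scaling}.
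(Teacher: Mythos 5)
Your proof is correct and is exactly the argument the paper intends: rescale an arbitrary assignment into $\As^c$ via \cref{lem:scaling} (the paper simply states that \cref{lem:scaling} "directly implies" the result without writing this out). Your extra care for the all-zero assignment $\Sigma\A=0$ is a reasonable addition; under the paper's convention that crossing-free drawings are limits of planar drawings, that case is indeed realizable by a collapsed drawing, so the argument is complete.
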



 
\subsection{Closedness of Realizable Area Assignments}
In \cite[Lemma 4]{kleist1Journal}, it is shown for triangulations that $\A\in \As^c$ is realizable if and only if in every open neighborhood of $\A$ in $\As^c$ there exists a realizable area assignment.
For our purposes, we need a stronger version. 
Let $\As^{\leq c}$ denote the set of area assignments of~$T$ with a total area of at most $c$.  For a fixed face $f$ of~$T$, $\As^{\leq c}|_{f\to a}$ denotes the subset of $\As^{\leq c}$  where $f$ is assigned to a fixed $a>0$.

\begin{restatable}{proposition}{fillingStronger}\label{cor:fillinSTRONGER}
 Let $T$ be a plane triangulation and $c>0$. 
 Then $\A\in\As^c$ is realizable if and only if for some face $f$ with $\A(f)>0$ every open neighborhood of $\A$ in $\As^{\leq 2c}|_{f\to \A(f)}$ contains a realizable area assignment. 
\end{restatable}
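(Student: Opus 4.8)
The plan is to obtain the nontrivial implication from a compactness argument on drawings; the reverse implication is immediate. For the reverse direction, note that $\Sigma\A=c>0$ forces some inner face $f$ to have $\A(f)>0$; for this $f$ the assignment $\A$ itself lies in $\As^{\leq 2c}|_{f\to\A(f)}$, is realizable by hypothesis, and is contained in every neighbourhood of itself, so the stated condition holds.

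For the forward direction, fix a face $f$ with $a:=\A(f)>0$ as provided by the hypothesis and choose realizable $\A_k\to\A$ with $\A_k\in\As^{\leq 2c}|_{f\to a}$, so that $\A_k(f)=a$ and $\Sigma\A_k\leq 2c$ for all $k$. Let $P_k$ be a vertex placement of a crossing-free drawing equivalent to $T$ realizing $\A_k$. I would first normalize: since $\A_k(f)=a>0$, the face $f$ is a non-degenerate triangle in this drawing carrying the orientation it has in $T$, so there is a unique orientation-preserving affine bijection of $\R^2$ sending the three vertices of $f$ onto a fixed triangle $\tau$ of area $a$. Such a map preserves crossing-freeness and the equivalence class and multiplies every face area by $|\det|=a/a=1$, so the resulting placement --- which I again call $P_k$ --- lies in $\D(T)$, realizes $\A_k$, and has $f$ placed at $\tau$.

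The key step is to confine all $P_k$ to a single compact set. In any crossing-free drawing of a triangulation the outer face is a triangle $O_k$ of area $\Sigma\A_k\leq 2c$ containing all vertices, and here $O_k\supseteq\tau$ by convexity. Because $\tau$ is a fixed non-degenerate triangle, the minimal width of $O_k$ is at least the minimal width $w_0>0$ of $\tau$; since the diameter of a triangle equals twice its area divided by its minimal width, $\operatorname{diam}(O_k)\leq 4c/w_0$. Hence $O_k$, and with it every vertex of $P_k$, lies in a fixed closed ball of $\R^2$, so all $P_k$ lie in a fixed compact subset of $\R^{2n}$. By Bolzano--Weierstrass a subsequence $P_{k_j}$ converges to some $P^{*}$, which lies in $\D(T)$ because $\D(T)$ is closed (it is the closure of the set of proper planar straight-line drawings equivalent to $T$). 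As the area of each inner face $g$ is a continuous function of the vertex placement, $\operatorname{area}_g(P^{*})=\lim_j\operatorname{area}_g(P_{k_j})=\lim_j\A_{k_j}(g)=\A(g)$ for every $g\in F'$; thus $P^{*}$ is an $\A$-realizing drawing and $\A$ is realizable.

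I expect the boundedness step to be the only genuine obstacle: a priori the realizing drawings of the $\A_k$ could escape to infinity or degenerate, which is exactly why the proposition perturbs $\A$ inside $\As^{\leq 2c}|_{f\to\A(f)}$ rather than arbitrarily --- fixing $f$ at $\tau$ removes the affine freedom, while the uniform bound $\Sigma\A_k\leq 2c$ together with $\tau\subseteq O_k$ pins the whole drawing into a fixed compact region. The remaining ingredients (the affine normalization, continuity of face areas, and closedness of $\D(T)$) are routine.
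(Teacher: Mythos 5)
Your proof is correct and follows essentially the same route as the paper's: normalize so that $f$ sits on a fixed triangle, take a sequence of realizable assignments converging to $\A$, extract a convergent subsequence of the corresponding vertex placements, and use closedness of the space of crossing-free drawings together with continuity of face areas. The only (cosmetic) difference is how boundedness is established --- you bound $\operatorname{diam}(O_k)$ via the minimal width of the fixed triangle $\tau\subseteq O_k$ and the area bound $2c$, whereas the paper confines all vertices to the intersection of three half-planes determined by area-$c$ constraints on the sides of $f$; both hinge on the same fact that the drawing lies inside its outer triangle of area at most $2c$.
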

Intuitively, \cref{cor:fillinSTRONGER} enables us not to worry about area assignments with bad but unlikely properties.  In particular, area-universality is guaranteed by the realizability of a dense subset of $\As^c$. 
Moreover, this stronger version allows to certify the realizability of an area assignment by  realizable area assignments with slightly different total areas.
The proof of \cref{cor:fillinSTRONGER} goes along the same lines as in \cite[Lemma 4]{kleist1Journal}; it is based on the fact that the set of drawings of $T$ with a fixed face $f$ and a total area of at most $2c$ is compact.

\subsection{Characterization by 4-Connected Components}
For a plane triangulation $T$, a \emph{$4$-connected component} is a maximal $4$-connected subgraph of $T$. Moreover, we call a triangle $t$ of~$T$ \emph{separating} if at least one vertex of~$T$ lies inside $t$ and at least one vertex lies outside $t$; in other words, $t$ is not a face of $T$.

\begin{restatable}{proposition}{connected}\label{lem:4connnected}
 A plane triangulation $T$ is area-universal if and only if every 4-connected component of $T$ is area-universal.
\end{restatable}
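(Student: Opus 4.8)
The plan is to prove both directions by decomposing $T$ along its separating triangles. First I would recall the standard structural fact that a plane triangulation $T$ decomposes into its $4$-connected components via its separating triangles: each separating triangle $t$ splits the interior drawing problem into the part inside $t$ and the part outside $t$, and iterating this yields a tree-like decomposition whose ``leaves'' are exactly the $4$-connected components together with copies of $K_4$ arising from stacked vertices. Since $K_4$ is trivially area-universal (any inner triangle of prescribed area can be placed inside a given outer triangle of larger area by \cref{lem:outerFace} and a barycentric-type argument), the only potential obstructions to area-universality sit in the nontrivial $4$-connected components.

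For the forward direction ($T$ area-universal $\Rightarrow$ each $4$-connected component area-universal), I would argue contrapositively. Suppose some $4$-connected component $H$ of $T$ is not area-universal, witnessed by an area assignment $\A_H$ on the inner faces of $H$ that is not realizable. The subgraph $H$ sits inside $T$ bounded by a triangle $t_H$ (a face or separating triangle of $T$), and the faces of $T$ lying inside $t_H$ are precisely the faces of $H$. I extend $\A_H$ to an area assignment $\A$ on all of $T$ by assigning arbitrary positive areas to the remaining faces, scaled if needed via \cref{lem:scaling} so that the total area of the faces inside $t_H$ equals whatever $\Sigma\A_H$ is. In any $\A$-realizing drawing of $T$, the restriction to $H$ would be a crossing-free drawing realizing $\A_H$ inside the triangle $t_H$; by \cref{lem:outerFace} applied to $H$, this could be transformed into an $\A_H$-realizing drawing of $H$ with any chosen outer triangle, contradicting non-realizability of $\A_H$. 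Here I should be slightly careful: a crossing-free drawing of $T$ may be degenerate, so the induced drawing of $H$ is crossing-free (a limit of planar drawings) but possibly degenerate — which is exactly why we work with crossing-free rather than planar drawings throughout, so the argument goes through.

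For the reverse direction (all $4$-connected components area-universal $\Rightarrow$ $T$ area-universal), I would induct on the number of separating triangles. If $T$ has none, then $T$ itself is $4$-connected (or is $K_4$), and we are done. Otherwise pick a separating triangle $t$ with no separating triangle strictly inside it; let $T_{\text{in}}$ be the triangulation induced by $t$ and its interior, and $T_{\text{out}}$ the one obtained by deleting the interior of $t$ and keeping $t$ as a face. Both have fewer separating triangles, and every $4$-connected component of $T_{\text{in}}$ or $T_{\text{out}}$ is a $4$-connected component of $T$ (or a $K_4$), so by induction both are area-universal. Given an area assignment $\A$ on $T$, let $a$ be the sum of the areas of faces inside $t$; realize $T_{\text{out}}$ with the induced assignment (the single face $t$ getting area $a$), then by \cref{lem:outerFace} realize $T_{\text{in}}$ with the induced assignment inside the triangle that the drawing of $T_{\text{out}}$ assigns to $t$. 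Gluing the two crossing-free drawings along $t$ gives an $\A$-realizing crossing-free drawing of $T$. The main obstacle I anticipate is the bookkeeping around the tree-of-components decomposition and degenerate drawings: one must verify that the faces partition cleanly across the separating triangle, that ``$4$-connected component of $T_{\text{in}}$ (resp. $T_{\text{out}}$)'' coincides with ``$4$-connected component of $T$'' after adding back the vertices outside (resp. inside) $t$, and that gluing crossing-free (possibly degenerate) drawings along a shared triangle yields an equivalent crossing-free drawing of $T$ — all of which are intuitive but require the compactness/limit framework of \cref{cor:fillinSTRONGER} to handle cleanly if degeneracies arise.
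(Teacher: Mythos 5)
Your proposal is correct and follows essentially the same route as the paper: the reverse direction is the paper's induction along separating triangles (split into the triangulations induced by $t$ and its interior/exterior, realize the exterior with $t$ carrying the total interior area, then realize the interior inside that triangle via \cref{lem:outerFace} and glue), and the forward direction amounts to the standard fact that subgraphs of area-universal plane graphs are area-universal, which the paper invokes directly. One small imprecision: the faces of $T$ inside $t_H$ need not coincide with the faces of $H$ (other components may be stacked inside faces of $H$), so in the forward direction you must distribute each $\A_H(f)$ over the $T$-faces contained in $f$ rather than only matching the total $\Sigma\A_H$ --- but that is exactly the usual subgraph argument and does not affect correctness.
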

\begin{proof}[Sketch]
The proof is based on the fact that a plane graph $G$ with a separating triangle $t$ is area-universal if and only if $G_\textsc{e}$, the induced graph by~$t$ and its exterior,  and $G_\textsc{i}$,  the induced graph by $t$ and its interior,  are area-universal. In particular, \cref{lem:outerFace} allows us to combine realizing drawings of $G_\textsc{e}$ and $G_\textsc{i}$ to a drawing of $G$.
\end{proof}

\begin{rmk}
 Note that a plane 3-tree has no 4-connected component.
 (Recall that $K_4$ is 3-connected and a graph on $n>4$ vertices is 4-connected if and only if it has no separating triangle.)
 This is another way to see their area-universality.
\end{rmk}
\subsection{Characterization by Polynomial Equation System}
Dobbins et al.\ \cite[Proposition 1]{kleist2} show a close connection of area-universality and equation systems: For every plane graph $G$ with area assignment $\A$ there exists a polynomial equation system $\mathcal E$ such that $\A$ is realizable if and only if $\mathcal E$ has a real solution. 
Here we strengthen the statement for triangulations, namely it suffices to guarantee the face areas; these imply all further properties such as planarity and the equivalent embedding. 
To do so, we introduce some notation. 

A plane graph $G$ induces an orientation of the vertices of each face. For
a face $f$ given by the vertices $v_1 , \dots , v_k$, we say $f$ is \emph{counter clockwise (ccw)} if
the vertices $v_1 , \dots , v_k$ appear in ccw direction on a walk on the boundary of $f$; otherwise $f$ is \emph{clockwise (cw)}. Moreover, the function $\area(f,D)$ measures the area of a face $f$ in a drawing $D$.
For a ccw triangle $t$ with vertices $v_1, v_2, v_3$, we denote the coordinates of $v_i$ by $(x_i, y_i)$. Its area in~$D$ is given by the determinant
\begin{linenomath*}
 \begin{equation}\label{eq:det}
 \Det(v_1,v_2,v_3):=\det\big(c(v_1),c(v_2),c(v_3)\big)=2\cdot \area (t,D),
 \end{equation}
\end{linenomath*}
where $c(v_i):=(x_i,y_i,1)$.
Since the (complement of the) outer face $f_o$ has area $\Sigma \A$ in an $\A$-realizing drawing, we define  $\A(f_o):=\Sigma \A$.
For a set of faces $\tilde F\subset F$, we define 
the \emph{area equation system of $\tilde F$} as 
\begin{linenomath*}
\begin{equation*}
 \AreaEqA {\tilde F}:=\{\Det(v_i,v_j,v_k)=\A(f)\mid f\in \tilde F, f=:{(v_i,v_j,v_k)} \text{ ccw}\}.
\end{equation*}
\end{linenomath*}
For convenience, we  omit the factor of $2$ in each \emph{area equation}. 
Therefore, without mentioning it any further, we usually certify the realizability of $\A$ by a $\nicefrac{1}{2}\A$-real\-izing drawing. That is, if we say a triangle has area $a$, it may have area $\nicefrac{1}{2}a$. Recall that, by \cref{lem:scaling},  consistent scaling has no further implications.

\begin{restatable}{proposition}{propThree}\label{thm:areaEqs}
 Let $T$ be a triangulation, $\A$ an area assignment, and $f$ a face of $T$.  Then $\A$ is realizable if and only if  $\AreaEqA{F\setminus \{f\}}$ has a real solution.
\end{restatable}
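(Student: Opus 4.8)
The plan is as follows. The direction ``$\Rightarrow$'' is immediate: if $\A$ is realizable, then the vertex placement of any $\A$-realizing drawing $D$ satisfies $\area(f',D)=\A(f')$ for every inner face $f'$ and $\area(f_o,D)=\Sigma\A=\A(f_o)$ for the outer face, hence solves $\AreaEqA{F\setminus\{f\}}$. So the content is the converse. Fix a real solution $D$ of $\AreaEqA{F\setminus\{f\}}$; the goal is to produce an $\A$-realizing crossing-free drawing. We may assume $\Sigma\A>0$, the all-zero assignment being realized trivially.

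The first step is to upgrade $D$ to a solution of the \emph{full} system $\AreaEqA{F}$, using the identity
\begin{equation*}
 \sum_{f'\in F'}\Det(f',D)=\Det(f_o,D),
\end{equation*}
which holds for every vertex placement $D$ of $T$ with the ccw orientations induced by the embedding of $T$. This is a polynomial identity in the $2n$ coordinates: it can be verified on the nonempty open set of planar straight-line drawings, where both sides are honest (doubled) areas, or directly from the shoelace formula, in which every interior edge cancels between its two incident faces and only the outer triangle survives. Consequently the single omitted equation is forced by the present ones together with the convention $\A(f_o):=\Sigma\A$ --- whether $f$ is an inner face, giving $\Det(f,D)=\Det(f_o,D)-\sum_{f'\in F'\setminus\{f\}}\Det(f',D)=\Sigma\A-(\Sigma\A-\A(f))=\A(f)$, or $f=f_o$, giving $\Det(f_o,D)=\sum_{f'\in F'}\A(f')=\Sigma\A=\A(f_o)$. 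Hence $D$ solves $\AreaEqA{F}$: in the drawing $D$ every face of $T$ has signed area exactly $\A(\cdot)\ge0$ under its embedding orientation, and the outer triangle is nondegenerate.

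It remains to see that such a $D$ certifies realizability of $\A$. If every inner face has $\A(f')>0$, then $D$ inverts no triangle and has no degenerate triangle, so by the classical fact that a simplicial map of a triangulated disk which sends the boundary onto a nondegenerate triangle and inverts no triangle is a homeomorphism onto that triangle, $D\in\D(T)$ is a planar straight-line drawing equivalent to $T$ realizing $\A$. In general some inner faces may have $\A(f')=0$ and $D$ may be degenerate; here I would invoke \cref{cor:fillinSTRONGER} with $c:=\Sigma\A$, which reduces the claim to exhibiting, in every neighborhood of $\A$ in $\As^{\le 2c}|_{g\to\A(g)}$ for some inner face $g$ with $\A(g)>0$, \emph{one} realizable area assignment. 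Such assignments are the face-area vectors of small perturbations $D'$ of $D$ with all faces positive: by the previous sub-case each such $D'$ is a planar drawing of $T$ and hence realizes its own area assignment, which lies close to $\A$ and can be kept exactly at $\A(g)$ on $g$ (this survives small perturbations as $\A(g)>0$). That $D$ admits such approximations follows from a degree argument: since $D$ inverts no triangle and maps the outer boundary onto the nondegenerate outer triangle with winding number $1$, the nondegenerate faces of $D$ tile that triangle and the degenerate ones are slivers on its $1$-skeleton, which can be inflated.

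I expect the last paragraph to contain the main obstacle. A vector of coordinates solving the area equations is a priori only a formal object and need not describe a crossing-free, let alone planar, drawing; the crux is that for a \emph{triangulation} the correct nonnegative face areas already push the placement into the closure of equivalent planar drawings. This is exactly where the triangulated structure is used, and where \cref{cor:fillinSTRONGER} (equivalently, a characterization of $\D(T)$ via nonnegatively oriented faces) does the work that in Dobbins et al.\ \cite{kleist2} required additional planarity constraints in the equation system. By contrast the shoelace step is routine, but it must be carried out for both cases of the omitted face $f$, which the identity handles because $\A(f_o)$ is defined to equal $\Sigma\A$.
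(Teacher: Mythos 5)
Your overall route is the paper's: the forward direction is immediate (modulo the factor-of-two convention, which the paper handles via \cref{lem:scaling} by certifying $\A$ with a $\nicefrac{1}{2}\A$-realizing drawing), your shoelace identity is exactly \cref{eq:outerface}, used for the same purpose of showing that omitting any one face equation is harmless, and your ``classical fact'' that a simplicial map of a triangulated disk which inverts no triangle and sends the boundary onto a nondegenerate triangle is a homeomorphism is precisely \cref{lem:ccPlanar}, which the paper invokes as a black box (proof in \cite{windrose}, or via the determinant identity). So in the case where all prescribed inner areas are positive, your argument and the paper's coincide.

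The one place you deviate is the degenerate case $\A(f')=0$ for some inner faces, and that is where the gap sits. The paper absorbs this case into its definition of crossing-free drawings (limits of planar drawings) together with \cref{lem:ccPlanar}; you instead try to reduce to the positive case through \cref{cor:fillinSTRONGER} by perturbing $D$ to nearby placements in which every inner face is strictly positively oriented. The existence of such perturbations --- ``the degenerate faces are slivers on the $1$-skeleton, which can be inflated'' --- is asserted, not proved, and it is exactly the nontrivial content: a zero-area face need not be a sliver along edges of nondegenerate tiles, since whole subcomplexes (several vertices mapped to one point or onto a common segment) may be collapsed, and the statement that every nonnegatively oriented placement lies in the closure of the positively oriented ones is essentially the degenerate case of \cref{lem:ccPlanar} itself. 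As written you have replaced the hard step by an equivalent unproven claim. To close it, either invoke \cref{lem:ccPlanar} in the non-inverted reading the paper uses (with the approximation content supplied by \cite{windrose}), or actually prove the inflation/approximation statement; the bookkeeping of keeping the area of the chosen face $g$ fixed (rescaling the perturbed drawing) is easy, but it does not substitute for that missing argument.
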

The key idea is that a (scaled) vertex placement of an $\A$-realizing drawing is a real solution of $\AreaEqA{F\setminus \{f\}}$ and vice versa.
The main task is to guarantee crossing-freeness of the induced drawing; it follows from the following neat fact.

\begin{restatable}{lemma}{ccPlanar}\label{lem:ccPlanar}
Let $D$ be a vertex placement of a  triangulation $T$ where the orientation of each inner face in $D$ coincides with the orientation in $T$. Then $D$ represents a crossing-free straight-line drawing of $T$.
\end{restatable}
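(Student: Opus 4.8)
The plan is to read the vertex placement $D$ as the piecewise-linear map $\phi\colon|T|\to\R^2$ of the triangulated closed disk underlying the plane triangulation $T$: it sends every vertex to its placement and is affine on each closed triangle. Fixing on every inner face the labelling that is counter-clockwise in $T$, the hypothesis says $\Det>0$ on all of them, so every inner triangle is non-degenerate and positively oriented under $D$; it therefore suffices to prove that $\phi$ is a homeomorphism onto a non-degenerate triangle, as this exhibits $D$ as a planar, hence crossing-free, straight-line drawing equivalent to $T$. First I would determine the boundary behaviour. Writing the area of each triangle as the shoelace sum over its three directed edges and summing over all triangles, every directed interior edge of $T$ occurs once in each direction and cancels, so $\sum_{f\in F'}\area(f,D)$ equals the signed area of the outer triangle read along the outer boundary of $T$. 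As the left-hand side is strictly positive, the three outer vertices are in general position and appear counter-clockwise along that boundary, so $\phi$ restricted to $\partial|T|$ is a homeomorphism onto the boundary $\partial\Delta$ of a non-degenerate triangle $\Delta$, winding once around each point of $\Delta^\circ$ and not at all around points outside $\overline\Delta$.

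Next I would run a winding-number (degree) argument. Let $N$ be the union of the drawn edges --- a set of measure zero --- and for $p\notin N$ let $\nu(p)$ be the number of triangles of $T$ whose $\phi$-image contains $p$; since all triangles are positively oriented, $\nu(p)$ is the local degree of $\phi$ at the regular value $p$, and I claim it equals the winding number of $\phi|_{\partial|T|}$ around $p$. Indeed $\nu$ is locally constant off $N$, and one tracks how it changes as $p$ crosses the image of a single edge away from vertex images: for an interior edge, its two incident triangles are positively oriented, hence their images lie on opposite sides of the edge's image, so $p$ enters one image exactly as it leaves the other and $\nu$ is unchanged, as is the winding number; for a boundary edge, its unique incident triangle is positively oriented, so $p$ enters its image precisely when $p$ crosses $\phi(\partial|T|)$ to the left of the boundary orientation, i.e.\ precisely when the winding number increases. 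Hence $\nu$ and the winding number differ by a constant, which vanishes on the unbounded component, so they coincide. With the previous paragraph this gives: every regular point of $\Delta^\circ$ is covered by the image of exactly one triangle, and no regular point outside $\overline\Delta$ is covered by any triangle image. Consequently distinct triangle images have disjoint interiors, their union is $\overline\Delta$, and $N$ is exactly the union of their boundaries.

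Finally I would conclude. On the interior of a triangle $\phi$ is affine and non-degenerate, hence injective; it cannot send an interior point of a triangle to a point of $N$, since a ball around such an image point would meet the interior of a second triangle image, contradicting disjointness; and that disjointness, together with injectivity of $\phi$ on each triangle's boundary, forces $\phi$ to be injective on $N$ as well. So $\phi$ is a continuous bijection of the compact disk $|T|$ onto $\overline\Delta$, hence a homeomorphism, and being affine and orientation-preserving on every face it realizes $T$ with the same rotation system and outer face; thus $D$ is a planar, hence crossing-free, straight-line drawing of $T$. I expect the only non-routine step to be the degree computation of the second paragraph --- in particular the sign bookkeeping at boundary edges and the handling of the finitely many non-generic points where several drawn edges meet --- everything else being the shoelace cancellation and point-set topology; the degree fact itself is the standard argument principle for piecewise-linear maps of surfaces with boundary, the same tool behind Tutte-type embedding theorems.
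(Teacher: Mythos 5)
Your proof is correct, but it is a genuinely different route from what the paper offers: the paper does not prove \cref{lem:ccPlanar} itself, it cites the end of the proof of Lemma~4.2 in the windrose-planarity reference and only hints that an ``alternative proof'' can be built on the determinant identity \cref{eq:outerface}, which it then proves (via the shoelace cancellation over directed edges) inside the proof of \cref{thm:areaEqs}. You take that identity as your first step -- exactly the paper's hinted ingredient, used to pin down the boundary behaviour of the piecewise-affine map $\phi$ -- and then supply the part the paper leaves to the citation: the winding-number/degree argument showing that consistent positive orientation of all faces plus winding number one on the interior of the outer triangle forces $\phi$ to be a bijection, hence a homeomorphism, onto that triangle. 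What your version buys is a self-contained, checkable proof whose only non-elementary input is the PL argument principle; what it costs is the sign bookkeeping at edge crossings and the point-set verifications (injectivity on the edge skeleton, non-generic crossing points) that you correctly flag as routine -- your sign analysis at interior and boundary edges is right, since a positively oriented image triangle lies to the left of each of its directed edges, so the two triangles flanking an interior edge land on opposite sides of its image and their contributions to the covering count cancel, while a boundary edge changes the count in lockstep with the winding number. One small caveat: you read the hypothesis as $\Det>0$ on every inner face and conclude planarity; the lemma only claims ``crossing-free,'' and the paper applies it inside \cref{thm:areaEqs} to solutions of area equations where some $\A(f)$ may be $0$, i.e.\ where some faces are degenerate and have no orientation at all. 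Your reading (a face whose orientation ``coincides'' must have one, hence nonzero area) is defensible and makes your conclusion strictly stronger on its domain, but if one wants the lemma in the degenerate case as the application seems to require, a further perturbation or limiting step would be needed that neither you nor the paper spells out.
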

A proof of \cref{lem:ccPlanar} can be found in \cite[in the end of the proof of Lemma 4.2]{windrose}. An alternative proof relies on
the properties of the determinant, in particular, on
the fact that
for any vertex placement $D$  the area of the triangle formed by its outer vertices evaluates to 
 \begin{equation}\label{eq:outerface}
\area(f_o,D)
 = \sum_{f\in F'}\area(f,D).
 \end{equation}
\cref{eq:outerface} shows that for every face $f\in F'$, the equation systems $\AreaEqA {F'}$ and $\AreaEqA {F\setminus \{f\}}$  are equivalent. This fact is also used for \cref{thm:areaEqs}.
\begin{remark}
 In fact, \cref{lem:ccPlanar} and \cref{thm:areaEqs} generalize to \emph{inner triangulations}, i.e., $2$-connected plane graphs where every inner face is a triangle.
\end{remark}

\newpage
\section{Area-Universality of Triangulations with \pOrder{}s}\label{sec:poly}

We consider planar triangulations with the following property: An order of the vertices $(v_1,v_2,\dots,v_n)$, together with a  set of \emph{predecessors} $\pred(v_i)\subset N(v_i)$ for each vertex $v_i$, is a \emph{\pOrder} if the following conditions are satisfied:\smallskip
\begin{compactitem}[\ --]
 \item $\pred(v_i)\subseteq \{v_1,v_2,\dots, v_{i-1}\}$, i.e., the predecessors of $v_i$ have an index  $<i$,
 \item $\pred(v_1)=\emptyset$, $\pred(v_2)=\{v_1\}$, $\pred(v_3)=\pred(v_4)=\{v_1,v_2\}$, 
 and
 \item for all $i>4$: $|\pred(v_i)|=3$, i.e., $v_i$ has exactly three predecessors.
\end{compactitem}
\smallskip
Note that $\pred(v_i)$ specifies a subset of preceding neighbors.
Moreover, a \pOrder is defined for a planar graph independent of a drawing. 
We  usually denote a \pOrder by $\Po$ and state the order of the vertices; the predecessors are then implicitly given by $\pred(v_i)$. \cref{fig:3orient} illustrates a \pOrder.

\begin{figure}[hbt]
 \centering
 \includegraphics[page=2]{Figures}
 \caption{A plane 4-connected triangulation with a \pOrder \Po. 
 In an \arVP constructed with \Po, all  face areas are realized except for the two faces incident to the unoriented (dashed) edge $e_\Po$ of  $\Or_\Po$ (\cref{lem:almostReal}).}
 \label{fig:3orient}
\end{figure}

We pursue the following one-degree-of-freedom mechanism to construct realizing drawings for a plane triangulation $T$ with a \pOrder $(v_1,v_2,\dots,v_n)$ and an area assignment~\A:
\begin{compactitem}
 \item Place the vertices $v_1$, $v_2$, $v_3$ at positions realizing the area equation of the face $v_1v_2v_3$. Without loss of generality, we set $v_1=(0,0)$ and $v_2=(1,0)$.
 \item Insert $v_4$ such that the area equation of face $v_1v_2v_4$ is realized;
this is fulfilled if $y_4$ equals $\A(v_1v_2v_4)$ while $x_4\in \mathbb{R}$ is arbitrary. The value $x_4$ is our variable.
 \item Place each remaining vertex $v_i$ with respect to its predecessors $\pred(v_i)$ such that the area equations of the two incident face areas are respected; the coordinates of $v_i$ are \emph{rational} functions of $x_4$.
 \item Finally, all area equations are realized except for two special faces $f_a$ and~$f_b$.
Moreover, the face area of $f_a$ is a \emph{rational} function $\f$ of $x_4$.
 \item If $\f$ is \emph{almost surjective}, then there is a vertex placement~$D$ respecting all face areas and  orientations, i.e., $D$ is a real solution of $\AreaEqA{F}$.  
 \item By \cref{thm:areaEqs}, $D$ guarantees the realizability of $\A$.
 \item If this holds for \emph{enough} area assignments, then $T$ is area-universal.
\end{compactitem}

\subsection{Properties of \pOrder{}s}
A \pOrder \Po of a plane triangulation $T$ induces an orientation $\mathcal O_\Po$ of the edges: For $w\in\pred(v_i)$, we orient the edge from $v_i$ to $w$, see also \cref{fig:3orient}. 
By \cref{lem:4connnected}, we may restrict our attention to 4-connected triangulations. We note that  4-connectedness is not essential for our method but yields a cleaner picture. 
\begin{restatable}{lemma}{obsEdge}\label{obs:edge}
 Let $T$ be a planar 4-connected triangulation with a \pOrder \Po. Then  $\Or_\Po$ is acyclic, $\Or_\Po$ has a unique unoriented edge $e_\Po$, and $e_\Po$  
 is incident to $v_n$. 
\end{restatable}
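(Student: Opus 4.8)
The plan is to treat the three assertions separately, with an edge count as the central computation.

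\textbf{Acyclicity.} By definition every arc of $\Or_\Po$ points from some $v_i$ to a vertex of $\pred(v_i)\subseteq\{v_1,\dots,v_{i-1}\}$, hence from a larger index to a smaller one. Along any directed walk the index therefore strictly decreases, so no directed walk can close up; thus $\Or_\Po$ is acyclic. This step is immediate.

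\textbf{Exactly one unoriented edge.} First I would check that the arcs of $\Or_\Po$ are pairwise distinct edges of $T$: they are edges of $T$ because $\pred(v_i)\subseteq N(v_i)$, and an arc leaving $v_i$ has its other endpoint of index $<i$, so the arcs leaving $v_i$ and leaving $v_j$ for $i\neq j$ are different edges (their larger endpoint index differs), while for fixed $i$ the arcs to the distinct vertices of $\pred(v_i)$ are also distinct. Hence the number of oriented edges equals $\sum_{i=1}^{n}|\pred(v_i)| = 0+1+2+2+3(n-4) = 3n-7$. Since a triangulation on $n$ vertices has $3n-6$ edges, exactly one edge of $T$ receives no orientation; call it $e_\Po$.

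\textbf{Incidence to $v_n$.} This is where $4$-connectedness is used. A $4$-connected planar triangulation has at least six vertices (the octahedron being the smallest), so $n\ge 6$; in particular $|\pred(v_n)|=3$, and $4$-connectedness forces $\deg_T(v_n)\ge 4$. Every edge incident to $v_n$ joins $v_n$ to a vertex of smaller index, so such an edge is oriented precisely when it points out of $v_n$, i.e.\ when its other endpoint lies in $\pred(v_n)$. As $\deg_T(v_n)\ge 4 > 3 = |\pred(v_n)|$, at least one edge at $v_n$ is unoriented, and by the uniqueness just shown it must be $e_\Po$.

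No step presents a genuine obstacle; the only delicate points are the bookkeeping in the edge count — making sure each edge of $T$ is counted at most once as an arc and that every predecessor pair is really an edge of $T$ — and ruling out degenerate small cases, i.e.\ confirming that $n$ is large enough for both ``$v_n$ has three predecessors'' and ``$\deg_T(v_n)\ge 4$'' to hold, which follows from $n\ge 6$.
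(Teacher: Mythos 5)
Your proof is correct and follows essentially the same route as the paper: acyclicity from the strictly decreasing indices, the count $3n-7 = 0+1+2+2+3(n-4)$ of oriented edges against $3n-6$ total to get a unique unoriented edge, and the observation that $v_n$ has only three outgoing arcs but degree at least $4$ by $4$-connectedness, so the leftover edge sits at $v_n$. The extra care you take about distinctness of arcs and about $n\ge 6$ is harmless bookkeeping that the paper leaves implicit.
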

%
%
It follows that the \pOrder encodes all but one edge which is easy to recover. Therefore, the 
  \pOrder of a planar triangulation $T$  encodes $T$. In fact, $T$ has a \pOrder if and only if there exists an edge $e$ such that $T-e$ is 3-degenerate.
%

\paragraph{Convention.}
Recall that a drawing induces an orientation of each face.
We follow the convention of stating the vertices of inner faces  ccw and of the outer face in cw direction.
This convention enables us to switch between different plane graphs of the same planar graph without changing the order of the vertices. 
To account for our convention,  we redefine $\A(f_o):=-\Sigma \A$ for the outer face~$f_o$. 
Then, for different embeddings, only the right sides of the \textsc{aeq}s change.
\smallskip

The next properties can be proved by induction and are shown in \cref{fig:predecessors}. 
\begin{restatable}{lemma}{construction}\label{lem:PolyMethodInsertion}
Let $T$ be a plane 4-connected triangulation with a  \pOrder \Po specified by $(v_1,v_2,\dots,v_n)$ and let $T_i$ denote the subgraph of $T$ induced by $\{v_1,v_2,\dots,v_i\}$. For $i\geq 4$,
\begin{compactitem}
 \item $T_i$ has one 4-face and otherwise only triangles,
 \item $T_{i+1}$ can be constructed from $T_{i}$ by inserting $v_{i+1}$ in the 4-face of $T_{i}$, and
 \item the three predecessors of $v_i$ can be named $(p_\textsc{f},p_\textsc{m},p_\textsc{l})$ such that $p_\textsc{f}p_\textsc{m}v_i$ and $p_\textsc{m}p_\textsc{l}v_i$ are (ccw inner and cw outer) faces of $T_i$.
\end{compactitem}
\end{restatable}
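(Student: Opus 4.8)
The plan is to prove the three claims simultaneously by induction on $i$, starting from the base case $i=4$. For $i=4$, the graph $T_4$ is induced by $v_1,v_2,v_3,v_4$; since $\pred(v_3)=\pred(v_4)=\{v_1,v_2\}$, the edges present are $v_1v_2$, $v_1v_3$, $v_2v_3$, $v_1v_4$, $v_2v_4$, so $T_4$ is the graph $K_4$ minus the edge $v_3v_4$. In any plane embedding consistent with the given one, $T_4$ consists of the triangles $v_1v_2v_3$ and $v_1v_2v_4$ glued along $v_1v_2$, together with the outer $4$-face $v_3v_1v_4v_2$ (reading its boundary appropriately). This establishes the first claim for $i=4$; the third claim is vacuous for $i=4$ but will be needed for $i\geq 5$.

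For the inductive step, assume $T_i$ (with $4\le i<n$) has a unique $4$-face $Q_i$ and all other faces triangles. I first argue that $v_{i+1}$ must be inserted inside $Q_i$: by \cref{obs:edge}, the orientation $\Or_\Po$ is acyclic with a unique unoriented edge $e_\Po$ incident to $v_n$, and the three predecessors of $v_{i+1}$ are chosen among $v_1,\dots,v_i$; a dimension/Euler-formula count on $T_i$ versus $T_{i+1}$ (namely $T_{i+1}$ has exactly three more edges and one more vertex than $T_i$) shows the three edges from $v_{i+1}$ to its predecessors, together with the new vertex, replace exactly the $4$-face by four triangles, which forces the three predecessors to lie on the boundary of $Q_i$. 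Writing $Q_i=q_1q_2q_3q_4$ in cyclic boundary order, $v_{i+1}$ is adjacent to three of these four vertices; say it misses $q_j$. Then the three predecessors can be named $(p_\textsc{f},p_\textsc{m},p_\textsc{l})$ as the other three vertices listed in the cyclic order of $Q_i$ with $p_\textsc{m}$ the vertex opposite the missed one, so that the two new faces $p_\textsc{f}p_\textsc{m}v_{i+1}$ and $p_\textsc{m}p_\textsc{l}v_{i+1}$ are triangles and the remaining new face $p_\textsc{f}v_{i+1}p_\textsc{l}q_j$ is the unique $4$-face of $T_{i+1}$; orientations are inherited from the embedding, giving the ccw-inner / cw-outer statement once $T_{i+1}=T$ (and, in general, orienting faces by the fixed plane embedding of $T$). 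This simultaneously re-establishes all three bullet points for $i+1$.

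The main obstacle is the second bullet: showing that the new vertex genuinely lands in the $4$-face rather than subdividing some triangular face, and that exactly one $4$-face survives. The cleanest route is the edge-count argument above combined with the acyclicity of $\Or_\Po$ from \cref{obs:edge}, which pins down that all three predecessor-edges are ``new'' (none already present in $T_i$) and that their endpoints are pairwise the right distance apart on a single face; a vertex added into a triangle would create two $4$-faces or fail the predecessor count, and adding into the $4$-face while being adjacent to all four boundary vertices would make $T_{i+1}$ have too many edges. I would also invoke \cref{lem:ccPlanar}-style reasoning only implicitly, since here everything is combinatorial: the statement is purely about the abstract plane graphs $T_i$, so the argument stays at the level of counting faces and edges and tracking the cyclic boundary order of the single quadrilateral face as $i$ grows. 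A figure (as referenced, \cref{fig:predecessors}) handles the case analysis of which boundary vertex of $Q_i$ is missed.
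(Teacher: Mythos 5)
Your induction follows the same skeleton as the paper's proof, but the step you yourself single out as the main obstacle---showing that $v_{i+1}$ must land in the unique 4-face of $T_i$---is not actually established by your argument. The Euler/edge count cannot do this job: inserting a new vertex into a \emph{triangular} face of $T_i$ and joining it to the three corners also adds exactly one vertex, three edges and two faces, leaves the old 4-face intact, satisfies the predecessor count ($|\pred(v_{i+1})|=3$, all on one face boundary), and creates no new 4-face at all. So your claim that such an insertion ``would create two 4-faces or fail the predecessor count'' is false in precisely the case that has to be excluded, and the counting argument is blind to the difference in principle, not merely in presentation. The missing ingredient is 4-connectedness of $T$: since $T_i$ inherits a subdrawing of the fixed plane drawing of $T$, the vertex $v_{i+1}$ lies in the interior of some face of $T_i$ and all of its neighbours in $T_i$ lie on that face's boundary; if that face were bounded by a triangle $t$, then $t$ would have $v_{i+1}$ on one side and some vertex of $T_i$ (there are at least four) on the other, so $t$ would be a separating triangle of $T$---impossible in a 4-connected triangulation on more than four vertices. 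This is exactly the one-line argument the paper uses, and without it the induction does not go through.

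Two smaller points. In the base case you read off the edge set of $T_4$ from the predecessor lists, but $T_4$ is the \emph{induced} subgraph of $T$ and $\pred(v_i)$ is only a subset of the back-neighbours, so the edge $v_3v_4$ still has to be excluded; the paper does this again via 4-connectedness (alternatively it follows from \cref{obs:edge}, since every edge other than $e_\Po$ is a predecessor edge and $e_\Po$ is incident to $v_n$). Finally, your remark that adjacency to all four boundary vertices of the 4-face ``would make $T_{i+1}$ have too many edges'' is only valid for $i+1<n$: at the last step $v_n$ \emph{is} adjacent to all four corners (the fourth adjacency being $e_\Po$), which is exactly why $T_n=T$ has no 4-face; so the first bullet really concerns $i<n$ and your induction step should treat $i+1=n$ separately.
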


\begin{figure}[bh]
\begin{minipage}[b]{.68\textwidth}
\centering
\includegraphics[page=3]{Figures}
 \caption{
 Illustration of \cref{lem:PolyMethodInsertion}: (a) $T_4$, (b) $v_i$ is inserted in an inner 4-face, (c) $v_i$ is inserted in  outer 4-face.
}
\label{fig:predecessors}
\end{minipage}
\hfill
\begin{minipage}[b]{.225\textwidth}
\centering
 \includegraphics[page=4]{Figures}
 \caption{Illustration of \cref{obs:uniquePos}.}
 \label{fig:uniquePos}
\end{minipage}
\end{figure}

\begin{remark}
 For every (non-equivalent) plane graph $T'$ of $T$, the three predecessors $(p_\textsc{f},p_\textsc{m},p_\textsc{l})$ of $v_i$ in $T'$ and $T$ coincide.
\end{remark}

\begin{remark}\label{rmk:numberTriang} \cref{lem:PolyMethodInsertion} can be used to show that  
the number of 4-connected planar triangulations on $n$ vertices with a \pOrder is $ \Omega( \nicefrac{2^{n}}{n})$.
\end{remark}

\newpage
\subsection{Constructing Almost Realizing Vertex Placements}
Let $T$ be a plane triangulation with an area assignment $\mathcal A$. We call a vertex placement $D$ of $T$ \emph{almost $\A$-realizing}  if there exist two faces $f_a$ and $f_b$ such that~$D$ is a real solution of the equation system  $\AreaEqA{\tilde F}$ with $\tilde F:=F\setminus\{f_a,f_b\}$. In particular, we insist that the orientation and area of each face, except for $f_a$ and $f_b$ be correct, i.e., the area equations are fulfilled. Note that an \arVP does not necessarily correspond to a crossing-free  drawing.  
\begin{obs}\label{cor:THEone}
An almost $\A$-realizing vertex placement $D$ fulfilling the area equations of all faces except for $f_a$ and $f_b$, certifies the realizability of $\A$ if additionally the area equation of $f_a$ is satisfied.
\end{obs}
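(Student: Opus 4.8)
The plan is to read the hypotheses as a statement about equation systems and then invoke \cref{thm:areaEqs}. By definition, an almost $\A$-realizing vertex placement $D$ comes with two faces $f_a,f_b$ such that $D$ is a real solution of $\AreaEqA{F\setminus\{f_a,f_b\}}$; that is, $D$ fulfills the area equation of every face other than $f_a$ and $f_b$. The extra hypothesis is precisely that $D$ in addition fulfills the area equation of $f_a$. Combining the two, $D$ fulfills the area equation of every face in $(F\setminus\{f_a,f_b\})\cup\{f_a\}=F\setminus\{f_b\}$, so $D$ is a real solution of $\AreaEqA{F\setminus\{f_b\}}$.

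First I would apply \cref{thm:areaEqs} with the distinguished face taken to be $f_b$: since $\AreaEqA{F\setminus\{f_b\}}$ has a real solution, namely $D$, the area assignment $\A$ is realizable, which is the claim. Optionally, one can also observe via \cref{eq:outerface} and the convention $\A(f_o)=\Sigma\A$ that the single omitted area equation of $f_b$ is then satisfied automatically, so that (up to the usual consistent scaling by $\nicefrac{1}{2}$) $D$ itself already represents an $\A$-realizing drawing; this refinement is not needed for the statement.

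I do not expect a genuine obstacle here: all the substantive work — deriving crossing-freeness and the correct embedding from the fulfilled area equations — is already packaged inside \cref{thm:areaEqs}, which in turn rests on \cref{lem:ccPlanar}. The only points requiring a little care are the bookkeeping of which faces lie in which equation system, and the corner case where $f_a$ or $f_b$ happens to be the outer face $f_o$, in which case its area equation is read with $\A(f_o)$ as fixed by the stated sign convention; neither affects the argument.
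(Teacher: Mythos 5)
Your argument is correct and is exactly the justification the paper intends (the statement is left as an immediate observation there): adding the area equation of $f_a$ to those of $F\setminus\{f_a,f_b\}$ makes $D$ a real solution of $\AreaEqA{F\setminus\{f_b\}}$, and \cref{thm:areaEqs} applied with distinguished face $f_b$ yields realizability of $\A$. Your remarks on the outer-face sign convention and on \cref{eq:outerface} are consistent with the paper and require no changes.
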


We construct \arVP{}s with the following lemma. 
 \begin{lemma}\label{obs:uniquePos}
Let $a,b\geq 0$ and let $q_\textsc{f},q_\textsc{m},q_\textsc{l}$ be three vertices with a non-collinear placement in the plane. Then there exists a unique placement for vertex $v$ such the ccw triangles  $q_\textsc{f}q_\textsc{m}v$ and $q_\textsc{m}q_\textsc{l}v$ fulfill the area equations for $a$ and $b$, respectively. 
\end{lemma}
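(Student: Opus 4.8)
The plan is to set up coordinates and reduce the existence-and-uniqueness claim to the invertibility of a linear system in the two unknown coordinates of $v$. Write $v=(x,y)$ and let $q_\textsc{f}=(x_\textsc{f},y_\textsc{f})$, $q_\textsc{m}=(x_\textsc{m},y_\textsc{m})$, $q_\textsc{l}=(x_\textsc{l},y_\textsc{l})$. By \cref{eq:det}, the two area equations $\Det(q_\textsc{f},q_\textsc{m},v)=a$ and $\Det(q_\textsc{m},q_\textsc{l},v)=b$ (with the factor-$2$ convention of the paper) are, after expanding the $3\times 3$ determinants along the last column, two \emph{linear} equations in $x$ and $y$:
\begin{equation*}
 (y_\textsc{f}-y_\textsc{m})\,x + (x_\textsc{m}-x_\textsc{f})\,y = a - (x_\textsc{f}y_\textsc{m}-x_\textsc{m}y_\textsc{f}),\qquad
 (y_\textsc{m}-y_\textsc{l})\,x + (x_\textsc{l}-x_\textsc{m})\,y = b - (x_\textsc{m}y_\textsc{l}-x_\textsc{l}y_\textsc{m}).
\end{equation*}

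First I would argue that this $2\times 2$ linear system has a unique solution, i.e.\ that its coefficient matrix is nonsingular. Its determinant is
\begin{equation*}
 (y_\textsc{f}-y_\textsc{m})(x_\textsc{l}-x_\textsc{m}) - (x_\textsc{m}-x_\textsc{f})(y_\textsc{m}-y_\textsc{l}),
\end{equation*}
which up to sign is exactly $\Det(q_\textsc{f},q_\textsc{m},q_\textsc{l})$, twice the signed area of the triangle $q_\textsc{f}q_\textsc{m}q_\textsc{l}$. Since $q_\textsc{f},q_\textsc{m},q_\textsc{l}$ are non-collinear by hypothesis, this determinant is nonzero, so Cramer's rule yields a unique $(x,y)$ solving both equations. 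This already gives the unique \emph{placement}; the solution being unique means there is exactly one $v$ for which both signed areas equal the prescribed values $a$ and $b$.

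It then remains to check the orientation clause: that the triangles $q_\textsc{f}q_\textsc{m}v$ and $q_\textsc{m}q_\textsc{l}v$ are indeed ccw, which in the determinant convention means $\Det(q_\textsc{f},q_\textsc{m},v)$ and $\Det(q_\textsc{m},q_\textsc{l},v)$ are nonnegative. But these determinants equal $a$ and $b$ by construction, and $a,b\ge 0$ by hypothesis, so the orientation condition is automatic (degenerating to a flat triangle exactly when the corresponding value is $0$, consistent with the paper's use of $\area\ge 0$ and crossing-free rather than strictly planar drawings). I would phrase the statement as: the unique solution of the linear system is the asserted placement, and it satisfies the ccw area equations for $a$ and $b$ by direct substitution.

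The only genuinely delicate point is the bookkeeping of signs: one must make sure that the expansion of $\Det$ is done consistently with the paper's convention that ccw faces have \emph{positive} determinant and that the factor of $2$ has been dropped, so that "fulfilling the area equation for $a$" really is the equation $\Det(q_\textsc{f},q_\textsc{m},v)=a$ and not its negative. Once the sign conventions are pinned down, everything else is the elementary linear algebra above; there is no real obstacle, and the non-collinearity hypothesis is used exactly once, to ensure the $2\times 2$ system is invertible. I would keep the write-up short, essentially displaying the linear system, identifying its determinant with $\pm\Det(q_\textsc{f},q_\textsc{m},q_\textsc{l})$, and invoking non-collinearity.
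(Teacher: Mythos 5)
Your proof is correct and is essentially the paper's argument in algebraic clothing: the paper places $v$ on the two lines $\ell_a\parallel q_\textsc{f}q_\textsc{m}$ and $\ell_b\parallel q_\textsc{m}q_\textsc{l}$, which are exactly the solution sets of your two linear equations, and its observation that these lines are non-parallel (by non-collinearity of $q_\textsc{f},q_\textsc{m},q_\textsc{l}$) is precisely your statement that the $2\times 2$ coefficient matrix, whose determinant is $\Det(q_\textsc{f},q_\textsc{m},q_\textsc{l})$, is nonsingular. Both arguments conclude with the unique intersection point given by the equations $\Det(q_\textsc{f},q_\textsc{m},v)=a$ and $\Det(q_\textsc{m},q_\textsc{l},v)=b$, so there is no substantive difference.
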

\begin{proof}
Consider \cref{fig:uniquePos}.
To realize the areas, $v$ must be placed on a specific line $\ell_a$ and $\ell_b$, respectively.
Note that $\ell_a$ is parallel to the segment $q_\textsc{f},q_\textsc{m}$ and $\ell_b$ is parallel to the segment $q_\textsc{m},q_\textsc{l}$. Consequently, $\ell_a$ and $\ell_b$ are not parallel and their intersection point yields the unique position for vertex $v$. 
The coordinates of $v$ are specified by the two equations $\Det(q_\textsc{f},q_\textsc{m},v)\stackrel{!}{=}a$ and $\Det(q_\textsc{m},q_\textsc{l},v)\stackrel{!}{=}b$.
\end{proof}

Note that if $\ell_a$ and $\ell_b$ are parallel and do not coincide, then there is no position for $v$ realizing the area equations of the two triangles.
Based on \cref{obs:uniquePos}, we obtain our key lemma.
\begin{restatable}{lemma}{almostReal}\label{lem:almostReal}
Let $T$ be a plane 4-connected triangulation with a \pOrder \Po specified by $(v_1,v_2,\dots,v_n)$. Let $f_a,f_b$ be the faces incident to $e_\Po$ and $f_0:=v_1v_2v_3$. 
Then there exists a constant $c>0$ such that for a dense subset $\As_D$ of $\As^c$, every~$\A \in \As_D$ has
a finite set $\B(\A)\subset\R$, rational functions 
$x_i(\cdot,\A)$, $y_i(\cdot,\A)$, $\mathfrak f(\cdot,\A)$ and a triangle $\triangle$,
such that for all $x_4\in\R\setminus\B(\A)$, 
there exists a vertex placement $D(x_4)$ with the following properties:\smallskip
\begin{compactenum}[(i)]
\item $f_0$ coincides with the triangle $\triangle$, 
\item  $D(x_4)$ is almost realizing, i.e., a real solution of $\AreaEqA{F\setminus\{f_a,f_b\}}$,
\item every vertex $v_i$ is placed at the point $\big(x_i(x_4,\A),y_i(x_4,\A)\big)$, and
\item the area of face $f_a$ in $D(x_4)$ is given by $\f(x_4,\A)$.
 \end{compactenum}
\end{restatable}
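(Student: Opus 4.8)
The plan is to turn the one-degree-of-freedom construction sketched above into a rigorous procedure with $x_4$ as its sole free parameter. Fix any $c>0$; by \cref{lem:scaling} its particular value is irrelevant. Given $\A\in\As^c$, take $\triangle=\triangle(\A)$ to be the triangle with $v_1:=(0,0)$, $v_2:=(1,0)$, $v_3:=(0,\A(f_0))$, so that $\Det(v_1,v_2,v_3)=\A(f_0)$ and~(i) holds, and place $v_4$ at $(x_4,y_4)$, where $y_4$ is the value forced by the area equation of the triangular face $v_1v_2v_4$; thus $x_4\mapsto y_4$ is constant and $x_4\mapsto x_4$ is the identity. Now process $v_5,\dots,v_n$ in order: by \cref{lem:PolyMethodInsertion}, $v_i$ is inserted into the unique $4$-face of $T_{i-1}$, and its three predecessors can be named $(p_\textsc{f},p_\textsc{m},p_\textsc{l})$ so that $p_\textsc{f}p_\textsc{m}v_i$ and $p_\textsc{m}p_\textsc{l}v_i$ are ccw faces; provided the already-placed points $p_\textsc{f},p_\textsc{m},p_\textsc{l}$ are non-collinear, \cref{obs:uniquePos} assigns $v_i$ the unique position realizing the two area equations with right-hand sides $\A(p_\textsc{f}p_\textsc{m}v_i)$ and $\A(p_\textsc{m}p_\textsc{l}v_i)$, and solving these two linear equations by Cramer's rule expresses the coordinates of $v_i$ as rational functions of the coordinates of its predecessors with denominator $\Det(p_\textsc{f},p_\textsc{m},p_\textsc{l})$. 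By induction each $v_i$ thus receives coordinates that are rational functions $x_i(\cdot,\A),y_i(\cdot,\A)$ of $x_4$, giving~(iii); setting $\f(\cdot,\A)$ to be $\Det$ of the vertices of $f_a$ listed ccw as in $T$, again a rational function of $x_4$, gives~(iv).

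For~(ii) one counts the area equations the procedure enforces: one for $f_0$, one for the triangular face $v_1v_2v_4$, and two for each $i\in\{5,\dots,n\}$, for a total of $2n-6$; by \cref{lem:PolyMethodInsertion} and \cref{obs:edge} these are precisely the faces of $T$ other than the two faces $f_a,f_b$ incident to $e_\Po$ (the diagonal of the leftover $4$-face of $T_n$), whence $D(x_4)$ is a real solution of $\AreaEqA{F\setminus\{f_a,f_b\}}$, the outer face being accounted for through the orientation convention of \cref{sec:poly} together with \cref{eq:outerface}. The run is well defined at a value $x_4$ exactly when none of the denominators $\Det(p_\textsc{f},p_\textsc{m},p_\textsc{l})$ occurring in it vanishes there; as each such denominator is a rational function of $x_4$, it either vanishes identically or has only finitely many zeros. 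Accordingly let $\As_D\subseteq\As^c$ be the set of $\A$ for which no such denominator is identically zero in $x_4$; for $\A\in\As_D$ the finitely many exceptional values form $\B(\A)$, and for every $x_4\in\R\setminus\B(\A)$ the procedure outputs a vertex placement $D(x_4)$ satisfying (i)--(iv).

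It remains to see that $\As_D$ is dense in $\As^c$. After clearing $x_4$-denominators, each denominator $\Det(p_\textsc{f},p_\textsc{m},p_\textsc{l})$ becomes a polynomial in $x_4$ whose coefficients are polynomials in the entries of $\A$, so $\As^c\setminus\As_D$ lies in the common zero set of finitely many such coefficient polynomials restricted to $\As^c$; since a nonzero polynomial cannot vanish on a full-dimensional subset of the hyperplane $\{\Sigma\A=c\}$, this set is nowhere dense provided $\As_D\neq\emptyset$. Nonemptiness follows from F\'ary's theorem~\cite{fary}: $T$ has a planar straight-line drawing $D$, and after an affine normalization (\cref{lem:outerFace}) placing $v_1=(0,0)$, $v_2=(1,0)$ and, if needed, a reflection matching the embedding convention, every face of $D$ is ccw with positive area, so $D$ coincides with the unique output of the procedure for the area assignment read off $D$ (rescaled to total area $c$) and for $x_4$ equal to the first coordinate of $v_4$ in $D$; in particular that run keeps every predecessor triple non-collinear, so its area assignment lies in $\As_D$.

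The step I expect to be the main obstacle is precisely this last one: certifying that the recursive placement via \cref{obs:uniquePos} does not force collinear predecessors for at least one area assignment, so that $\As_D$ is nonempty and the rational functions $x_i,y_i,\f$ are genuinely defined rather than vacuous. This is where planarity of $T$ enters, through F\'ary's theorem, to seed the construction with one non-degenerate configuration; the remaining ingredients — the inductive rationality via Cramer's rule, the face count identifying $F\setminus\{f_a,f_b\}$, the outer-face convention, and the ``nonzero polynomial has nowhere-dense zero set'' density argument — are routine.
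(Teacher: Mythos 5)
Your construction and bookkeeping match the paper's up to the decisive point, which you yourself single out as the crux: the recursion via Cramer's rule giving rational coordinate functions, the count of $2n-6$ area equations identifying $f_a,f_b$ as the faces at $e_\Po$, and the reduction of the finiteness of $\B(\A)$ to the non-vanishing (as polynomials in $x_4$) of the Cramer denominators are all fine. The gap is in your certificate that $\As_D\neq\emptyset$ (equivalently, that the denominators are not identically zero). From F\'ary's theorem you get a planar straight-line drawing $D$ in which every \emph{face} is a non-degenerate triangle, and you then assert that ``that run keeps every predecessor triple non-collinear.'' This does not follow: the triple $(p_\textsc{f},p_\textsc{m},p_\textsc{l})$ is not a face, and a non-degenerate planar drawing may place it on a line --- for instance $p_\textsc{f}=(0,0)$, $p_\textsc{m}=(1,0)$, $p_\textsc{l}=(2,0)$, $v_i=(1,1)$ gives two non-degenerate, interiorly disjoint faces with collinear predecessors. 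At such a drawing the denominator $\Det(p_\textsc{f},p_\textsc{m},p_\textsc{l})$ vanishes, so the drawing is not a valid output of your procedure and does not witness that the denominators are nonzero polynomials; hence nonemptiness (and with it density) of your $\As_D$ is not established. The step is repairable, e.g.\ by perturbing the F\'ary drawing into general position (planar straight-line drawings form an open subset of $\R^{2n}$ and each collinearity is a proper algebraic condition), but some such argument must be supplied --- as written, the key claim is exactly the one left unproved.

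For comparison, the paper closes this hole by a different mechanism: it takes $\As_D:=\As_I$, the algebraically independent area assignments (dense by a countability argument, \cref{lem:dense}), and shows inductively that the polynomials $\N^\circ_i$ and $\D_i$ are not the 0-polynomial because their coefficients are nonzero polynomial expressions in the algebraically independent areas; no geometric seed drawing is needed. This choice also does extra work downstream (\cref{obs:sameDegreeM}, \cref{prop:sameDegM} and the applications rely on $\As_D$ consisting of algebraically independent assignments), whereas your $\As_D$, even once repaired, would prove the lemma as stated but not directly support those later steps.
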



The idea of the proof is to use \cref{obs:uniquePos} in order to construct $D(x_4)$ inductively. Therefore, given a vertex placement $v_1, \dots, v_{i-1}$, we have to ensure that the vertices of   $\pred(v_i)$ are not collinear.
To do so, we consider algebraically independent area assignments.  
We say an area assignment $\A$ of $T$ is \emph{algebraically independent} if 
the set  $\{\A(f)| f\in F'\}$ is algebraically independent over $\Q$. In fact, the subset of algebraically independent area assignments $\As_I$ of $\As^c$ is dense when $c$ is transcendental.

We call the function $\f$, constructed in the proof of \cref{lem:almostReal},  the \emph{\lastfaceFun} of $T$ and interpret it as a function in $x_4$ whose coefficients depend on~$\A$.

\subsection{Almost Surjectivity and Area-Universality}
In the following, we show that almost surjectivity of the \lastfaceFun implies area-universality. 
Let $A$ and $B$ be sets. A function $f\colon A\to B$ is \emph{almost surjective} if $f$ attains all but finitely many values  of $B$, i.e., $B\setminus f(A)$ is finite.

\begin{theorem}\label{thm:area-univ}
Let $T$ be a 4-connected plane triangulation with a \pOrder \Po and let $\As_D, \As^c,\f$ be obtained by \cref{lem:almostReal}. 
If the \lastfaceFun $\f$  is almost surjective for all area assignments in $\As_D$, then $T$ is area-universal.
\end{theorem}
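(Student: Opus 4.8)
The plan is to combine \cref{lem:almostReal}, \cref{cor:THEone}, and \cref{thm:areaEqs} with the density statements \cref{lem:fixedTotal} and \cref{cor:fillinSTRONGER}. First I would fix the constant $c>0$ and the dense subset $\As_D\subseteq\As^c$ provided by \cref{lem:almostReal}. For a given $\A\in\As_D$, I would use almost surjectivity of the \lastfaceFun $\f(\cdot,\A)$ to find some $x_4^\star\in\R\setminus\B(\A)$ with $\f(x_4^\star,\A)=\A(f_a)$, where $f_a$ is the face whose area is tracked by $\f$. This is possible because $\f(\cdot,\A)$ misses only finitely many values of $\R$ and $\B(\A)$ is finite, so the preimage of $\A(f_a)$ under $\f(\cdot,\A)$ is nonempty even after deleting $\B(\A)$ — one only needs that $\A(f_a)$ is not among the finitely many omitted values, which I would ensure by a density/perturbation argument below. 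The vertex placement $D(x_4^\star)$ from \cref{lem:almostReal} then satisfies the area equations of all faces except possibly $f_a$ and $f_b$, \emph{and} of $f_a$ by the choice of $x_4^\star$; by \cref{eq:outerface} (equivalently \cref{cor:THEone}) the equation for $f_b$ is then automatically satisfied, so $D(x_4^\star)$ is a real solution of $\AreaEqA{F\setminus\{f\}}$ for any single face $f$. By \cref{thm:areaEqs}, $\A$ is realizable.

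The remaining issue is to handle area assignments not in $\As_D$ and, more subtly, the finitely many ``bad'' target values possibly omitted by $\f(\cdot,\A)$. Here is where \cref{cor:fillinSTRONGER} enters. Given an arbitrary area assignment $\A_0$, by \cref{lem:fixedTotal} (after scaling via \cref{lem:scaling}) it suffices to realize all assignments in $\As^c$. Fix a face $f$ with $\A_0(f)>0$. Since $\As_D$ is dense in $\As^c$, and since the realizability just established holds for every $\A\in\As_D$ for which $\A(f_a)$ is not an omitted value of $\f(\cdot,\A)$ — a condition that fails only on a lower-dimensional (measure-zero, nowhere dense) subset, because for fixed $\A$ the map $x_4\mapsto\f(x_4,\A)$ is a nonconstant rational function whose omitted set depends ``generically'' continuously on $\A$ — the realizable assignments form a dense subset of $\As^c$, hence a dense subset of $\As^{\le 2c}|_{f\to\A_0(f)}$ near $\A_0$. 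Applying \cref{cor:fillinSTRONGER} with this $f$ shows $\A_0$ itself is realizable. Therefore $T$ is area-universal by \cref{lem:fixedTotal}.

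I expect the main obstacle to be the careful verification that the realizable assignments are \emph{dense} near an arbitrary $\A_0\in\As^c$, i.e., controlling the interaction between the finitely many omitted values of the rational function $\f(\cdot,\A)$ and the finite exceptional set $\B(\A)$ as $\A$ varies over the dense set $\As_D$. The clean way to do this is to note that for each fixed $\A\in\As_D$ the function $\f(\cdot,\A)$ is \emph{almost surjective}, so it attains all reals outside a finite set $E(\A)$; one then observes that the set of $\A\in\As_D$ with $\A(f_a)\in E(\A)$ is contained in a countable union of zero sets of nonzero polynomials in the (algebraically independent) face areas, hence is nowhere dense. For every other $\A\in\As_D$ the argument of the first paragraph goes through verbatim, yielding a dense set of realizable assignments, and then \cref{cor:fillinSTRONGER} upgrades this to full realizability. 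The rest is bookkeeping: matching $f_a$ to the face tracked by $\f$, and invoking \cref{cor:THEone}/\cref{eq:outerface} to discharge $f_b$ for free.
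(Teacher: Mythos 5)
Your first paragraph (for those $\A\in\As_D$ whose value $\A(f_a)$ is actually attained by $\f(\cdot,\A)$ at some point outside $\B(\A)$) is sound and uses the same machinery as the paper. But the way you close the argument diverges from the paper and has two genuine gaps. First, the inference ``the realizable assignments form a dense subset of $\As^c$, hence a dense subset of $\As^{\leq 2c}|_{f\to\A_0(f)}$ near $\A_0$'' is a non sequitur: the set in which \cref{cor:fillinSTRONGER} requires realizable assignments pins the area of $f$ to the exact value $\A_0(f)$, and a dense subset of $\As^c$ need not contain any assignment attaining that exact value. Indeed, in the proof of \cref{lem:almostReal} the set $\As_D$ consists of algebraically independent assignments, so if $\A_0(f)$ is, say, rational, then none of the realizable assignments you certified lies in that slice at all. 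This step could be repaired via \cref{lem:scaling} (rescale a nearby realizable assignment so that $f$ receives exactly $\A_0(f)$), but as written the deduction fails. Second, and more fundamentally, your plan hinges on hitting $\A(f_a)$ exactly, and your argument that the ``bad'' set of $\A\in\As_D$ with $\A(f_a)$ among the omitted values of $\f(\cdot,\A)$ is negligible is not substantiated: you would need to show this is a nontrivial algebraic condition on $\A$ (and also deal with values whose only preimages lie in $\B(\A)$), while the hypothesis only says that $\f(\cdot,\A)$ is almost surjective and gives no control on how the omitted set varies with $\A$.

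The paper's proof avoids both problems at once by never trying to realize $\A(f_a)$ exactly. Since $\f(\cdot,\A)$ misses only finitely many values and $\B(\A)$ is finite, for every $\varepsilon>0$ there is $\tilde x\notin\B(\A)$ with $\A(f_a)\leq\f(\tilde x)\leq\A(f_a)+\varepsilon$. The placement $D(\tilde x)$ then exactly realizes a perturbed assignment $\A'$ differing from $\A$ only on $f_a$ and $f_b$, which is realizable by \cref{thm:areaEqs}. Because the construction keeps $f_0=v_1v_2v_3$ on a fixed triangle, all these $\A'$ lie in $\As^{\leq 2c}|_{f_0\to\A(f_0)}$, so \cref{cor:fillinSTRONGER} applied at $\A$ with the face $f_0$ yields realizability of every $\A\in\As_D$, and density of $\As_D$ finishes the proof. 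In short, the closedness statement is designed precisely to absorb the finitely many exceptional values of $\f$; your proposal invokes it only at an arbitrary $\A_0$ with an arbitrary face and therefore has to fight the omitted-value problem head-on, which is exactly where it breaks down.
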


\begin{proof}
By \cref{lem:fixedTotal}, it suffices to show that every  $\A\in\As_D$ is realizable. 
Let $f_0$ be the triangle formed by $v_1,v_2,v_3$ and $\As^+:=\As^{\leq 2c}|_{f_0\to \A(f_0)}$.
By \cref{cor:fillinSTRONGER}, $\A$ is realizable if  every open neighborhood of $\A$ in $\As^+$  contains a realizable area assignment.
Let $f_a$ and $f_b$ denote the faces incident to $e_\Po$ and $a:=\A(f_a)$.
\cref{lem:almostReal}  guarantees the existence of a finite set $\mathcal B$ such that 
for all $x_4\in \R\setminus\B$, there exists an almost $\A$-realizing vertex placement $D(x_4)$.
Since $\mathcal B$ is finite and $\f$ is almost surjective, for every $\varepsilon$ with $0<\varepsilon<c$, there exists  $\tilde x \in \R\setminus \mathcal B$ such that $a\leq\f(\tilde x)\leq a+\varepsilon$, i.e., the area of face $f_a$ in $D(\tilde x)$  is between $a$ and $a+\varepsilon$.
(If $f_a$ and $f_b$ are both inner faces, then the face $f_b$ has an area between  
$b-\varepsilon$ and $b$, where $b:=\A(f_b)$. Otherwise, if $f_a$ or $f_b$ is the outer face, then the total area changes and  face $f_b$  has area between $b$ and $b+\varepsilon$.)
Consequently, for some $\A'$ in the $\varepsilon$-neighborhood of $\A$ in $\As^+$, $D(\tilde x)$ is a real solution of $\AreaEq{\A'}{F\setminus\{f_b\}}$ and \cref{thm:areaEqs} ensures that $\A'$ is realizable.
By \cref{cor:fillinSTRONGER}, $\A$ is realizable. Thus, $T$ is area-universal.
\end{proof}

To prove area-universality, we use the following sufficient condition for almost surjectivity.  We say two real polynomials $p$ and $q$ are \emph{\crr} if they do not have common real roots. For a rational function $f:=\frac{p}{q}$, we define the \emph{max-degree} of $f$ as $\max\{|p|, |q|\}$, where $|p|$ denotes the degree of $p$. Moreover, we say $f$ is \crr if $p$ and $q$ are. 
The following property follows from the fact that polynomials of odd degree are surjective. 
%

\begin{restatable}{lemma}{surjective}\label{lem:surjective}
 Let $p,q\colon\mathbb R\to\mathbb R$ be polynomials and let $Q$ be the set of the real roots of~$q$.
 If the polynomials  $p$ and $q$ are \crr and have odd max-degree, then the function $f\colon \mathbb R\backslash{Q}\to \mathbb R, \ f(x)=\frac{p(x)}{q(x)}$ is almost surjective.
\end{restatable}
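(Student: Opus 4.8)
The plan is to show that $f = p/q$, being a continuous function on each connected component of $\mathbb R \setminus Q$, has a large image by analyzing its behavior near the real roots of $q$ and near $\pm\infty$. First I would dispose of the trivial case: if $q$ has no real roots, then $Q = \emptyset$ and $f$ is continuous on all of $\mathbb R$; since $p$ and $q$ have odd max-degree and are crr-free (so in particular $q$ is nowhere zero, forcing $|q|$ even, hence $|p|$ odd and $|p| > |q|$), the function $f(x) = p(x)/q(x)$ tends to $+\infty$ on one end of $\mathbb R$ and $-\infty$ on the other (the leading term behaves like $x^{|p|-|q|}$ times a constant, and $|p|-|q|$ is odd). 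By the intermediate value theorem $f$ is surjective, which is more than enough.

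Next I would handle the main case, where $q$ has at least one real root. Order the distinct real roots as $r_1 < r_2 < \dots < r_k$, so $\mathbb R \setminus Q$ is the disjoint union of the open intervals $I_0 = (-\infty, r_1)$, $I_j = (r_j, r_{j+1})$ for $1 \le j < k$, and $I_k = (r_k, +\infty)$. On each such interval $f$ is continuous. The key observation is that near any root $r_j$, since $p$ and $q$ have no common real root, $p(r_j) \ne 0$, so $|f(x)| \to +\infty$ as $x \to r_j$; more precisely $f(x) \to +\infty$ or $-\infty$ depending on the sign of $p(r_j)$ and the parity of the order of vanishing of $q$ at $r_j$ together with the side of approach. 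The upshot is that on each bounded interval $I_j$ ($1 \le j < k$) the function $f$ blows up to $\pm\infty$ at both endpoints; if the two blow-ups have opposite signs then $f$ takes all real values on $I_j$ and we are done. So I would argue that at least one interval forces surjectivity: the two unbounded intervals $I_0$ and $I_k$ are where the parity hypothesis bites.

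The heart of the argument is this parity count. On $I_0 = (-\infty, r_1)$, as $x \to -\infty$, $f(x)$ behaves like $(\text{lead } p)/(\text{lead } q)\cdot x^{|p|-|q|}$, and as $x \to r_1^-$, $f(x) \to \pm\infty$ with a sign $s_1$ determined by $\operatorname{sgn} p(r_1)$ and the parity of the multiplicity of $r_1$ as a root of $q$. Similarly on $I_k = (r_k, +\infty)$. One then walks across all the intervals tracking sign changes of $f$ at the finite roots; each root $r_j$ with odd multiplicity in $q$ flips the sign of the blow-up between $I_{j-1}$ and $I_j$, while each root of even multiplicity does not. Counting these flips modulo $2$ and comparing with the parity of $|p| - |q|$ (which is odd, by the max-degree and crr-free hypotheses), a global parity obstruction shows that it is impossible for every bounded interval to have matching blow-up signs and for both unbounded ends to "escape" consistently — some interval must have opposite-signed limits, hence full image $\mathbb R$ by the intermediate value theorem. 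In any case $f(\mathbb R \setminus Q) \supseteq \mathbb R$, so $f$ is surjective, in particular almost surjective.

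The main obstacle I anticipate is bookkeeping the signs cleanly: one must carefully relate the sign of $\lim_{x\to r_j^-} f$ and $\lim_{x\to r_j^+} f$ to the parity of the multiplicity $m_j$ of $r_j$ in $q$ and to $\operatorname{sgn} p(r_j)$, and then assemble these local facts into a single parity equation. A slick way to organize this is to consider the rational function $g = q/p$ instead (legitimate since $p$ is nonzero near each $r_j$ and, by crr-freeness and odd max-degree, $|p| \ge |q|$), or to count sign changes of $f$ across $\mathbb R$ via the total number of real roots of $p$ and $q$ counted with appropriate parity; but the cleanest is probably just the direct interval-walk. Once the sign bookkeeping is set up correctly, each step is an application of the intermediate value theorem, and the conclusion — surjectivity, hence almost surjectivity — is immediate.
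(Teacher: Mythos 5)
Your conclusion is stronger than the lemma and is in fact false: under the stated hypotheses $f$ need not be surjective, only almost surjective. Take $p(x)=x^3$ and $q(x)=x^3+1$; these are \crr and the max-degree is $3$, yet $f(x)=x^3/(x^3+1)=1-1/(x^3+1)$ never attains the value $1$. Similarly $p=1$, $q=x^3-x$ gives an $f$ that never attains $0$. So any argument ending in ``$f(\R\setminus Q)\supseteq\R$'' must break somewhere, and the break is visible in your bookkeeping: the claim that $|p|-|q|$ is odd (and, in your aside, that $|p|\ge|q|$) does not follow from the hypotheses --- odd max-degree allows $|p|=|q|$ with both odd, or $|p|=5$, $|q|=3$, and the assumption that $p$ and $q$ are \crr carries no degree information at all. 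Precisely when $|p|=|q|$ the limits of $f$ at $\pm\infty$ are the finite number $\mathrm{lead}(p)/\mathrm{lead}(q)$, so the two unbounded intervals do not blow up at their unbounded ends and the interval-walk cannot force a sign change there; that limit value is exactly the one that may be missed (as in the first example). Likewise, when $|q|>|p|$ the limits at $\pm\infty$ are $0$, and $0$ may be missed (second example).

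A sign/IVT analysis can be repaired to give almost surjectivity, but it then needs a case distinction on $|p|$ versus $|q|$ and an argument that the finitely many interval images (each an interval, all but the extreme ones unbounded) cover a co-finite set. The paper's proof avoids all of this: for $c\neq 0$ consider $g:=p-cq$. Except for the at most one value of $c$ at which the leading terms cancel, $\deg g=\max\{|p|,|q|\}$ is odd, so $g$ has a real root $\tilde x$; since $p$ and $q$ are \crr, $q(\tilde x)=0$ is impossible (it would force $p(\tilde x)=0$ as well), hence $f(\tilde x)=c$. Thus $f$ misses at most two values ($0$ and the cancellation value), which is exactly the almost surjectivity claimed --- and, as the examples show, the strongest conclusion available.
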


For the final result,  we make use of several convenient properties of algebraically independent area assignments.  For $\A$, let $\f_\A$ denote the last face function and $d_1(\f_\A)$ and $d_2(\f_\A)$ the degree of the numerator and denominator polynomial of $\f_\A$ in $x_4$, respectively. Since $\f_\A$ is a function in $x_4$ whose coefficients depend on $\A$, algebraic independence directly yields the following property.

\begin{restatable}{clm}{sameDegree}\label{obs:sameDegreeM}
For two algebraically independent area assignments $\A,\A'\in \As_I$ of a 4-connected triangulation with a $\pOrder$ $\Po$, the degrees of the last face functions $\f_\A$ and $\f_{\A'}$ with respect to $\Po$ coincide, i.e.,  $d_i(\f_\A)=d_i(\f_{\A'})$ for $i\in[2]$. 
\end{restatable}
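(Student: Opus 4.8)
The plan is to trace through the inductive construction of the last face function $\f$ from \cref{lem:almostReal} and observe that, at every step, the coordinates of the vertices are rational functions of $x_4$ whose coefficients are \emph{polynomials over $\Q$ in the face-area values} $\{\A(f)\mid f\in F'\}$ — that is, each coefficient is $P(\A(f_1),\dots,\A(f_m))$ for some fixed polynomial $P\in\Q[z_1,\dots,z_m]$ that does not depend on the particular area assignment $\A$, only on the combinatorics of $T$ and the \pOrder $\Po$. The same is then true for the numerator and denominator of $\f_\A$ after clearing denominators: there are fixed polynomials $N(x_4;z_1,\dots,z_m)$ and $M(x_4;z_1,\dots,z_m)$ with rational coefficients, depending only on $(T,\Po)$, such that the numerator of $\f_\A$ is $N(x_4;\A(f_1),\dots,\A(f_m))$ and its denominator is $M(x_4;\A(f_1),\dots,\A(f_m))$.

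First I would make this ``uniform coefficient'' statement precise by induction on the vertex index $i$, using \cref{obs:uniquePos}: solving the two linear area equations $\Det(q_\textsc{f},q_\textsc{m},v)=a$ and $\Det(q_\textsc{m},q_\textsc{l},v)=b$ for the coordinates of $v$ is a $2\times 2$ linear solve whose right-hand sides are the area values $a,b$ (or $\Sigma\A$) and whose matrix entries are $\Q$-linear combinations of the already-constructed coordinates; by Cramer's rule the new coordinates are ratios of polynomials in the old coordinates and in the area values, all with $\Q$-coefficients, so the inductive hypothesis is preserved (clearing denominators keeps everything polynomial over $\Q$). The base case $v_1,v_2,v_3,v_4$ is immediate from the explicit placement in the construction. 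Then $\f_\A(x_4)=\area(f_a,D(x_4))$ is, by \cref{eq:det}, a $\Q$-polynomial in the coordinates of the three vertices of $f_a$, hence of the stated uniform form.

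Next, with $N$ and $M$ in hand, I would argue that the degree in $x_4$ of $N(x_4;\A(f_1),\dots,\A(f_m))$ equals the degree in $x_4$ of $N$ as a polynomial in $x_4$ over the field $\Q(z_1,\dots,z_m)$ exactly when no algebraic cancellation of the leading coefficient occurs — and for an algebraically independent assignment this is automatic. Concretely, write $N=\sum_{k} c_k(z_1,\dots,z_m)\,x_4^{k}$ with $c_k\in\Q[z_1,\dots,z_m]$, and let $k^*$ be the largest $k$ with $c_k\not\equiv 0$ as a polynomial. If $\A\in\As_I$ is algebraically independent over $\Q$, then $c_{k^*}(\A(f_1),\dots,\A(f_m))\neq 0$ because a nonzero $\Q$-polynomial cannot vanish at an algebraically independent tuple; hence $d_1(\f_\A)=k^*$, a value determined by $(T,\Po)$ alone. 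The identical argument applied to $M$ gives $d_2(\f_\A)=$ the $x_4$-degree of $M$ over $\Q(z_1,\dots,z_m)$. Since these two numbers do not depend on $\A$ as long as $\A$ is algebraically independent, we conclude $d_i(\f_\A)=d_i(\f_{\A'})$ for $i\in\{1,2\}$ whenever $\A,\A'\in\As_I$, which is the claim.

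The main obstacle — and the only genuinely delicate point — is bookkeeping in the induction: one must be careful that the ``rational function of $x_4$'' produced at step $i$ really does have coefficients that are polynomial (not merely rational) in the $\A(f_j)$, which is why one clears denominators at each step and tracks a common denominator; and one must note that the finitely many bad values in $\B(\A)$ (where some predecessor triple degenerates, or where $M$ vanishes) do not affect the degree count, since degree is read off the leading coefficient, not the value at any particular point. One should also observe that the polynomials $N,M$ may \emph{a priori} share common factors over $\Q(z_1,\dots,z_m)$; this does not harm the claim as stated (which is about the degrees of the numerator and denominator produced by the construction, before any reduction), but it is worth a remark that the \crr reduced form used later in \cref{lem:surjective} is obtained by a further step. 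Everything else is a routine unwinding of \cref{obs:uniquePos} and \cref{lem:almostReal}.
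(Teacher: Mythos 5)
Your argument is correct and is essentially the paper's own reasoning: the paper gives no separate proof of this claim, asserting it as an immediate consequence of the fact that $\f_\A$ is a rational function of $x_4$ whose coefficients are fixed polynomial expressions (determined by $T$ and $\Po$ alone) in the values $\A(f)$, so that at an algebraically independent tuple no such coefficient --- in particular no leading coefficient --- can vanish unless it is identically zero, and hence the degrees are the generic ones, independent of the particular $\A\in\As_I$. One small correction to your closing remark: the cancellation is not ``a further step'' performed only later --- the polynomials $\N^\circ_i,\D_i$ of \cref{lem:almostReal} already incorporate a per-step cancellation via $E_i,F_i$ in \cref{eq:EF,eq:NomDen,eq:NomDen2}, and the degrees $d_i(\f_\A)$ in the claim refer to this representation; so for full rigor one should also note that the amount cancelled at each step is itself uniform over $\As_I$, which follows from the very same principle (the $\Q$-polynomial conditions in the $\A(f)$ that govern which factor is cancelled hold at an algebraically independent tuple exactly when they hold identically).
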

In fact, the degrees do not only coincide for all algebraically independent area assignments, but also for different embeddings of the plane graph.
For a plane triangulation $T$, let $T^*$ denote the corresponding planar graph and~$[T]$ the set (of equivalence classes) of all plane graphs of $T^*$.

\begin{restatable}{clm}{sameDeg}\label{prop:sameDegM}
Let $T$ be a plane $4$-connected triangulation with a  \pOrder $\Po$. Then for every plane graph $T'\in[T]$, and algebraically independent area assignments $\A$ of $T$ and $\A'$ of $T'$, the last face functions $\f_\A$ and $\f'_{\A'}$  with respect to $\Po$ have the same degrees, i.e., $d_i(\f_\A)=d_i(\f'_{\A'})$ for $i\in[2]$.
\end{restatable}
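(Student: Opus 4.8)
\textbf{Proof plan for \cref{prop:sameDegM}.}
The plan is to build on \cref{obs:sameDegreeM}, which already settles the statement \emph{within} a fixed plane graph $T'$. So the crux is to compare the last face function $\f_\A$ for $T$ with $\f'_{\A'}$ for another embedding $T'\in[T]$. The key observation I would use is the convention introduced after \cref{obs:edge} and \cref{lem:PolyMethodInsertion}: a \pOrder $\Po$ is defined purely combinatorially on the planar graph $T^*$, and for \emph{every} plane graph $T'\in[T]$ the predecessor triple $(p_\textsc{f},p_\textsc{m},p_\textsc{l})$ of each $v_i$ is the same (this is stated as a remark right after \cref{lem:PolyMethodInsertion}). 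Moreover, by the sign/orientation convention (stating inner faces ccw, the outer face cw, and redefining $\A(f_o):=-\Sigma\A$), the only thing that differs between $T$ and $T'$ when we run the inductive construction of \cref{lem:almostReal} is the \emph{right-hand sides} of the area equations --- i.e., which value $\A(f)$ is plugged in at each step --- not the \emph{functional form} of how $x_i,y_i$ depend on $x_4$ and on those right-hand sides.

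Concretely, I would proceed as follows. First, recall from the proof of \cref{lem:almostReal} that for each $i$ the coordinates $x_i(x_4,\A),y_i(x_4,\A)$ are obtained from $x_{p_\textsc{f}},x_{p_\textsc{m}},x_{p_\textsc{l}}$ (and their $y$-counterparts) together with the two prescribed areas $a_i:=\A(p_\textsc{f}p_\textsc{m}v_i)$ and $b_i:=\A(p_\textsc{m}p_\textsc{l}v_i)$ via the explicit rational formulas coming from solving the two linear equations $\Det(q_\textsc{f},q_\textsc{m},v)=a$, $\Det(q_\textsc{m},q_\textsc{l},v)=b$ in \cref{obs:uniquePos}. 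These formulas are \emph{universal}: one and the same rational expression $R_i$ in the parent coordinates and in the symbols $a_i,b_i$ works for $T$ and for $T'$. Hence there is a single ``master'' family of rational functions $X_i(x_4; \alpha_1,\dots,\alpha_m)$, $Y_i(x_4;\alpha_1,\dots,\alpha_m)$ in $x_4$ and in formal area-variables $\alpha_1,\dots,\alpha_m$ (one $\alpha_j$ per inner face, plus the outer-face symbol $-\Sigma\alpha$), such that $x_i(x_4,\A)=X_i(x_4;\A)$ for $T$ and $x_i(x_4,\A')=X_i(x_4;\A')$ for $T'$ --- the \emph{only} change is that $\A$ assigns the $\alpha_j$'s to the faces of $T$ while $\A'$ assigns them to the faces of $T'$. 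The last face function is likewise a master rational function $\mathfrak F(x_4;\alpha_1,\dots,\alpha_m)=P(x_4;\bar\alpha)/Q(x_4;\bar\alpha)$, with $P,Q$ polynomials in $x_4$ whose coefficients are polynomials in the $\alpha_j$'s over $\Q$. Second, I would invoke algebraic independence: if $\A$ is algebraically independent, then specializing $\bar\alpha$ to the values $\A(f)$ cannot cause a coefficient polynomial of $P$ (resp.\ $Q$) to vanish unless that coefficient polynomial is identically zero over $\Q$; the same holds for $\A'$. Therefore $d_1(\f_\A)=\deg_{x_4}P(x_4;\bar\alpha)$ computed over the function field $\Q(\bar\alpha)$, which is independent of whether we then specialize along $T$ or along $T'$; likewise for $d_2$. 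This gives $d_i(\f_\A)=d_i(\f'_{\A'})$ for $i\in[2]$, as claimed. (One should be slightly careful that $P$ and $Q$ might pick up common polynomial factors in $\Q(\bar\alpha)[x_4]$; since we define the last face function as a reduced rational function, I would either argue the master $P/Q$ is already reduced over $\Q(\bar\alpha)[x_4]$, or note that a common factor would specialize to a common real-root factor for \emph{every} algebraically independent assignment, contradicting the way $\f_\A$ is normalized in \cref{lem:almostReal}.)

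The main obstacle I anticipate is making the phrase ``the same rational formula works for both embeddings'' fully rigorous, i.e., formally setting up the master rational functions $X_i,Y_i,\mathfrak F$ and checking by induction (using \cref{lem:PolyMethodInsertion}, \cref{obs:uniquePos}, and the remark that $(p_\textsc{f},p_\textsc{m},p_\textsc{l})$ is embedding-independent) that $\f_\A$ is the specialization of $\mathfrak F$ along $T$ and $\f'_{\A'}$ is the specialization of $\mathfrak F$ along $T'$. The orientation bookkeeping --- verifying that the ccw-inner / cw-outer convention really does make only the right-hand sides $\A(f)$ vary, including the redefinition $\A(f_o)=-\Sigma\A$ --- is the delicate point, but it is exactly the point the ``Convention'' paragraph before \cref{lem:PolyMethodInsertion} was set up to handle. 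Once the master-function picture is in place, the degree-invariance is a one-line consequence of algebraic independence, exactly as in \cref{obs:sameDegreeM}.
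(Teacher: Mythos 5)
Your proposal is correct and takes essentially the same route as the paper: there, too, the key point is that the construction of \cref{lem:almostReal} treats $T$ and every $T'\in[T]$ by the very same procedure (same predecessors, same rational formulas), so that the last face function of $T'$ is obtained from $\f_\A$ by merely exchanging the occurrences of $-\Sigma\A$ with the value of the new outer face, after which algebraic independence (via \cref{obs:sameDegreeM}) pins down the degrees — your ``master rational function over $\Q(\bar\alpha)$'' is just a repackaging of this swap-plus-genericity argument. The one step the paper proves explicitly and you defer to the Convention paragraph is the orientation bookkeeping: it invokes Whitney's uniqueness theorem (passing through the drawing on the sphere) to verify that re-embedding flips the orientation of exactly the old and the new outer face, so that indeed only the right-hand sides of the area equations change.
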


This implies our final result: 
\begin{restatable}{corollary}{everyEmbedding}\label{thm:everyEmbedding}
 Let $T$ be a plane triangulation with a \pOrder $\Po$. If the \lastfaceFun~$\f$ of $T$ is \crr and has odd max-degree for one algebraically independent area assignment, then every plane graph in $[T]$ is area-universal. 
\end{restatable}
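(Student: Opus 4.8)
The plan is to assemble \cref{thm:everyEmbedding} from the three results that immediately precede it, so the proof is essentially a short chain of implications. Suppose $\f_{\A}$ is \crr{} and has odd max-degree for some algebraically independent area assignment $\A$ of $T$. Pick an arbitrary plane graph $T'\in[T]$ and an algebraically independent area assignment $\A'$ of $T'$ (such assignments exist and are dense in $\As^c$ for transcendental $c$, as noted after \cref{lem:almostReal}). First I would invoke \cref{prop:sameDegM} to conclude $d_i(\f'_{\A'})=d_i(\f_{\A})$ for $i\in[2]$; in particular the numerator and denominator of $\f'_{\A'}$ have the same degrees as those of $\f_{\A}$, so $\f'_{\A'}$ also has odd max-degree.

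Next I would argue that $\f'_{\A'}$ is also \crr. Here I would rely on the fact that for an algebraically independent area assignment the coefficients of numerator and denominator of the last face function are specific rational (indeed polynomial) expressions in the algebraically independent values $\{\A'(f)\mid f\in F'\}$ over $\Q$ — this is exactly the structural observation underpinning \cref{obs:sameDegreeM} and \cref{prop:sameDegM}. A common real root of numerator and denominator would impose a nontrivial polynomial relation over $\Q$ among these values (the resultant of the two polynomials would vanish), contradicting algebraic independence — unless the resultant is the zero polynomial identically, which is ruled out because it is nonzero for the witness $\A$ (since $\f_\A$ is \crr) and the resultant, being a fixed polynomial in the area values with $\Q$-coefficients determined only by the combinatorics of $T$ and $\Po$, cannot be simultaneously the zero polynomial and nonzero at $\A$. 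Hence $\f'_{\A'}$ is \crr. (If the paper intends \crr-ness to be part of the "same degree" package, one can simply cite \cref{prop:sameDegM} for this too; either way the conclusion is that $\f'_{\A'}$ is \crr{} with odd max-degree.)

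Then \cref{lem:surjective} applies to $\f'_{\A'}=p/q$: since $p,q$ are \crr{} and have odd max-degree, $\f'_{\A'}$ is almost surjective. Because $\A'$ was an arbitrary algebraically independent area assignment of $T'$ and these form a dense subset $\As_D$ of $\As^c$ (with $c$ obtained from \cref{lem:almostReal}), the last face function of $T'$ is almost surjective for all $\A'\in\As_D$. Now \cref{thm:area-univ} gives that $T'$ is area-universal. Since $T'\in[T]$ was arbitrary, every plane graph in $[T]$ is area-universal.

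The main obstacle is the middle step — propagating the \crr{} property from one algebraically independent assignment, and from one embedding, to all of them. Everything else is bookkeeping: \cref{prop:sameDegM} already transports the degrees, \cref{lem:surjective} turns "odd max-degree plus \crr" into almost surjectivity, and \cref{thm:area-univ} closes the loop. The delicate point is the algebraic-independence argument: one must make precise that the resultant of numerator and denominator of the last face function is a fixed $\Q$-polynomial in the symbolic area values (depending only on $T^*$ and $\Po$, not on the embedding, by the remark following \cref{lem:PolyMethodInsertion} that predecessors are embedding-independent), and that a nonzero such polynomial cannot vanish at an algebraically independent point. I expect the cleanest writeup to fold this into, or cite directly from, the proofs of \cref{obs:sameDegreeM} and \cref{prop:sameDegM}, where the dependence of $\f$'s coefficients on $\A$ is already analyzed.
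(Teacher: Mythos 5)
Your skeleton coincides with the paper's: transfer the degrees with \cref{prop:sameDegM}, transfer the \crr{} property, apply \cref{lem:surjective} to get almost surjectivity for all assignments in the dense set of algebraically independent assignments, and close with \cref{thm:area-univ}. The one place you diverge is the justification of the \crr{} transfer, and there your argument has a concrete flaw: the resultant of numerator and denominator (as polynomials in $x_4$) vanishes exactly when they share a \emph{complex} root (equivalently, a nonconstant common factor), whereas being \crr{} only asserts the absence of a common \emph{real} root. Hence from ``$\f_\A$ is \crr'' you cannot conclude that the resultant is nonzero at the witness $\A$ --- the two polynomials could share, say, an irreducible quadratic factor with no real zeros --- so you have not excluded the case that the resultant is identically zero as a $\Q$-polynomial in the area values. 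In that case the contradiction with algebraic independence at $\A'$ never arises, and the step ``hence $\f'_{\A'}$ is \crr'' is unproven as written.

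The paper transfers the \crr{} property by a different, embedding-specific mechanism: in the proof of \cref{prop:sameDegM} it observes that, since the predecessors are embedding-independent and the outer-face convention only flips a sign, the last face function $\f'_{\A''}$ of $T'$ for the intermediate assignment $\A''$ is literally obtained from $\f_\A$ by swapping the two algebraically independent numbers $-\Sigma\A$ and $c$; the corollary's proof then carries both the odd max-degree and the \crr{} property through this substitution and uses \cref{obs:sameDegreeM} to pass to an arbitrary algebraically independent $\A'$. Your instinct that this propagation is the delicate point is right, but to rescue a resultant-style argument you would need the stronger hypothesis that numerator and denominator share no \emph{complex} root (or an extra argument ruling out a common nonreal factor); otherwise the clean fix is to invoke the swapping argument from \cref{prop:sameDegM}, as the paper does.
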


\section{Applications}\label{sec:applications}
We now use \cref{thm:area-univ} and \cref{thm:everyEmbedding} to prove area-universality of some classes of triangulations.
%
The considered graphs rely on an operation that we call \emph{\diam}. Consider the left image of \cref{fig:diam}. Let $G$ be a plane graph and let $e$ be an inner edge incident to two triangular faces that consist  of $e$ and the vertices $u_1$ and $u_2$, respectively. 
Applying a \emph{\diam of order $k$} on $e$ results in the graph $G'$ which is obtained from $G$ by subdividing edge $e$ with $k$ vertices, $v_1,\dots,v_k$, and inserting the edges $v_iu_j$  for all pairs $i\in[k]$ and $j\in[2]$.  \cref{fig:diam} illustrates a \diam on $e$ of order 3.

\begin{figure}[htb]
 \centering
 \includegraphics[page=5]{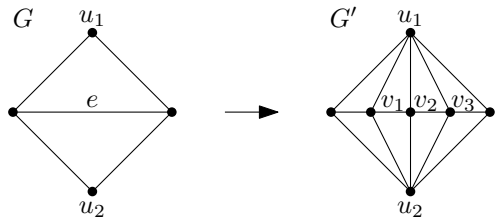}
 \caption{Obtaining $G'$ from $G$ by a \diam of order 3 on edge $e$.}
 \label{fig:diam}
\end{figure}

\subsection{\GraphClass graphs}
An \graphClass graph can be obtained from the plane octahedron graph $\mathcal G$ by a \diam{}: Choose one edge of the central triangle of $\mathcal G$ as the \emph{special edge}. 
The \emph{\graphClass} graph $\acc{\ell}$ is the plane graph obtained by a \diam of order $\ell$ on the special edge of $\mathcal G$.
Consequently, $\acc{\ell}$ has $\ell+6$ vertices.
We speak of an \emph{even \graphClass} if $\ell$ is even and of an \emph{odd \graphClass} if $\ell$ is odd.   \cref{fig:funnyClass2} illustrates the accordion graphs~$\acc{i}$ for $i\leq 3$. 
Note that  $\acc{0}$ is $\mathcal G$ itself and $\acc{1}$ is the unique 4-connected plane triangulation on seven vertices.
Due to its symmetry, it holds that $[\acc{\ell}]=\{\acc{\ell}\}$.
\begin{restatable}{theorem}{accordions}\label{thm:accordion}
The \graphClass\ graph \acc{\ell} is area-universal if and only if $\ell$ is odd.
\end{restatable}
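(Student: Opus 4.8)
The plan is to split the theorem into its two directions. For the ``only if'' direction, observe that an even \graphClass\ $\acc{\ell}$ is an Eulerian triangulation different from $K_3$: every vertex has even degree. Indeed, the octahedron is $4$-regular, a \diam\ of order $\ell$ turns the two endpoints of the special edge into vertices of degree $2+\ell$ and creates $\ell$ new vertices of degree $4$, while leaving the two apex vertices $u_1,u_2$ at degree $4+\ell$; when $\ell$ is even all degrees are even. Hence, by the counting argument of Kleist~\cite{kleist1Journal}, $\acc{\ell}$ is not area-universal. This is the easy half.

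For the ``if'' direction, the plan is to apply \cref{thm:area-univ} (or, since $[\acc{\ell}]=\{\acc{\ell}\}$, equivalently \cref{thm:everyEmbedding}). First I would verify that $\acc{\ell}$ is $4$-connected for odd $\ell\geq 1$ --- the only separating triangles one could hope to find would have to use two of the subdivision vertices, but consecutive $v_i$'s are nonadjacent, so no separating triangle exists; this lets us use the machinery of \cref{sec:poly} directly. Next I would exhibit an explicit \pOrder\ $\Po$ of $\acc{\ell}$: start with the central triangle of the octahedron on $v_1,v_2,v_3$ (picking $v_1v_2$, say, to become the special edge $e$, and $v_3$ one of the apices, with the fourth octahedron-triangle vertex adjacent to $v_1,v_2$ playing the role of $v_4$), then introduce the remaining apex and the $\ell$ subdivision vertices $v_1,\dots,v_k$ one at a time, each inside the current $4$-face, always assigning it the three predecessors described in \cref{lem:PolyMethodInsertion}. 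Then, using \cref{lem:almostReal} and the inductive construction via \cref{obs:uniquePos}, I would compute the \lastfaceFun\ $\f$ as a rational function of $x_4$ for one algebraically independent area assignment $\A$.

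The heart of the argument --- and the main obstacle --- is this last computation: one must show that for odd $\ell$, $\f$ is \crr\ and has odd max-degree, so that \cref{lem:surjective} yields almost surjectivity and hence area-universality via \cref{thm:area-univ}. The accordion structure is exactly what makes this tractable: inserting each subdivision vertex $v_i$ between the two apices is a repeated, self-similar step, so the coordinates $x_i(x_4,\A),y_i(x_4,\A)$ should satisfy a linear recurrence (essentially a continued-fraction / Möbius-type iteration in $x_4$), and the degrees $d_1(\f),d_2(\f)$ grow by a fixed amount with each \diam\ step. I would set up this recurrence, track the parity of the max-degree as a function of $\ell$ (showing it flips with each added vertex, so it is odd precisely when $\ell$ is odd), and separately check the \crr\ condition --- that the numerator and denominator of $\f$ share no real root --- using algebraic independence of the chosen $\A$ to rule out accidental cancellations (common real roots would impose a nontrivial polynomial relation over $\Q$ among the prescribed areas). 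By \cref{obs:sameDegreeM} it suffices to do this for a single algebraically independent assignment. Finally, \cref{thm:area-univ} gives area-universality of $\acc{\ell}$ for odd $\ell$, completing the characterization.
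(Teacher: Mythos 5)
Your overall route coincides with the paper's: even accordions are Eulerian triangulations and hence not area-universal by \cite{kleist1Journal}, and odd accordions are attacked via a \pOrder, the inductive construction of an \arVP, the \lastfaceFun, and \cref{thm:area-univ} after checking odd max-degree and the \crr condition. (Two small factual slips along the way: the two endpoints of the special edge keep degree $4$ after the \diam{} -- it is the two apexes $u_1,u_2$ whose degree grows by $\ell$ -- and consecutive subdivision vertices \emph{are} adjacent; neither slip affects your conclusions that even accordions are Eulerian and that $\acc{\ell}$ is 4-connected.)

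The genuine gap is that the heart of the argument is only announced, and the one justification you offer for the \crr condition is not sound as stated. Algebraic independence of $\A$ rules out a common (real) root of $\Nx_n$ and $\D_n$ only after one knows that their resultant is a nonzero polynomial in the prescribed areas, i.e.\ that numerator and denominator are coprime as polynomials in $x_4$ over $\Q(\A)$; if they shared a factor identically in $\A$, no genericity of $\A$ would help, so "a common real root imposes a nontrivial relation over $\Q$" presupposes exactly what must be proven. The paper's proof of \cref{lem:accCRR} instead runs an induction through a structural characterization of common zeros (\cref{lem:nofactors}) and an explicit computation $E_{i+1}[x]=(\Sigma\A)x-\alpha_{i+1}$, where the decisive step uses the inequality $\alpha_i<\Sigma\A$ -- positivity of the areas -- rather than independence alone. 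Likewise, the parity claim ("the max-degree flips with each added vertex") has to be established; in the paper this is \cref{lem:accDegree}, obtained from the stacked-on-the-same-angle machinery (\cref{lem:sameAngle}, \cref{lem:degreesStacked}), which gives $|\Nx_{i+1}|=|\D_{i+1}|=|\Ny_{i+1}|+1=|\D_i|+1$ and hence odd max-degree of $\f=\Sigma\A(1-\Nx_n/\D_n)$ precisely for odd $\ell$. Until these two ingredients are actually carried out, the appeal to \cref{lem:surjective} and \cref{thm:area-univ} is not yet justified.
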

\begin{proof}[Sketch]
Performing a \diam of order $\ell$ on some plane graph changes the degree of exactly two vertices by $\ell$ while all other vertex degrees remain the same. Consequently, if $\ell$ is even, all vertices of $\acc{\ell}$ have even degree, and hence,  $\acc{\ell}$ as an Eulerian triangulation is not area-universal as shown by the author in \cite[Theorem 1]{kleist1Journal}. 

It remains to prove the  area-universality of odd \graphClass graphs with the help of \cref{thm:area-univ}. Consider an arbitrary but fixed algebraically independent  area assignment $\A$.
We use the \pOrder depicted in \cref{fig:acc} to construct an \arVP.
We place the vertices $v_1$ at $(0,0)$, $v_2$ at $(1,0)$, $v_3$ at $(1,\Sigma\A)$, and $v_4$ at $(x_4,a)$ with $a:=\A(v_1v_2v_4)$. Consider also \cref{fig:acc}. 

\begin{figure}[htb]
 \centering
 \includegraphics[page=6]{Figures}
 \caption{ A \pOrder  of an  \graphClass graph (left) and an \arVP (right), where the shaded faces are realized.
}
 \label{fig:acc}
\end{figure}

We use \cref{lem:almostReal} to construct an \arVP.
Note that for all vertices $v_i$ with $i>5$, the three predecessors of $v_i$ are $p_\textsc{f}=v_3$, $p_\textsc{m}=v_{i-1}$ and $ p_\textsc{l}=v_4.$ One can show that the vertex coordinates of $v_i$ can be expressed as $x_i=\nicefrac{\Nx_i}{\D_i}$ and $y_i=\nicefrac{\Ny_i}{\D_i}$, where $\Nx_i,\Ny_i,\D_i$ are polynomials in~$x_4$.
Moreover, the polynomials fulfill the following crucial properties.
\begin{restatable}{lemma}{accDegree}\label{lem:accDegree}
For all $i\geq 5$, it holds that $|\D_5|=1$ and
\[|\Nx_{i+1}|=|\D_{i+1}|=|\Ny_{i+1}|+1=|\D_{i}|+1.\]
\end{restatable}

Consequently, $|\Nx_i|=|\D_i|$  is odd if and only if $i$ is odd. In particular, for odd~$\ell$, $|\Nx_n|=|\D_n|$ is odd  since the number of vertices  $n=\ell+6$ is odd.
\begin{restatable}{lemma}{accCRR}\label{lem:accCRR}
For all $i\geq 5$ and $\circ\in\{x,y\}$, it holds that 
$\N^\circ_i$ and $\D_i$ are \crr.
\end{restatable}
Consequently, the area of the ccw triangle $v_2v_3v_n$ in $D(x_4)$ is given by the \crr \lastfaceFun
\[\f(x):= \Det(v_2,v_3,v_n)=\Sigma\A(1-x_n)=\Sigma\A\left(1-\frac{\Nx_n}{\D_n}\right)
.\]
Since $|\Nx_n|$ and $|\D_n|$ are odd, the max-degree of $\f$ is odd. Thus, \cref{lem:surjective} ensures that  $\f$ is almost surjective. By \cref{thm:area-univ}, $\acc{\ell}$ is area-universal for odd $\ell$.
\end{proof}
 
 

This result can be generalized to double stacking graphs.

\subsection{Double Stacking Graphs}
Denote the vertices of the plane octahedron $\mathcal G$ by $ABC$ and $uvw$ as depicted in \cref{fig:doubleStacking}. 
The \emph{double stacking graph} \dS{\ell}{k} is the plane graph  obtained from~$\mathcal G$ by applying a \diam{} of order $\ell-1$ on $Au$ and a \diam of order $k-1$ on $vw$. 
Note that \dS{\ell}{k} has $(\ell+k+4)$ vertices. Moreover, \dS{\ell}{1} is isomorphic to \acc{\ell-1}; in particular, \dS{1}{1} equals $\mathcal G$. 
Note that $[\dS{\ell}{k}]$ usually  contains several (equivalence classes of) plane graphs.
\begin{figure}[t b]
 \centering
 \includegraphics[page=7]{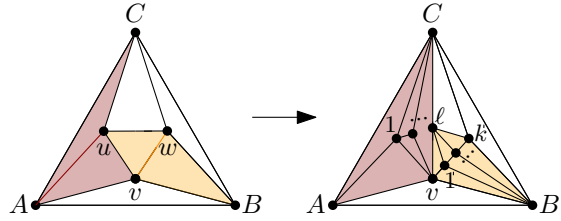}
 \caption{A double stacking graph $\dS{\ell}{k}$.}
 \label{fig:doubleStacking}
\end{figure}

\begin{restatable}{theorem}{doubleStacking}\label{thm:LKdoubleStacking}
A plane graph in $[\dS{\ell}{k}]$ is area-universal if and only if  $\ell \cdot k$ is even.
\end{restatable}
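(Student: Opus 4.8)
\textbf{Plan of proof for \cref{thm:LKdoubleStacking}.}

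The plan is to mirror the structure of the proof of \cref{thm:accordion}. For the "only if" direction, observe that a \diam of order $m$ on an edge changes the degrees of exactly the two apex vertices by $m$ and leaves all other degrees unchanged (the $m$ subdivision vertices have even degree $4$). Applying a \diam of order $\ell-1$ on $Au$ and of order $k-1$ on $vw$ therefore shifts the degrees of $A,u$ by $\ell-1$ and of $v,w$ by $k-1$; a short parity bookkeeping in the octahedron (all of whose vertices have degree $4$) shows that $\dS{\ell}{k}$ is Eulerian precisely when both $\ell-1$ and $k-1$ are even, i.e.\ when $\ell\cdot k$ is odd. Hence if $\ell\cdot k$ is odd, $\dS{\ell}{k}$ is an Eulerian triangulation different from $K_3$ and thus not area-universal by \cite[Theorem 1]{kleist1Journal}; and since the argument is purely combinatorial, the same holds for every plane graph in $[\dS{\ell}{k}]$.

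For the "if" direction, suppose $\ell\cdot k$ is even. The plan is to exhibit a \pOrder $\Po$ of $\dS{\ell}{k}$ and apply \cref{thm:everyEmbedding}, so that a single algebraically independent area assignment suffices and all embeddings in $[\dS{\ell}{k}]$ are handled simultaneously. I would build the \pOrder by first laying down a suitable $T_4$ among $\{A,B,C,u,v,w\}$, then inserting the two chains of subdivision vertices: the first chain realizes the \diam on $Au$ (each new vertex $v_i$ has predecessors of the form $p_\textsc{f}, p_\textsc{m}=v_{i-1}, p_\textsc{l}$ along the fan, exactly as in the accordion case), and then the second chain realizes the \diam on $vw$. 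As in \cref{lem:accDegree} and \cref{lem:accCRR}, the vertex coordinates along each chain are rational functions $x_i=\Nx_i/\D_i$, $y_i=\Ny_i/\D_i$ in the free variable $x_4$, and an induction along each chain controls the degrees: each step of a chain raises $|\D_i|$ and $|\Nx_i|$ by one. The key point is that the first chain contributes $\ell-1$ degree-increasing steps and the second contributes $k-1$, plus a bounded constant from the initial $T_4$ and the transition between chains, so the max-degree of the \lastfaceFun $\f$ has a fixed parity determined by $\ell+k$ modulo $2$ — and one must check the parity comes out \emph{odd} exactly when $\ell\cdot k$ is even. (Here the analysis splits into the cases "$\ell$ even" and "$k$ even"; by the symmetry $\dS{\ell}{k}\cong\dS{k}{\ell}$ up to embedding it suffices to treat one.) A \crr-ness statement analogous to \cref{lem:accCRR}, proved by the same inductive argument, then gives that $\f$ is \crr; by \cref{lem:surjective} it is almost surjective, and \cref{thm:area-univ} (together with \cref{thm:everyEmbedding} for the other embeddings) yields area-universality.

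The main obstacle I expect is the bookkeeping at the interface between the two subdivision chains: the second chain is grafted onto an edge ($vw$) of the already-modified graph, so its predecessors may involve vertices whose coordinates are already nontrivial rational functions of $x_4$, and one has to verify that the degree recursion $|\Nx_{i+1}|=|\D_{i+1}|=|\Ny_{i+1}|+1=|\D_i|+1$ survives this splicing with the right initial degree. A clean way to sidestep fragile case analysis is to choose the \pOrder so that the \lastfaceFun is again the area of a single triangle whose vertices are the "ends" of the two chains (as with $v_2v_3v_n$ in the accordion proof), reducing $\f$ to an affine expression in one coordinate $x_n$, whose numerator/denominator degrees are then read off directly from the chain recursions. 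With that setup the parity computation reduces to counting: $\deg\f$ is odd $\iff$ the total number of degree-increasing insertion steps is odd $\iff \ell\cdot k$ is even, matching the Eulerian obstruction exactly.
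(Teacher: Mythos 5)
Your ``only if'' direction is fine and matches the paper: the parity of the vertex degrees shows that $\dS{\ell}{k}$ (and every plane graph in $[\dS{\ell}{k}]$) is Eulerian exactly when $\ell\cdot k$ is odd, and then \cite[Theorem 1]{kleist1Journal} applies. The gap is in the ``if'' direction, and it sits exactly at the spot you proposed to sidestep. You claim that along both chains each insertion raises $|\Nx_i|$ and $|\D_i|$ by one, so that the max-degree of the \lastfaceFun is $\ell+k$ plus a bounded constant and its parity is ``determined by $\ell+k$ modulo $2$''. No additive count of this kind can work: the target criterion ``$\ell\cdot k$ even'' is not a function of $\ell+k \bmod 2$ (compare $\ell=2,k=4$, which must be area-universal, with $\ell=1,k=3$, which must not be; both have $\ell+k$ even). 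So the degree recursion you want to carry over from \cref{lem:accDegree} provably fails on the second chain. The reason is structural: every vertex $j'$ of the second chain has the \emph{end of the first chain} as a permanent predecessor, and that vertex's coordinates are already rational functions of $x_4$ with denominator degree $\ell$. Hence, by \cref{lem:degreesStacked}, each step of the second chain increases the degrees by $M=\ell$, not by $1$; the paper obtains $|\Nx_{k'}|=k\ell+1$ and $|\D_{k'}|=k\ell$, so the max-degree $k\ell+1$ is odd precisely when $\ell\cdot k$ is even. The interface between the chains is not a bookkeeping nuisance to be engineered away by a clever choice of last face --- it is where the multiplicative (rather than additive) degree growth, and hence the correct parity, comes from.

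A second, smaller gap: you assert that \crr-ness along the second chain follows ``by the same inductive argument'' as \cref{lem:accCRR}. It does not: for second-chain vertices the predecessors $(\,\ell,\,j',\,B\,)$ have genuinely nontrivial denominators, so case (ii) of \cref{lem:nofactors} can no longer be excluded by the simple argument used for accordions. The paper needs an additional induction showing that a putative common root $z$ would have to be independent of $a_{j'},b_{j'}$ and then a separate claim that $\D_\ell$ and $a\Nx_\ell+(1-x)\Ny_\ell$ are \crr, proved by explicitly locating the roots of $\D_\ell$ (the values $\alpha_i$) and evaluating the numerators there. Without some substitute for this analysis, almost surjectivity via \cref{lem:surjective} is not justified even once the degree parity is fixed.
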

If $\ell \cdot k$ is odd, every plane graph in $[\dS{\ell}{k}]$ is Eulerian and hence not area-universal by 
\cite[Theorem 1]{kleist1Journal}.
If $\ell \cdot k$ is even, we consider an algebraically independent area assignment of $\dS{\ell}{k}$, show that its  \lastfaceFun is \crr and has odd max-degree. Then we apply \cref{thm:everyEmbedding}.\smallskip

\cref{thm:LKdoubleStacking} implies that
\begin{corollary}\label{cor:universal}
 For every $n\geq 7$, there exists a  4-connected triangulation on $n$ vertices that is area-universal.
\end{corollary}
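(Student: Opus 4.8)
The plan is to derive \cref{cor:universal} directly from \cref{thm:LKdoubleStacking} by exhibiting, for each $n\geq 7$, a double stacking graph on $n$ vertices that is both area-universal and $4$-connected. Recall that $\dS{\ell}{k}$ has $\ell+k+4$ vertices, so I need pairs $(\ell,k)$ with $\ell+k+4=n$, i.e.\ $\ell+k=n-4\geq 3$. Among these I must pick one with $\ell\cdot k$ even (to invoke \cref{thm:LKdoubleStacking} and conclude area-universality) and one where the resulting plane graph is $4$-connected.

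First I would handle the parity/counting step: for $n-4\geq 3$ one can always write $n-4=\ell+k$ with $\ell,k\geq 1$ and $\ell\cdot k$ even. Indeed, if $n-4$ is even, take $\ell=k=\tfrac{n-4}{2}$ when $n-4$ is divisible by $4$, or more simply always take $\ell=2$, $k=n-6$ (valid since $n\geq 7$ would give $k\geq 1$ only for $n\geq 7$; for $n=7$ this gives $k=1$, $\ell=2$, product $2$, which is even). So the uniform choice $\ell=2$, $k=n-6$ works for every $n\geq 8$, and for $n=7$ the choice $\ell=2,k=1$ also works — in fact $\dS{2}{1}\cong\acc{1}$, the unique $4$-connected triangulation on seven vertices, which is already known to be area-universal. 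Thus a single family $\dS{2}{n-6}$ covers all $n\geq 7$, and $\ell\cdot k=2(n-6)$ is even, so \cref{thm:LKdoubleStacking} gives area-universality of every plane graph in $[\dS{2}{n-6}]$.

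The remaining point — and the step I expect to require the most care — is verifying $4$-connectivity of the chosen representative. A double stacking graph is built from the octahedron $\mathcal G$ (which is $4$-connected) by two \diam{} operations, and a \diam{} of order $m$ subdivides an edge with $m$ degree-$4$ vertices. Inserting vertices of degree $4$ along an edge whose two endpoints already have high degree should not create a separating triangle: the new vertices $v_1,\dots,v_m$ lie in the "diamond" region and each $v_i$ is adjacent to $u_1,u_2$ and (for interior ones) to $v_{i-1},v_{i+1}$; the only triangles through a $v_i$ are $v_iv_{i-1}u_j$ and $v_iv_{i+1}u_j$ type faces, none of which separates. I would argue that $\dS{\ell}{k}$ has no separating triangle whenever both \diam{} orders are chosen so that no triangle of $\mathcal G$ becomes separating — and since the two subdivided edges $Au$ and $vw$ of $\mathcal G$ are not both in a common triangle, the diamonds do not interfere. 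Concretely, one checks that every triangle of $\dS{2}{n-6}$ is a face, hence (as $n\geq 7>4$) the graph is $4$-connected. Care is needed for the smallest cases ($n=7,8$) where the diamonds are short and one should confirm directly from \cref{fig:doubleStacking} that no separating triangle appears; for $n=7$ this is just the known $4$-connected seven-vertex triangulation $\acc{1}$.

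Putting these together: for every $n\geq 7$, the plane graph $\dS{2}{n-6}$ (or any representative in its class, e.g.\ the symmetric one) has $n$ vertices, is $4$-connected, and — since $\ell\cdot k=2(n-6)$ is even — is area-universal by \cref{thm:LKdoubleStacking}. This proves \cref{cor:universal}. The only genuinely non-routine ingredient is the $4$-connectivity check, which reduces to the observation that a \diam{} operation on an edge of a $4$-connected triangulation, applied to edges lying in no common triangle, introduces no separating triangle.
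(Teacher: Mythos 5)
Your proposal is correct and follows essentially the same route as the paper, which obtains the corollary directly from \cref{thm:LKdoubleStacking} by choosing $\ell,k\geq 1$ with $\ell+k=n-4$ and $\ell\cdot k$ even (e.g.\ $\ell=2$, $k=n-6$), so that $\dS{\ell}{k}$ is an area-universal triangulation on $n$ vertices. The $4$-connectivity of the double stacking graphs, which you verify explicitly via the absence of separating triangles after the \diam{} operations, is left implicit in the paper (it is needed there anyway, since the \pOrder machinery behind \cref{thm:LKdoubleStacking} is stated for $4$-connected triangulations).
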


%

\section{Discussion and Open Problems}\label{sec:end}
For triangulations with \pOrder{}s, we introduced a sufficient criterion to prove area-universality of all embeddings of a planar graph which relies on checking properties of one area assignments of one plane graph.  
We used the criterion to present two families of area-universal triangulations. Since area-universality is maintained by taking subgraphs, area-universal triangulations are of special interest. For instance, the area-universal double stacking graphs are used in~\cite{kleistQuad,kleistPhD} to show that all plane quadrangulations with at most 13 vertices are area-universal. 
The analysis of \graphClass graphs showns that  area-universal and non-area-universal graphs can be structural very similar.
The class of accordion graphs gives a hint why understanding area-universality seems to be a difficult problem. 
In conclusion, we pose the following open questions:
\begin{compactitem}
 \item Is area-universality a property of plane or planar graphs?
 \item What is the complexity of deciding the area-universality of triangulations? 
 \item Can area-universal graphs be characterized by \emph{local} properties?
\end{compactitem}

\paragraph{Acknowledgements}
I thank Udo Hoffmann and Sven Jäger for helpful comments. 
\newpage

%
%
%

\bibliography{MyBib_03}
\bibliographystyle{splncs04}

\ifarxiv
\newpage
\appendix
\section{Proofs of \cref{sec:prelim}}\label{app:2}

\subsection{Proof of \cref{cor:fillinSTRONGER}}

\cref{cor:fillinSTRONGER} follows from the fact that set of vertex placements $\D^{\leq c}|_{f\to\vartriangle}$ of $T$ is compact, where $\D^{\leq c}|_{f\to\vartriangle}$
 denotes the set of crossing-free vertex placements where $f$ coincides with a fixed triangle $\vartriangle$ of positive area and additionally the total area in each drawing does not exceed $c$.

\begin{lemma}\label{lem:compact}
Let $T$ be a plane triangulation $T$, let $f$ be some face of $T$, and $c\in \R_{>0}$. Then the set of vertex placements $\D^{\leq c}|_{f\to\vartriangle}$ of $T$ is compact.
\end{lemma}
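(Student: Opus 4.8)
The plan is to apply the Heine--Borel theorem, so it suffices to show that $\D^{\leq c}|_{f\to\vartriangle}\subseteq\R^{2n}$ is closed and bounded. Write $\D^{\leq c}|_{f\to\vartriangle}=\D(T)\cap X_f\cap X_c$, where $X_f$ is the set of vertex placements in which the three vertices of $f$ occupy the corners of $\vartriangle$ (consistently with the orientation of $f$) and $X_c:=\{D:\sum_{g\in F'}\area(g,D)\le c\}$.

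For closedness I would observe that each factor is closed. The set $\D(T)$ is closed because, by definition, it is the topological closure of the set of non-degenerate planar straight-line drawings of $T$. The set $X_f$ is cut out by linear equations (an affine subspace, or a finite union of such), hence closed. The set $X_c$ is closed because $D\mapsto\sum_{g\in F'}\area(g,D)=\tfrac12\sum_{g\in F'}|\Det(\cdot)|$ is a finite sum of absolute values of polynomials and thus continuous on all of $\R^{2n}$. An intersection of closed sets is closed.

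For boundedness, fix a point $p_0\in\vartriangle$ and let $w_0>0$ be the minimal width of $\vartriangle$, i.e.\ the minimum over all directions of the width of $\vartriangle$; this is positive because $\vartriangle$ is non-degenerate. If $f$ is the outer face then every vertex lies inside $\vartriangle$ and we are done, so assume $f\in F'$ and let $D\in\D^{\leq c}|_{f\to\vartriangle}$. Since $D$ is crossing-free, all its vertices lie in the closed region $\bar O$ bounded by the outer triangle of $D$, and so does $\vartriangle$ (the triangle drawn for $f$); moreover $\bar O$ is non-degenerate, since its area equals the total inner area, which is at least $\area(\vartriangle)>0$. Thus $\bar O$ is a triangle of area at most $c$ containing $\vartriangle$, so it has width at least $w_0$ in every direction. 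A standard estimate now caps its diameter: if $p,q\in\bar O$ realize $d:=\operatorname{diam}(\bar O)$, then, as $\bar O$ has width at least $w_0$ in the direction orthogonal to $pq$, it contains a point $r$ at distance at least $w_0/2$ from the line through $p$ and $q$; hence $c\ge\area(\bar O)\ge\area(pqr)\ge\tfrac{dw_0}{4}$, i.e.\ $d\le 4c/w_0$. Therefore every vertex of $D$ lies within distance $4c/w_0$ of the fixed point $p_0$, so all $2n$ coordinates are bounded by a constant depending only on $\vartriangle$ and $c$. This yields boundedness, and the lemma follows.

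The only genuinely geometric ingredient is the boundedness argument, which I expect to be the main obstacle: one must prevent the drawing from escaping to infinity while the face $f$ stays pinned to $\vartriangle$ and the total area stays bounded, and the key point is that pinning a non-degenerate inner face forces the outer triangle to be thick in every direction, so that its bounded area caps its diameter.
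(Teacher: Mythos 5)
Your proof is correct and follows the same route as the paper: compactness via Heine--Borel, with closedness coming from the fact that crossing-free placements form a closure (degenerate drawings are allowed) intersected with the closed pinning and area constraints, and boundedness resting on the same confinement fact that all vertices lie in the convex outer triangle, which has area at most $c$ and contains the pinned non-degenerate triangle $\vartriangle$. The only difference is the final quantitative step: the paper traps the vertices in three half-planes determined by the sides of $\vartriangle$ (pairwise non-parallel, hence bounded intersection), whereas you bound the diameter of the outer triangle by $4c/w_0$ via the minimal width $w_0$ of $\vartriangle$; both estimates are valid and equally elementary.
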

\begin{proof}
 First note that closedness follows from the fact that we allow for degenerate drawings.
 Second, we show that $\D^{\leq c}|_{f\to\vartriangle}$ is bounded. If $f$ is the outer face, then clearly all inner vertex coordinates are bounded by the coordinates of $\vartriangle$. Hence, it remains to consider the case that $f$ is an inner face. By assumption, $\vartriangle$ has positive area and hence three sides which are pairwise not parallel. 
 
 We show that all vertices lie inside a bounded region which consists of the intersection of three half spaces. 
For each side $s$ of $f$, we consider the line $\ell_s$ such that $s$ and any point of $\ell_s$ form a triangle of area  $c$ that intersects $f$. Let $H_s$ denote the half space defined by $\ell_s$ that contains $f$. 
 We claim that any drawing~$D$ in $\D^{\leq c}|_{f\to\vartriangle}$ lies in $H_s$. Suppose a vertex $v$ of $T$ lies outside $H_s$, then the triangle~$t'$ formed by $s$ and $v$ is contained in $D$ since the outer face is triangular and thus convex. However, the area of $t'$ exceeds $c$.
 Therefore all vertices lie within  the intersection of the three half spaces $H_s$.
 \end{proof}
 
 Now we are ready to prove the proposition.
\fillingStronger*

\begin{proof}
Suppose $\A\in\As$ is realizable, then clearly, $\A$ itself lies in every of its neighborhoods and serves as a certificate of a realizing drawing.

Suppose every open neighborhood of $\A$ in $\As^{\leq 2c}|_{f\to \A(f)}$ contains a realizable area assignment.
Hence, we may construct a sequence of realizable assignments~$(\A_i)_{i\in\mathbb N}$ converging to $\A$.
Since $\A_i(f)=\A(f)$ for all $i$, \cref{lem:outerFace} allows to pick an  $A_i$-realizing drawing $D_i$ such that the placement of $f$ coincides  with a fixed triangle $\vartriangle$ of area $\A(f)$, i.e., $D_i\in\D^{\leq 2c}|_{f\to\vartriangle}$. By \cref{lem:compact}, the sequence $(D_i)_{i\in\mathbb N}$ is bounded. Therefore, by the Bolzano-Weierstrass theorem, $(D_i)_{i\in\mathbb N}$ contains a converging subsequence with limit $D$. By the compactness, $D$ is contained in $\D^{\leq 2c}|_{f\to\vartriangle}$ and thus yields a crossing-free drawing of $G$.   Note that for every inner face $f'\in F'$ it holds that:
\[\area(f',D)=\lim_{i\to\infty} \area(f',D_i)=\lim_{i\to\infty} \A_i(f')=\A(f').\]
 Consequently, $D$ guarantees that $\A$ is realizable.
\end{proof}

%
%

\subsection{Proof of \cref{lem:4connnected}}

\connected*

\begin{proof}
 If $T$ is area-universal, then also every subgraph and consequently all its 4-connected components are area-universal.
 %

We prove the other direction by induction. For the induction base, note that on $n=3$ and $n=4$ vertices there exist unique triangulations, namely the complete graphs on three and four vertices, which are area-universal as stacked triangulations and have no 4-connected components.
 
For the induction step consider a triangulation $T$ on $n>4$ vertices. Now we use the fact that a triangulation on $n>4$ vertices is 4-connected if and only if it has no separating triangle.
If $T$ is 4-connected, then the statement is vacuous. So suppose~$T$ has a separating triangle $t$. Let $T_i$ denote the triangulation consisting of $t$ and its interior, and let $T_o$ denote the triangulation consisting of $t$ and its exterior, see also \cref{fig:4connected}. 

 \begin{figure}[ht]
 \centering
 \includegraphics[page=1,width=.9\textwidth]{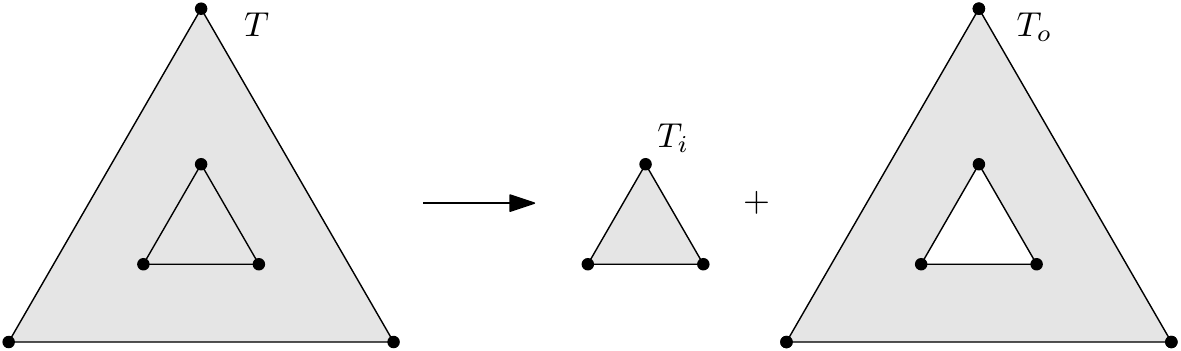}
 \caption{Decomposing a triangulation along separating triangles.}
 \label{fig:4connected}
\end{figure}
 By assumption all 4-connected components of $T$, and thus also of $T_i$ and $T_o$, are area-universal.  By induction it follows that  $T_i$ and $T_o$ are area-universal.
  We show how to obtain a realizing drawing of $T$ for any area assignment $\A$ of $T$. For $T_o$, consider a realizing drawing $D_o$ of $\A_o$ where $\A_o(t)$ equals the sum of all faces in $T_i$ and $\A_o(f)=\A(f)$ for all other faces of $T_o$. Now, for $T_i$ consider a realizing drawing $D_i$ of $\A_i$, the restriction of $\A$ to $T_i$. By \cref{lem:outerFace}, we may assume that the outer face of $D_i$ coincides with $t$ in $D_o$. By construction, $A_o(t)=\Sigma A_i$ and hence the area of $t$ in $D_o$  coincides with the area of the outer triangle of $T_i$. Hence, the union of $D_o$ and $D_i$ yields an $\A$-realizing drawing of $T$.
\end{proof}

\newpage
\subsection{Proof of \cref{thm:areaEqs}}
\propThree*
\begin{proof} 
The proof consists of two directions. 
If $\A$ is realizable, then the vertex placement of an $\nicefrac{1}{2}\A$-realizing drawing is a real solution of \AreaEqA{F\setminus \{f\}}. Recall that by \cref{lem:scaling}, $\A$ is realizable if $\nicefrac{1}{2}\A$ is.

If  \AreaEqA{F\setminus\{f\}} has a real solution $S$, $S$ yields a vertex placement $D$ satisfying $\A$ and preserving the orientation of all but one face $f$. It remains to show that $D$ corresponds to a crossing-free drawing. 
%
If $f$ is the outer face, then 
\cref{lem:ccPlanar} implies that $D$ is an equivalent straight-line drawing of $T$.
\cref{eq:outerface} shows that for every face $f'\in F'$, the equation systems $\AreaEqA {F'}$ and $\AreaEqA {F\setminus \{f'\}}$  are equivalent.
Consequently, we may assume that $f$  is the outer face $f_o$.
Thus, it remains to prove \cref{eq:outerface}. Let $v_1,v_2,\dots, v_k$ denote the vertices of the outer face $f_o$ (of an inner triangulation) in counter clockwise orientation. 
Recall that $\Det(u,v,w)$ denotes the determinant of the homogeneous coordinates of $u,v,w$ as defined in \cref{eq:det}. Moreover,  $\det(u,v)$ denotes the determinant of the 2-dimensional coordinates of $u$ and $v$.
Then, by the properties of the determinant, for any vertex placement $D$ it holds that
\begin{align*}
2\cdot \area(f_o,D)=&\sum_{i=2}^{k-1} \Det(v_1,v_i,v_{i+1})=\sum_{i=1}^{k-1} \det(v_i,v_{i+1})\\
=&\sum_{i=1}^{k-1} \det(v_i,v_{i+1})+\sum_{e=(u,v) \text{inner}} \big(\det(u,v)+\det(u,v)\big)\\
=& \sum_{f=(u,v,w)\in F'}\Det (u,v,w)
= 2\cdot \sum_{f\in F'}\area(f,D).
\end{align*}
In the last line, we use the fact that every inner edge appears in both direction and every outer edge in one direction. Thus we can traverse each inner face in ccw direction. This fact is illustrated in \cref{fig:neatLemma}.
\end{proof}

\begin{figure}[bth]
 \centering
 \includegraphics[page=8]{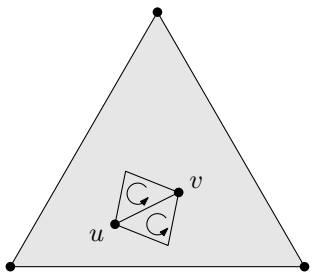}
 \caption{Illustration of the proof of \cref{thm:areaEqs}.}
 \label{fig:neatLemma}
\end{figure}
\newpage
\section{Proofs of \cref{sec:poly}}\label{app:3}
Here we present the omitted proofs of \cref{sec:poly}. We start with the properties of \pOrder{}s.

\obsEdge*
\begin{proof}
 By definition of $\Or_\Po$ if an edge $(v_{i+1},v_j)$ is oriented from $v_{i+1}$ to $v_j$, then $i>j$. Hence the orientation is acyclic. In particular, no edge is oriented in two directions.
 The number of unoriented edges follows by double counting the edges of $T$. On the one hand, by Euler's formula, the number of edges in a triangulation is $|E|=3n-6$. On the other hand, the number of oriented edges $E_\uparrow$ is given by the sum of the outdegrees. 
 $$|E_\uparrow|=\sum_{i=1}^n \outdeg(v_{i+1})= 0+1+2+2+3(n-4)=3n-7.$$ 
 Hence, $|E|-|E_\uparrow|=1$ and thus there is exactly one edge $e$ without orientation. 
 Observe that the last vertex $v_n$ in the \pOrder
 has indegree 0. If $T$ has minimum degree 4, then $e$ is incident to $v_n$; otherwise $v_n$ would be a vertex of degree 3. This is a contradiction.
\end{proof}

\construction*
\begin{proof} 
We prove this statement by induction.  
For the induction base, note that~$T_4$ has three faces: the triangle $v_3v_2v_1$, the triangle $v_1v_2v_4$, and the 4-face $v_1v_4v_2v_3$. 
\cref{fig:predecessors}\,(a) depicts~$T_4$ for the case that $v_1v_2v_3$ is the outer face. By 4-connected\-ness, $v_3$ and $v_4$ cannot share an edge.  Therefore, with this notation, the inner faces are ccw and the outer face is cw oriented -- independent of the choice of the outer face.

Now we consider the induction step and insert $v_{i+1}$ in $T_i$. Since $T$ is 4-connected, $v_{i+1}$ can only be placed in the unique 4-face $f$ of $T_{i}$. Clearly, any three vertices of $f$ are consecutive on the boundary cycle of $f$.
Hence, the predecessors of $v_{i+1}$ form a path of length three along $f$. We define $p_\textsc{m}$ as  the middle vertex of this path.
Naming the remaining predecessors by  $p_\textsc{f}$ and $p_\textsc{l}$, $p_\textsc{f}p_\textsc{m}v_{i+1}$ and $p_\textsc{m}p_\textsc{l}v_{i+1}$ are (not necessarily correctly oriented) triangles in $T_{i+1}$. Since $T$ is 4-connected, these triangles of $T_{i+1}$ are faces in $T$ and thus also in $T_{i+1}$. Furthermore, $p_\textsc{f}v_{i+1}p_\textsc{l}w$ forms a 4-face of $T_{i+1}$ where $w$ is the vertex of $f$ which is not in $\pred(v_{i+1})$.

For the correct orientation we distinguish two cases: If $f$ is an inner face, we define $p_\textsc{f}$ as the ccw first vertex (of the path of predecessors in $f$) and $p_\textsc{l}$ as the ccw last vertex. \cref{fig:predecessors}\,(b) illustrates this definition for the case that $f$ is an inner face.
Otherwise, $f$ is the outer face and we define $p_\textsc{f}$  as the cw first vertex  and $p_\textsc{l}$ the cw last vertex. This case is displayed in \cref{fig:predecessors}\,(c). Then,  $p_\textsc{f}p_\textsc{m}v_{i+1}$ and $p_\textsc{m}p_\textsc{l}v_{i+1}$ are ccw faces in $T_{i+1}$ if and only if they are inner faces of~$T$.
\end{proof}

As mentioned in \cref{rmk:numberTriang}, \cref{lem:PolyMethodInsertion} can be used to obtain a lower bound on the number of 4-connected planar triangulations on $n$ vertices with a \pOrder.

\begin{proposition}
The number of 4-connected planar triangulations on $n$ vertices with a \pOrder is $ \Omega( \nicefrac{1}{n}\cdot 2^{n})$.
\end{proposition}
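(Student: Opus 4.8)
The statement to prove is the lower bound $\Omega(2^n/n)$ on the number of 4-connected planar triangulations on $n$ vertices admitting a \pOrder.

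The plan is to build such triangulations greedily using the inductive construction guaranteed by \cref{lem:PolyMethodInsertion}: start from $T_4$ (the unique triangulation on four vertices with its prescribed predecessor structure, drawn as $K_4$ with a distinguished 4-face), and at each step insert a new vertex $v_{i+1}$ into the unique 4-face of $T_i$, choosing which of the two possible ``middle predecessors'' $p_\textsc{m}$ it gets. Concretely, the 4-face of $T_i$ has a boundary 4-cycle $abcd$; inserting $v_{i+1}$ and connecting it to three consecutive boundary vertices means choosing one of the two diagonals' worth of options — i.e. $p_\textsc{m}\in\{b,d\}$ (equivalently, choosing which vertex of the 4-cycle becomes the new ``$w$'' not adjacent to $v_{i+1}$). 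Each such choice is legal by \cref{lem:PolyMethodInsertion} and yields again a plane graph with exactly one 4-face and otherwise triangles, so the recursion proceeds through $i=4,5,\dots,n$. This gives a binary choice at each of the $n-4$ insertion steps, hence at least $2^{n-4}$ labelled construction sequences, each producing a 4-connected triangulation with a \pOrder (4-connectedness on $n>4$ vertices is equivalent to having no separating triangle, and the construction never creates one, since each new vertex lies in a face and its neighbourhood is exactly a path of three boundary vertices, whose closing triangle was already a face of $T$).

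First I would make precise that two distinct choice sequences can yield isomorphic plane triangulations, so passing from $2^{n-4}$ sequences to a lower bound on the number of (isomorphism classes of) triangulations requires dividing by the number of ways the same triangulation can arise. The key observation is that a 4-connected triangulation on $n$ vertices has a bounded automorphism group (in fact at most $4n$ combinatorial automorphisms, by a Whitney-type argument: an automorphism of a 3-connected plane graph is determined by the image of one flag, and there are at most $4n$ flags), and moreover the last face function / \pOrder{} structure — or simply the observation that the \pOrder essentially encodes the triangulation up to the one missing edge (see the discussion after \cref{obs:edge}) — limits how many construction sequences reproduce a fixed triangulation. More carefully: a construction sequence is determined by the \pOrder together with the choices of $p_\textsc{m}$, and the \pOrder together with the recovered edge $e_\Po$ encodes $T$ as a plane graph; conversely, given the plane triangulation $T$, the number of \pOrder{}s it admits is at most the number of choices of an edge $e$ with $T-e$ 3-degenerate times the number of 3-degeneracy orders of $T-e$ compatible with the fixed labelling — but since we only want a lower bound, it suffices to bound from above the number of construction sequences per isomorphism class by something like $c\cdot n$ (or even $c\cdot n^2$, which still gives $\Omega(2^n/n^2)$, but with a little care one gets $O(n)$). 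Dividing $2^{n-4}$ by $O(n)$ yields the claimed $\Omega(2^n/n)$.

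The main obstacle I expect is exactly this counting-the-overcount step: turning ``at least $2^{n-4}$ construction sequences'' into ``at least $\Omega(2^n/n)$ non-isomorphic triangulations'' cleanly. The cheap route — bounding the fibre size of the map ``sequence $\mapsto$ unlabelled triangulation'' — needs (a) a bound on the number of automorphisms of a 4-connected plane triangulation, which is standard (linear in $n$), and (b) a bound on the number of \pOrder{}s of a fixed 4-connected triangulation, which is where one must think: different \pOrder{}s correspond to different choices of the ``apex'' edge $e_\Po$ and different linear extensions of the associated 3-degeneracy, and a priori the latter could be large. A robust way around this is to note that once the first four vertices $v_1,v_2,v_3,v_4$ and the edge $e_\Po$ incident to $v_n$ are fixed, the rest of the \pOrder{} is forced very rigidly by the 4-face invariant of \cref{lem:PolyMethodInsertion} (at each stage the only vertex that can be removed while preserving ``one 4-face, rest triangles'' is $v_n$, whose unique 4-face neighbourhood then determines its predecessors), so the number of \pOrder{}s per plane triangulation is $O(1)$ after fixing the $O(n)$-many seeds — giving an $O(n)$ overcount overall and hence the stated bound. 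I would verify this rigidity claim as the one genuinely non-routine piece; everything else is bookkeeping.
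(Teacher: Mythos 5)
There is a genuine gap, and it sits exactly at the step you flagged as non-routine. Your overcount bound is false: the reverse peeling is not rigid. At a stage $T_i$, the unique 4-face may contain \emph{two} vertices of degree 3 — the separating-triangle argument (which is the paper's own) only forbids two \emph{adjacent} degree-3 vertices on the 4-face, so two \emph{opposite} ones are possible, and removing either of them again leaves a graph with one 4-face and all other faces triangles, hence either can play the role of the last vertex at that stage. Consequently, a fixed 4-connected triangulation, even after fixing $e_\Po$ and the first four vertices, can admit up to two choices at each of the roughly $n$ peeling steps; the number of \pOrder{}s per triangulation is therefore only bounded by something like $3n\cdot 2^{n-1}$, not by $O(1)$ per seed or $O(n)$ overall, and automorphisms are beside the point. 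With a potentially exponential fibre, dividing your $2^{n-4}$ construction sequences by the overcount yields no useful bound. A second, smaller slip feeds into this: inserting a degree-3 vertex into a 4-face offers \emph{four} choices of middle predecessor (one per choice of the omitted vertex of the 4-cycle, cf.\ \cref{lem:PolyMethodInsertion}), not two.

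The paper's proof repairs both points at once, and the repair is essential rather than cosmetic: it counts at least $2\cdot 4^{n-5}$ pairs (triangulation, \pOrder) by using all four middle-predecessor choices at every step $i>5$ (only two at $i=5$, to avoid a separating triangle), and it proves the upper bound of at most $3n\cdot 2^{n-1}$ \pOrder{}s per fixed triangulation via ``at most $3n$ choices for the unoriented edge, then at most two removable degree-3 vertices on the 4-face at each reverse step.'' The entire bound lives in the gap between four forward choices and at most two backward choices per step: the quotient is of order $4^{n}/(n\,2^{n})=2^{n}/n$. Your version, with two forward choices against an (implicitly assumed constant, actually up to two) backward branching, collapses this gap; to salvage the argument you would have to both upgrade the forward count to $4^{n-5}$ and replace the rigidity claim by the two-per-step upper bound — which is precisely the paper's proof.
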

\begin{proof}
Firstly, we show that a 4-connected triangulation $T$ on $n$ vertices has at most $9n\cdot 2^{n}$ different \pOrder{}s.
We consider every \pOrder in the reverse order $v_n,...,v_1$. 
$T$ has at most $3n$ edges which may serve as the unique unoriented edge. Its deletion yields a 4-face. By \cref{lem:PolyMethodInsertion}, in every \pOrder for $i=n,...,5$ the vertex $v_i$ is a vertex incident to a 4-face in~$T_i$. Removing $v_i$ from $T_i$ yields a graph $T_{i-1}$ that has again a unique 4-face. It follows from \cref{obs:edge} that $v_i$ is a vertex of degree 3 in $T_i$. Consequently, all neighbors of $v_i$ in $T_i$ are the predecessors of $v_i$.
\vspace{-15pt}
\begin{figure}[h!]
\begin{minipage}{.58\textwidth}
\normalsize
Consider \cref{fig:vertdeg3} and observe that two adjacent vertices of degree 3 on a 4-face $x_1y_1y_2x_2$ certify a separating triangle, unless $n\leq 5$: Since all other faces of $T_i$ are triangles, every pair of adjacent vertices has a common neighbor outside of the $4$-face. If $x_i$  has degree 3, the common neighbor of $x_iy_i$ and $x_1x_2$ coincides; we call it $z$. Thus, $zy_1y_2$ is a separating triangle unless $T_i$ contains only these five vertices. By 4-connectivity, each 4-face has at most two vertices of degree 3 for $i\geq 5$ and there are at most two choices for the vertex $v_i$. For $i=5$, the number of choices is upper bounded by the four vertices of the 4-cycle and for $i\leq 4$, by another four; two for $v_1,v_2$ and two for $v_3,v_4$.
\end{minipage}
\hfill
\begin{minipage}{.35\textwidth}
\centering
\includegraphics[page=4]{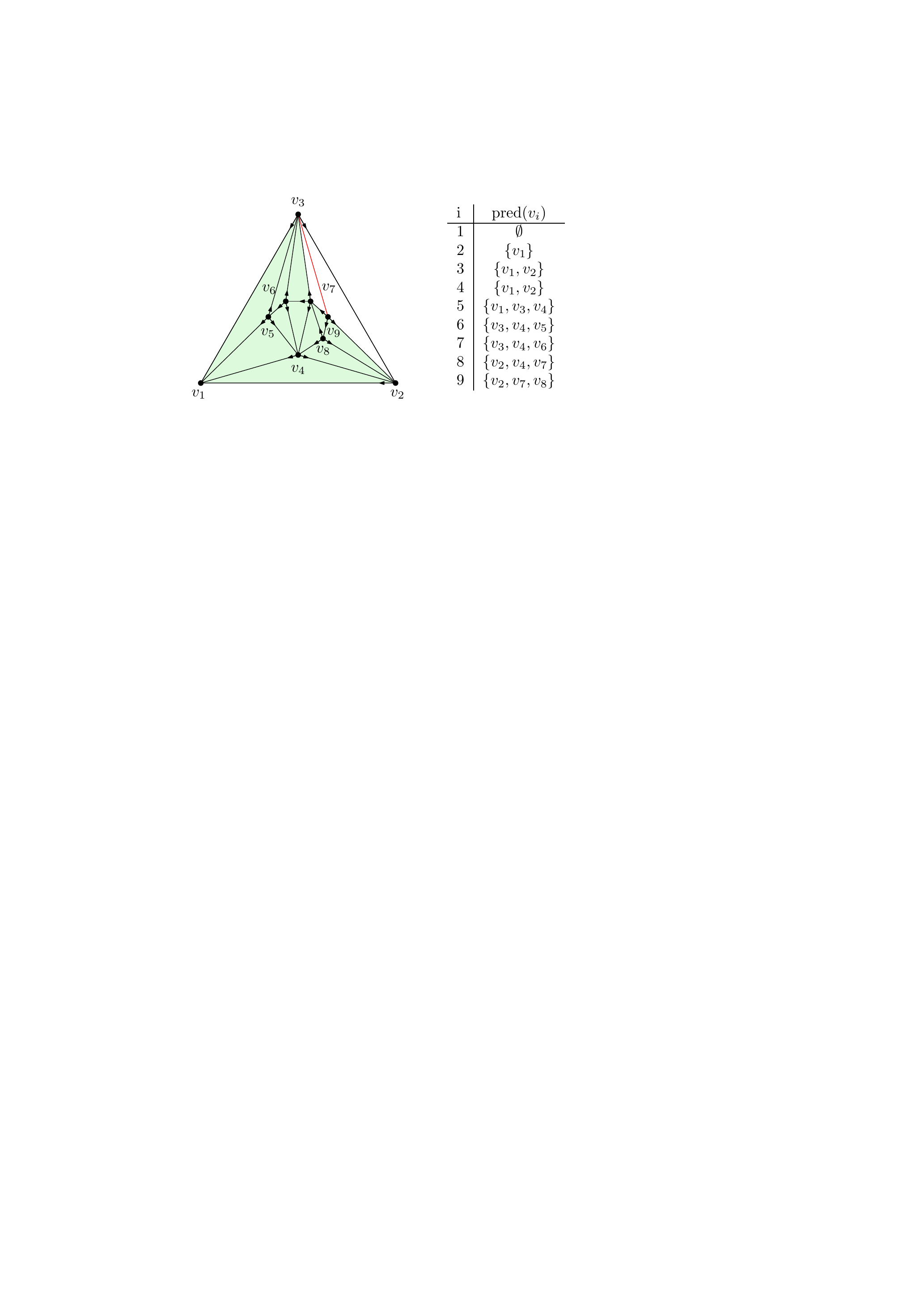}
\caption{Two adjacent vertices of degree three in a 4-face certify a separating triangle.}
\label{fig:vertdeg3}
\end{minipage}
 
 \end{figure}

\vspace{-18pt}
Consequently, for a specific unoriented edge, the number of vertex orderings is at most $2^{n-5}\cdot 4 \cdot 4=2^{n-1}$. This makes a total of at most $3n\cdot2^{n-1}$ different \pOrder{}s for a fixed triangulation.

In order to build a 4-connected triangulation with a \pOrder $v_1,v_2,\dots, v_n$, we specify the middle predecessor $v_\textsc{m}$ of $v_i$ for $5\leq i\leq n$ from the 4-face of $T_{i-1}$. By \cref{lem:PolyMethodInsertion}, the remaining two predecessors of $v_i$ are the two neighbors of $v_\textsc{m}$ in the 4-face.
Thus, we have four choices for $v_\textsc{m}$ in each step $i>5$. For $i=5$, neither $v_3$ nor $v_4$ can serve as the middle predecessor since this results in a separating triangle. Thus, we obtain at least $2\cdot 4^{n-5}$ different \pOrder{}s. By the above observation at most $3n\cdot2^{n-1}$ belong to the same triangulation. 
 Hence there exist $\Omega(\nicefrac{1}{n}\cdot 2^{n})$ 4-connected planar triangulation on $n$ vertices.
\end{proof}

\subsection{Proof of \cref{lem:almostReal}}\label{app:Lemma}
For the proof of \cref{lem:almostReal}, we introduce the concept of algebraically independent area assignments.
A set of real numbers $\{a_1,a_2,\dots,a_k\}$ is \emph{algebraically independent} over $\Q$ if for each polynomial $p(x_1,x_2,\dots,x_k)$ with coefficients from $\Q$, different from the 0-polynomial, it holds that $p(a_1,a_2,\dots,a_k)\neq 0$. 
We say a face area assignment $\A$ of a plane graph $G$ is \emph{algebraically independent} if 
the set  $\{\A(f)| f\in F'\}$ is algebraically independent over $\Q$. Note that then for all $f_0\in F'$ it holds that  $\{\A(f)| f\in F'\setminus f_0\}\cup \{\Sigma \A\}$ is algebraically independent.
For a transcendental $c$, we denote the subset of $\As^c$ consisting  of all algebraically independent area assignments  by $\As_I$ and show that $\As_I$ is a dense subset.
\begin{clm}\label{lem:dense}
If $c$ is transcendental $c$, then $\As_I$ is dense in  $\As^c$.
\end{clm}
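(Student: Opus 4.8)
The claim is that $\As_I$, the set of algebraically independent area assignments of total area $c$ (for transcendental $c$), is dense in $\As^c$. The plan is to show that the complement $\As^c \setminus \As_I$ is a countable union of proper affine subvarieties of the hyperplane $\As^c$, hence has empty interior (indeed is meager), which is even stronger than what density requires.

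\textbf{Key steps.} First I would fix the ambient structure: $\As^c$ is an affine subset of $\R^{F'}$ cut out by the single linear equation $\sum_{f\in F'}\A(f)=c$, so it is (an open subset of) an affine space of dimension $|F'|-1$; pick coordinates $\A(f_1),\dots,\A(f_{m-1})$ with $m=|F'|$, and then $\A(f_m)=c-\sum_{i<m}\A(f_i)$. Second, I would unwind the definition: $\A\notin\As_I$ means there is a nonzero polynomial $p\in\Q[X_1,\dots,X_m]$ with $p(\A(f_1),\dots,\A(f_m))=0$. Substituting the last coordinate gives a polynomial $\tilde p$ over $\Q(c)$ — here I must be slightly careful, since $c$ is transcendental the substitution $X_m\mapsto c-\sum X_i$ is an injective ring homomorphism $\Q[X_1,\dots,X_m]\to\Q(c)[X_1,\dots,X_{m-1}]$ up to clearing, so $\tilde p$ is again nonzero. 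Third, the crucial point: the set $Z_{\tilde p}=\{\A\in\As^c : \tilde p(\A(f_1),\dots,\A(f_{m-1}))=0\}$ is a proper algebraic subset of the $(m-1)$-dimensional space $\As^c$ (proper because $\tilde p\neq 0$), hence nowhere dense and of measure zero. Fourth, there are only countably many polynomials with coefficients in $\Q$ (or in $\Q(c)$, still countable), so $\As^c\setminus\As_I = \bigcup_{\tilde p\neq 0} Z_{\tilde p}$ is a countable union of nowhere dense sets. By the Baire category theorem (or simply: a countable union of measure-zero sets has measure zero, so its complement is dense), $\As_I$ is dense in $\As^c$.

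\textbf{Anticipated obstacle.} The only genuinely delicate point is the bookkeeping in the second and third steps — namely verifying that eliminating the dependent coordinate $\A(f_m)=c-\sum_{i<m}\A(f_i)$ does not turn a nonzero $p$ into the zero polynomial. This is where transcendence of $c$ is used: if $\tilde p$ were identically zero as a polynomial in $X_1,\dots,X_{m-1}$ over $\R$, collecting coefficients (which are polynomials in $c$ with rational coefficients) and using that $c$ satisfies no nontrivial rational polynomial relation would force all those coefficient-polynomials to vanish, hence $p=0$, a contradiction. Once this is in place, everything else is standard (countability of $\Q$-coefficient polynomials and the fact that a nonzero polynomial vanishes only on a nowhere-dense, measure-zero set). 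I would also remark that the same argument shows $\As_I$ has full Lebesgue measure in $\As^c$, and that the hypothesis ``$c$ transcendental'' is necessary for $\As_I$ to be nonempty at all, since $\Sigma\A=c$ must itself be transcendental for $\{\A(f)\mid f\in F'\}$ to have a chance of being algebraically independent.
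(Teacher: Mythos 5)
Your proof is correct, and it takes a genuinely different route from the paper. The paper argues by induction on the number of entries: given a tuple, it perturbs the entries one at a time, using that $\Q(b_2,\dots,b_k)$ is countable and hence so is its algebraic closure $K$, so every neighbourhood of the remaining coordinate contains a point outside $K$; countability of algebraic closures of countable fields is the only ingredient. You instead eliminate the dependent coordinate via $X_m\mapsto c-\sum_{i<m}X_i$ (where, exactly as you flag, transcendence of $c$ ensures a nonzero $p\in\Q[X_1,\dots,X_m]$ stays nonzero after substitution), write the complement of $\As_I$ inside $\As^c$ as a countable union of zero sets of nonzero real polynomials in $m-1$ variables, and conclude by Baire category or a measure-zero argument. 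What each buys: your version confronts the constraint $\Sigma\A=c$ head-on (the paper's induction, as written, perturbs tuples freely in $\R^k$ and leaves the restriction to the hyperplane implicit; making it precise amounts to perturbing only $m-1$ coordinates and defining the last by the sum, i.e.\ essentially your bookkeeping), and it gives the stronger conclusions that $\As_I$ is comeager and of full measure; the paper's version is more elementary and constructive, needing no Baire/measure machinery. Two small touch-ups to yours: $\As^c$ is a closed simplex rather than "(an open subset of) an affine space", since area assignments take values in $\R_{\geq 0}$ — this is harmless because every relatively open subset of the full-dimensional simplex in the hyperplane has positive $(m-1)$-dimensional measure, so nowhere-density/measure zero of the exceptional union still yields density in all of $\As^c$; and the phrase "up to clearing" is unnecessary, as the substitution introduces no denominators — it is simply the injective homomorphism $\Q[X_1,\dots,X_m]\to\Q[c][X_1,\dots,X_{m-1}]$ induced by a linear change of variables. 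Your closing remark that transcendence of $c$ is necessary for $\As_I\neq\emptyset$ is also correct.
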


\begin{proof}
We show by induction that the set of algebraically independent $k$-tuples is dense.
Our proof is built upon the fact that  the algebraic closure of a countable field is countable \cite[p. 343, Cor. B-2.41]{rotmanAlgebra}.

For the induction base, we consider $k=1$. Since the algebraic closure of~$\Q$ is countable, its complement $\R\setminus \Q$ is dense in $\R$. 

Now we consider the induction step from $k-1$ to $k$. In order to show that the algebraically independent $k$-tuple are dense, it suffices to show that for each $a$ in $\R^n$ and each $\varepsilon$-ball $B$ of $a$, there exists an algebraically independent $b$ in $B$.
By induction hypothesis, we find a $(b_2,\dots,b_k)$ with algebraically independent entries, that is arbitrarily close to $(a_2,\dots,a_k)$. Let $K$ denote the algebraic closure of $\Q(b_2,...,b_n)$, the smallest field containing $\Q$ and $\{b_2,\dots,b_k\}$. As a rational function field over $\Q$, the field $\Q(b_2,...,b_n)$  is countable. 
Thus, since the algebraic closure of a countable field is countable, $K$ is countable. Thus in each open neighborhood of $a_1$, there exists a $b_1$ in the complement of $K$. Therefore, in each $\varepsilon$-ball of $a=(a_1,\dots,a_n)$ there exists an algebraically independent $b=(b_1,\dots,b_n)$. 
\end{proof}

Now, we are ready to prove the main lemma.
\almostReal*
\begin{proof}
Let $c$ be transcendental. We show that the claim holds for $\As_D:=\As_I$.
Thus, we consider an arbitrary $\A\in \As_I$ and think of $\A$ as an \emph{abstract} area assignment, where the prescribed areas are still variables. 
The idea of the proof is simple. Given a placement of $v_1, \dots, v_{i-1}$, we want to insert $v_i$ by \cref{obs:uniquePos}. Thus, we need to guarantee that the predecessors are not collinear.

We rename the vertices $v_1,v_2,v_3,v_4$ such that the triangle $v_1v_3v_2$ and $v_1v_2v_4$ are ccw inner and cw outer faces.
For simplicity, we can think of $f_0$ as being the outer face. However, the construction works in all settings.

We place $v_1$ at $(0,0)$, $v_2$ at $(0,1)$ and set $y_3:=-\A(f_0)$; recall that for the outer face $f_o$ it holds that $\A(f_o):=-\Sigma \A$. We use the freedom to specify $x_3$ at a later state.  
This guarantees property (i).
Furthermore, for $a:=\A(v_1v_2v_4)$ we place $v_4$ at $(x_4,a)$. Consequently, the face area of the triangle $v_1v_2v_4$ is realized  for all choices of $x_4$.


For property (ii), we show that for all but finitely many values of $x_4$,
we obtain an almost $\A$-realizing vertex placement $D(x_4)$. 
By \cref{lem:PolyMethodInsertion}, the three predecessors of $v_i$ can be named $(p_\textsc{f},p_\textsc{m},p_\textsc{l})$ such that 
$p_\textsc{f}p_\textsc{m}v_i$ and $p_\textsc{m}p_\textsc{l}v_i$ are ccw inner or cw outer faces of $T$.
By \cref{obs:uniquePos}, the vertex coordinates of $v_i$ follow directly from its three predecessors $p_\textsc{f},p_\textsc{m},p_\textsc{l}$ -- unless these are collinear.  
Denoting the coordinates of $p_i$ by $(x_i,y_i)$  and solving the area equations yields the following coordinates of $v_i$:
\begin{linenomath}
\begin{align}\label{eq:1}
 x_i&=x_\textsc{m}+\frac{a(x_\textsc{l} - x_\textsc{m})-b(x_\textsc{m} - x_\textsc{f})}
    { x_\textsc{f} (y_\textsc{m}-y_\textsc{l} )+x_\textsc{m}(y_\textsc{l} -y_\textsc{f}) + x_\textsc{l} (y_\textsc{f}-y_\textsc{m})}
\end{align}
\begin{align}\label{eq:2}
y_i&=y_\textsc{m}+\frac{ a(y_\textsc{l} - y_\textsc{m})-b(y_\textsc{m} - y_\textsc{f})}
    { x_\textsc{f} (y_\textsc{m}-y_\textsc{l} )+x_\textsc{m}(y_\textsc{l} -y_\textsc{f}) + x_\textsc{l} (y_\textsc{f}-y_\textsc{m})}
\end{align}
\end{linenomath}
Note that the predecessors are collinear if and only if the denominators of \cref{eq:1,eq:2} vanish.

Assume for now, that we are considering a  position for $x_4$ such that no triple of predecessors becomes collinear. For $i=5,\dots,n$, we place $v_i$ according to \cref{obs:uniquePos} and satisfy two new area equations.
Together with the realized face area of the triangles $v_1v_2v_4$ and $v_1v_2v_3$, the number of realized face areas is  
\[2(n-4)+2=2n-6.\]
Consequently, all but two face areas, namely $f_a$ and $f_b$, are realized and $D(x_4)$ is an \arVP.
Let $\B(\A)$ denote the set of all $x_4$ where a triple of predecessors becomes collinear.
We postpone to show that $\B(\A)$ is finite. It is sufficient to show that the denominator of each vertex is not the 0-polynomial. We prove this simultaneously with property (iii).
\smallskip

Now, we show (ii). For each vertex $v_i$, we wish to represent its coordinates $(x_i,y_i)$ in $D(x_4)$ by rational functions with a common denominator. Specifically, we aim for polynomials $\Nx_i,\Ny_i,\D_i$ in $x_4$, which are different from the 0-polynomial, such that
\[x_i=\frac{\Nx_i}{\D_i} \text{\qquad\ and \ \qquad} y_i=\frac{\Ny_i}{\D_i}.\]
Moreover, we assume that the leading coefficient of $\D_i$ is 1.
We show the existence of such a representation by induction. Hence assume that we have such a representation of $v_1,\dots,v_{i-1}$. By the placement of the initial vertices, it holds that   $\D_i:=1$ for all $i\in[4]$.

Now, we consider the vertex $v_i$ with $i>4$.
By \cref{lem:PolyMethodInsertion}, we denote the three predecessors of $v_i$ by $p_\textsc{f},p_\textsc{m},p_\textsc{l}$ such that the triangles 
$p_\textsc{f}p_\textsc{m}v_i$ and $p_\textsc{m}p_\textsc{l}v_i$ are ccw inner or cw outer faces of $T$; we call the prescribed face areas of the two triangles $a_i$ and $b_i$, respectively. \cref{eq:1,eq:2}  yield the coordinates of vertex $v_i$ in $D(x_4)$. Since we will aim for the fact, that the representation is \emph{\crr}, that is the polynomials share no common real root, we are already here more careful. 

For the later argument, it is convenient to consider $v_5$ explicitly. 
By the 4-connectedness of $T$, neither $v_3$ nor $v_4$ is the middle predecessor of $v_5$. Thus, by symmetry, we may assume that   $\pred(v_5)=\{v_1,v_4,v_3\}$ and \cref{eq:1,eq:2} simplify to 
\begin{linenomath}
\begin{align}\label{eq:v5}
  x_5=\frac{a_5x_4+b_5x_3}{y_3x_4-ax_3} \qquad \text{ and } \qquad
  y_5=\frac{aa_5+b_5y_3}{y_3x_4-ax_3}. 
\end{align}
\end{linenomath}

Note that for $x_3=0$, the denominators of $x_5$ and $y_5$ would vary in a \crr representation. Thus, for $x_3\neq 0$, we define $\Nx_5:=x_4a_5+b_5x_3$, $\Ny_5:=aa_5+b_5$, $\D_5:=y_3x_4-ax_3$. Clearly, none of them is the 0-polynomial. 

Now, we consider the induction step for vertex $v_i$ with $i>5$. Note that, due to the 4-connectedness,  $v_1$ is not a predecessor of $v_i$.
\cref{eq:1,eq:2}  yield the coordinates of vertex $v_i$ in the \arVP. 
By assumption, $\D_\textsc{f}\cdot\D_\textsc{m}\cdot\D_\textsc{l}$ is not the 0-polynomial, since none of its factors is the 0-polynomial. Therefore, we may 
expand the right term by $\D_\textsc{f}\cdot\D_\textsc{m}\cdot\D_\textsc{l}$. Using the representations $x_j=\nicefrac{\Nx_j}{\D_j}$ and $y_j=\nicefrac{\Ny_j}{\D_j}$ for $j\in\{\textsc{f,m,l}\}$ yields the following identities:
\begin{linenomath}
\begin{align}
 x_i&=\frac{\Nx_\textsc{m}}{\D_\textsc{m}}+\frac{\D_\textsc{m}(a_i\Nx_\textsc{l}\D_\textsc{f}+b_i \Nx_\textsc{f}\D_\textsc{l}) - (a_i+b_i)\Nx_\textsc{m}\D_\textsc{f}\D_\textsc{l}}{\tilde \D_i}\label{eq:X}\\[6pt]
  y_i&=\frac{\Ny_\textsc{m}}{\D_\textsc{m}}+\frac{\D_\textsc{m}( a_i\Ny_\textsc{l}\D_\textsc{f}+b_i \Ny_\textsc{f}\D_\textsc{l} ) -(a_i+b_i)\Ny_\textsc{m}\D_\textsc{f}\D_\textsc{l}}%
    { \tilde \D_{i}}\label{eq:Y}
\end{align}
\begin{equation}\label{eq:tilde}
 \text{with }\tilde \D_{i}:=\Nx_\textsc{f} (\Ny_\textsc{m}\D_\textsc{l}-\Ny_\textsc{l}\D_\textsc{m} )+\Nx_\textsc{m}(\Ny_\textsc{l}\D_\textsc{f} -\Ny_\textsc{f}\D_\textsc{l}) + \Nx_\textsc{l} (\Ny_\textsc{f}\D_\textsc{m}-\Ny_\textsc{m}\D_\textsc{f}).
\end{equation}
\end{linenomath}
Note that the denominators of $x_i$ and $y_i$ are identical and the numerators are symmetric in the $x$- and $y$-coordinates of their predecessors, respectively. Hence, for $\circ\in\{x,y\}$, we define $\N^\circ_{i}$ to unify the notation.

We wish to argue that $\tilde \D_i$ is not the 0-polynomial. The existence of distinct $j,k\in\{\textsc{f,m,l}\}$ such that neither $\N^x_j$ nor $\N^y_k$ are the 0-polynomial guarantees that one summand of $\tilde \D_i$ does not vanish. Note that such a pair does always exist since there are only two polynomials which might be the 0-polynomial, namely $\Ny_2$ and $\Nx_3$; here we use the fact that  $v_1$ is no predecessor of any $v_i$ with $i>5$. 


We now expand to find the desired representations of $x_i$ and $y_i$. The denominator is the least common multiple of $\D_\textsc{m}$ and $\tilde\D_i$, none of which is the 0-polynomial. 
Thus, we define $E_i$ and $F_i$ to be \crr polynomials such that
\begin{linenomath}
\begin{equation}\label{eq:EF}
 \D_\textsc{m} E_i=\tilde \D_{i}F_i.
\end{equation}
\end{linenomath}
In \cref{eq:X,eq:Y}, we expand
the left summand by $E_i$ and the right summand by $F_i$. Then the coordinates of vertex $v_i$ can be expressed by 
\begin{linenomath}
\begin{align}\label{eq:NomDen}
 \N^\circ_i&:=F_i\D_\textsc{m}(a_i\N^\circ_\textsc{l}\D_\textsc{f}+b_i \N^\circ_\textsc{f}\D_\textsc{l}) + \N^\circ_\textsc{m}(E_i-(a_i+b_i)\D_\textsc{f}\D_\textsc{l}F_i)\\\label{eq:NomDen2}
 \D_i&:=\D_\textsc{m} E_i=\tilde \D_{i}F_i 
 \end{align}
\end{linenomath}
Thus, each coordinate of $v_i$ is a rational function in $x_4$, where the coefficients are polynomials in 
$\A$. Due to the algebraically independence of $\A$, the coefficients cannot vanish and $\D_i$ is not the 0-polynomial. Using the fact that  $\D_\textsc{f}\cdot\D_\textsc{m}\cdot\D_\textsc{l}$ is not the 0-polynomial, any $\N^\circ_j$ with $j\in\{\textsc{f,m,l}\}$ which is not the 0-polynomial, certifies that $\N_i^\circ$ is not the zero polynomial.  Recall that we already guaranteed the existence of distinct $j,k\in\{\textsc{f,m,l}\}$ such that neither $\N^x_j$ nor $\N^y_k$ are the 0-polynomial. Thus, such a pair also implies that $\N_i^\circ$ is not the zero polynomial for all choices of $\circ\in\{x,y\}$. Consequently, we have proved property (ii) and (iii).
\smallskip

Moreover, (iii) immediately implies (iv): The area of face $f_a$ can be expressed as the determinant of its three vertex coordinates. Thus, if the vertex coordinates are rational functions in $x_4$, so is  the area of face $f_a$.
\end{proof}

We interpret $\f$ as a rational function in $x_4$ whose coefficients depend on $\A$. 

\subsection{Almost surjectivity and area-universality}\label{app:B2}
We start by proving \cref{lem:surjective}.
\surjective*
\begin{proof}
 Let $c\in\mathbb R\backslash \{0\}$ and consider $g\colon \R\to\R, \ g(x):=p(x)-c q(x)$. The leading coefficients of the polynomials $p$ and $c\cdot q$ cancel for at most one choice of $c$. For all other values, the degree of $g$ is 
 $\deg g= \max\{\deg p, \deg q\}$ and by assumption odd.  
 Consequently, as a real polynomial of odd degree, $g$ has a real root $\tilde x$. If $q(\tilde x)\neq 0$, then \[g(\tilde x)=0\iff f(\tilde x)=c.\]
 Suppose $q(\tilde x)=0$. Then, $q(\tilde x)=0=g(\tilde x)=p(\tilde x)$. Hence $\tilde x$ is a zero of both $p$ and $q$. A contradiction to the assumption that $p$ and $q$ are \crr. 
\end{proof}


Now, we aim to prove \cref{thm:everyEmbedding} which relies on several interesting properties of algebraically independent area assignments. In particular, it remains to show \cref{prop:sameDegM}. 
%

\sameDeg*
\begin{proof}
We assume that $v_1v_2v_3$ is the cw outer face of $T$ and $w_1w_2w_3$  is the cw outer face of $T'$.
Then $v_1v_2v_3$ is a ccw inner face of $T'$ and $w_1w_2w_3$ is a ccw inner face of $T$.
Compared to $T$,  the orientation of the faces $v_1v_2v_3$ and $w_1w_2w_3$ in~$T'$ changes; while the orientation of all other faces remains:
This is easily seen when considering the drawings on the sphere; which are obtained by one-point compactification of some point in the respective outer faces. Due to the 3-connectedness and  Whitney’s uniqueness theorem, the drawings $T$ and $T'$ on the sphere are equivalent \cite{whitney1933-2}. Moreover, in the drawing of the sphere $v_1v_2v_3$ and $w_1w_2w_3$ are ccw. 
Then choosing one face as the outer face and  applying a stereographic projection of the punctured sphere where a point of the outer face is deleted, results in a drawing in the plane where 
all faces remain ccw while the outer face becomes cw.


With respect to the area assignment $\A$ of $T$,  $v_1v_2v_3$ is assigned to the total area $-\Sigma \A$ and $w_1w_2w_3$ to some value~$c$.
As an intermediate step we consider the area assignment~$\A''$ of $T'$ where all area assignments remain but 
$w_1w_2w_3$ obtains the total area $-\Sigma \A$ and $v_1v_2v_3$ some value $c$. Clearly, $\A''$ is algebraically independent since $\A$ is. Fortunately, the negative sign accounts for the fact that the orientation changes; constructing realizing drawings by \cref{lem:almostReal} $T$ and $T'$ are treated by the very same procedure. 
Thus, $\f'_{\A''}$ can be obtained from $\f_\A$ by swapping all occurrences of   $c$ and $-\Sigma \A$. 
Consequently, the degrees of the denominator and numerator polynomials of the last face functions $\f_\A$ and $\f'_{\A''}$ coincide. Moreover, by \cref{obs:sameDegreeM}, the degrees of $\f'_{\A''}$ and $\f'_{\A'}$ coincide.
\end{proof}

Consequently, \cref{lem:surjective}, \cref{prop:sameDegM}, and \cref{thm:area-univ} imply \cref{thm:everyEmbedding}.
\everyEmbedding*
\begin{proof}
If the  \lastfaceFun $\f$ of $T$ has odd max-degree for some $\A\in \As_I$, then this holds true for all area assignments in $\As_I$ by  \cref{obs:sameDegreeM}. Consequently, \cref{lem:surjective} guarantees that 
the \lastfaceFun $\f(\cdot,\A)$ is almost surjective for all $\A\in\As_I$. Since $\As_I$ is dense in $\As$ as proved in \cref{lem:dense}, \cref{thm:area-univ} implies the area-universality of $T$.

For every other plane graph $T'\in[T]$, the \lastfaceFun $\f'$ has also odd maximum degree by \cref{prop:sameDegM}. Here we used the fact that~$\f'$ can be obtained from~$\f$ by exchanging two algebraically independent numbers. By the same reasoning,  $\f'$ is also \crr. Thus, the above argument shows that $T'$ is area-universal.
\end{proof}
\newpage
\section{Proofs of \cref{sec:applications}}\label{app:applications}
Here we present the omitted proofs of \cref{sec:applications}. 
We start by helpful lemmas to analyze the coordinate functions and their degrees.

\subsection{Analyzing the Coordinates and their Degrees}\label{sec:coordinates}
Throughout this section, let $T$ be a plane 4-connected triangulation with a \pOrder and $\A$ an algebraically independent area assignment.  
We use \cref{lem:almostReal} to obtain an  almost realizing drawing $D(x_4)$ and want to use \cref{lem:surjective} to guarantee almost surjectivity of the \lastfaceFun $\f$. Thus, we are interested in the max-degree of $\f$. 

As shown in \cref{lem:almostReal},  we can represent the coordinates $(x_i,y_i)$ of each vertex $v_i$ and the \lastfaceFun by rational functions. Specifically, we have a representation of $x_i$ and $y_i$ by polynomials $\Nx_i,\Ny_i,\D_i$ in $x_4$ such that
\[x_i=\frac{\Nx_i}{\D_i} \text{\quad and \quad} y_i=\frac{\Ny_i}{\D_i}.\]
Due to \cref{lem:surjective}, we aim for the fact that $\Nx_i$ and $\Ny_i$ are \crr with $\D_i$ and are interested in their degrees. 
As before, we denote the degree of a polynomial~$p$ by~$|p|$.
Moreover, we say that a polynomial $p(x_1,\dots,x_k)$ \emph{depends} on $x_j$ if and only if 
$p(x_1,\dots,x_j,\dots,x_k)\neq p(x_1,\dots,0,\dots,x_k)$.
\cref{eq:NomDen,eq:NomDen2} show:
\begin{obs}\label{obs:wiggle}
For $i\in\{4,\dots, n\}$, $\N^\circ_i$ depends on $a_i$ and $b_i$, while $\D_i$ does not.
\end{obs}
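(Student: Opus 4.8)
\textbf{Plan for proving \cref{obs:wiggle}.}
The claim concerns the rational-function representation of the vertex coordinates constructed in the proof of \cref{lem:almostReal}, namely the polynomials $\Nx_i,\Ny_i,\D_i$ in $x_4$ whose coefficients are polynomials in the area values, and the assertion is that for $i\in\{4,\dots,n\}$ the numerator polynomials $\N^\circ_i$ depend on the two areas $a_i,b_i$ of the faces $p_\textsc{f}p_\textsc{m}v_i$ and $p_\textsc{m}p_\textsc{l}v_i$, while the denominator $\D_i$ does not. The plan is a straightforward induction on $i$ that simply reads off the explicit formulas \cref{eq:NomDen,eq:NomDen2} established in the proof of \cref{lem:almostReal}, together with the base cases $\D_i=1$ for $i\le 4$ and the explicit formulas \cref{eq:v5} for $v_5$. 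The crucial structural input is that the areas $a_i,b_i$ are \emph{fresh} variables: since $\A$ is algebraically independent and $a_i=\A(p_\textsc{f}p_\textsc{m}v_i)$, $b_i=\A(p_\textsc{m}p_\textsc{l}v_i)$ are the face areas of the two new faces created when inserting $v_i$, they do not occur among the coefficients of any earlier polynomial $\N^\circ_j,\D_j$ for $j<i$ (those depend only on areas of faces of $T_{i-1}$, which are all distinct faces).

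First I would dispatch the base cases: for $i\le 4$ one has $\D_i:=1$ (no dependence on anything) and $\Nx_4=x_4$, $\Ny_4=a$ where $a=\A(v_1v_2v_4)=b_4$ in the relevant naming, so the numerator depends on its ``own'' area; for $v_5$, formulas \cref{eq:v5} exhibit $\Nx_5,\Ny_5$ as visibly linear (hence depending) in $a_5,b_5$ while $\D_5=y_3x_4-ax_3$ contains neither $a_5$ nor $b_5$. Then for the inductive step with $i>5$: the formula \cref{eq:NomDen2} gives $\D_i=\D_\textsc{m}E_i=\tilde\D_i F_i$, and inspecting \cref{eq:tilde} together with the induction hypothesis shows $\tilde\D_i$ is built only from $\N^\circ_\textsc{f},\N^\circ_\textsc{m},\N^\circ_\textsc{l},\D_\textsc{f},\D_\textsc{m},\D_\textsc{l}$, none of which depends on $a_i$ or $b_i$ (their coefficients involve only earlier face areas, and $a_i,b_i$ are new); the polynomials $E_i,F_i$ defined by the crr-reduction of $\D_\textsc{m}E_i=\tilde\D_i F_i$ are likewise free of $a_i,b_i$, so $\D_i$ does not depend on $a_i,b_i$. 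For the numerator, \cref{eq:NomDen} exhibits $\N^\circ_i$ as an affine-linear expression in $a_i$ and $b_i$ with coefficients that are (by the hypothesis) independent of $a_i,b_i$; it remains only to check that the coefficient of $a_i$ (resp.\ $b_i$) is not the zero polynomial, so that genuine dependence holds --- this follows because $\Nx_\textsc{l}\D_\textsc{f}$ (resp.\ $\Nx_\textsc{f}\D_\textsc{l}$) and $\Nx_\textsc{m}\D_\textsc{f}\D_\textsc{l}$ are nonzero polynomials whose $a_i b_i$-free coefficients cannot cancel, again invoking the algebraic independence of $\A$ exactly as in the proof of \cref{lem:almostReal}.

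The only mildly delicate point --- and the one I would be most careful about --- is ruling out accidental cancellation: a priori the $a_i$-term in \cref{eq:NomDen} could vanish identically even though each factor is nonzero, if the contributions combined destructively. Here one uses that the coefficients are polynomials in the algebraically independent areas $\{\A(f)\}_{f\in F'}$ over $\Q$, so a polynomial identity among them forces a polynomial identity over $\Q$, which can be excluded by evaluating at the formal level (treating $a_i$ as an indeterminate). In practice this is no harder than the analogous non-vanishing arguments already carried out for $\D_i$ and $\N^\circ_i$ in the proof of \cref{lem:almostReal}, so I expect the whole proof to be short; the substance is really just bookkeeping on \cref{eq:NomDen,eq:NomDen2,eq:tilde}.
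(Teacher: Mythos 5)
Your proposal is correct and follows essentially the same route as the paper, which presents the observation as an immediate consequence of \cref{eq:NomDen,eq:NomDen2} (together with \cref{eq:tilde}, \cref{eq:v5} and the base placement): the recursion exhibits $\N^\circ_i$ as affine in the fresh variables $a_i,b_i$, while $\tilde\D_i$, $E_i$, $F_i$, and hence $\D_i=\D_\textsc{m}E_i$, are built only from data of $T_{i-1}$ and so cannot involve $a_i,b_i$. Your additional care about non-cancellation of the $a_i$- and $b_i$-coefficients (via algebraic independence of $\A$) goes beyond what the paper records explicitly and does no harm.
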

In order to study their degree, we define  $d_i^\circ:=|\N^\circ_i|-|\D_i|$.
\begin{lemma}\label{lem:degrees}
Let $v_i$ be a vertex with the three predecessors $p_\textsc{f},p_\textsc{m},p_\textsc{l}$ in \Po.
For the vertex coordinates of $v_i$ in $D(x_4)$, it holds that the (not necessarily \crr) polynomials $\N^\circ_i,\D_i,\tilde \D_i$, defined in \cref{eq:tilde,eq:NomDen,eq:NomDen2}, have the following degrees:
\begin{linenomath}
\begin{align*}
 |\N^\circ_i|=&|\D_\textsc{m}|+|\D_\textsc{f}|+|\D_\textsc{l}|+|F_i|\\
 &+\max\big\{d^\circ_\textsc{l},d^\circ_\textsc{f}, d^\circ_\textsc{m}+\max\{|E_i|-|\D_\textsc{f}|-|\D_\textsc{l}|-|F_i|,0\}\big\}\\
 |\D_i|=&|\D_\textsc{m}|+|E_i|=|\tilde \D_{i}|+|F_i|\\
 |\tilde \D_{i}|=&|\D_\textsc{f}|+|\D_\textsc{m}|+|\D_\textsc{l}|
 \\&+\max\big\{
			   d^x_\textsc{f}+\max\{d^y_\textsc{m},d^y_\textsc{l}\},
			   d^x_\textsc{m}+\max\{d^y_\textsc{l},d^y_\textsc{f}\},
			   d^x_\textsc{l}+\max\{d^y_\textsc{f},d^y_\textsc{m}\}\big\}
\end{align*} 
\end{linenomath}
\end{lemma}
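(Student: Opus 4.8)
The plan is to prove \cref{lem:degrees} by direct computation from the defining formulas \cref{eq:NomDen,eq:NomDen2,eq:tilde,eq:EF}, tracking which monomial of highest degree in $x_4$ survives in each expression. The key preliminary observation is that, because $\A$ is algebraically independent over $\Q$, no cancellation of leading coefficients can occur: every term that is formally present with a nonzero rational-in-$\A$ coefficient actually contributes, so the degree of a sum is exactly the maximum of the degrees of its summands, and the degree of a product is exactly the sum of the degrees. This is where \cref{obs:wiggle} is crucial — since $\N^\circ_i$ genuinely depends on $a_i,b_i$ (fresh transcendentals) while $\D_i$ does not, the two pieces of \cref{eq:NomDen} cannot conspire to drop the leading term. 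I would state this non-cancellation principle explicitly at the start of the proof and invoke it at each step.

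First I would compute $|\tilde\D_i|$ from \cref{eq:tilde}. Writing $x_j = \Nx_j/\D_j$, $y_j = \Ny_j/\D_j$ and clearing the common denominator $\D_\textsc{f}\D_\textsc{m}\D_\textsc{l}$, the three summands of $\tilde\D_i$ become (up to the common factor) $\Nx_\textsc{f}(\Ny_\textsc{m}\D_\textsc{l}-\Ny_\textsc{l}\D_\textsc{m})$ and its two cyclic analogues. The degree of the $j$-th summand is $|\D_\textsc{f}|+|\D_\textsc{m}|+|\D_\textsc{l}|$ plus $d^x_\textsc{f}+\max\{d^y_\textsc{m},d^y_\textsc{l}\}$ (and cyclically), using $d^\circ_j = |\N^\circ_j|-|\D_j|$ and the fact that the inner difference $\Ny_\textsc{m}\D_\textsc{l}-\Ny_\textsc{l}\D_\textsc{m}$ has degree exactly $\max\{|\Ny_\textsc{m}|+|\D_\textsc{l}|,|\Ny_\textsc{l}|+|\D_\textsc{m}|\}$ — again by non-cancellation. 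Taking the max over the three summands yields the stated formula for $|\tilde\D_i|$. Then $|\D_i|=|\D_\textsc{m}|+|E_i|=|\tilde\D_i|+|F_i|$ is immediate from \cref{eq:NomDen2} and the \crr{}-reduction identity $\D_\textsc{m}E_i=\tilde\D_iF_i$ of \cref{eq:EF} (the degrees of the two sides of that identity agree tautologically).

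Finally I would read off $|\N^\circ_i|$ from \cref{eq:NomDen}. The first summand $F_i\D_\textsc{m}(a_i\N^\circ_\textsc{l}\D_\textsc{f}+b_i\N^\circ_\textsc{f}\D_\textsc{l})$ has degree $|F_i|+|\D_\textsc{m}|+\max\{|\N^\circ_\textsc{l}|+|\D_\textsc{f}|, |\N^\circ_\textsc{f}|+|\D_\textsc{l}|\} = |F_i|+|\D_\textsc{m}|+|\D_\textsc{f}|+|\D_\textsc{l}|+\max\{d^\circ_\textsc{l},d^\circ_\textsc{f}\}$; the second summand $\N^\circ_\textsc{m}(E_i-(a_i+b_i)\D_\textsc{f}\D_\textsc{l}F_i)$ has degree $|\N^\circ_\textsc{m}| + \max\{|E_i|, |\D_\textsc{f}|+|\D_\textsc{l}|+|F_i|\}$, which rewritten using $|E_i| = |\tilde\D_i|+|F_i|-|\D_\textsc{m}|$ and $d^\circ_\textsc{m}=|\N^\circ_\textsc{m}|-|\D_\textsc{m}|$ becomes $|\D_\textsc{m}|+|\D_\textsc{f}|+|\D_\textsc{l}|+|F_i| + d^\circ_\textsc{m} + \max\{|E_i|-|\D_\textsc{f}|-|\D_\textsc{l}|-|F_i|,\,0\}$. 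Taking the maximum of the two summand-degrees gives exactly the claimed expression for $|\N^\circ_i|$; the presence of the fresh transcendentals $a_i,b_i$ in both summands (and their absence in $\D_i$) guarantees the max is attained without cancellation. The main obstacle I anticipate is purely bookkeeping: one must be careful that the degree comparisons inside the nested maxima are valid — in particular that $|E_i|$ and $|F_i|$, which are only implicitly defined through \cref{eq:EF}, can be eliminated consistently — and that the base case $v_5$ (where $\D_5$ has a nontrivial degree and $v_1$ drops out of all later predecessor triples) is compatible with the recursion. No single step is conceptually hard; the care lies in invoking algebraic independence at precisely the right places so that "degree of a sum $=$ max of degrees" is never false.
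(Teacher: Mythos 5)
Your proposal is correct and follows essentially the same route as the paper's own (much terser) proof: a direct degree count from \cref{eq:tilde,eq:NomDen,eq:NomDen2,eq:EF} using $|p\cdot q|=|p|+|q|$ and $|p+q|=\max\{|p|,|q|\}$, with algebraic independence of $\A$ (together with \cref{obs:wiggle}) ruling out cancellation of leading coefficients. The paper merely states these two rules and the non-cancellation principle and leaves the bookkeeping implicit, which you carry out explicitly and correctly.
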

\begin{proof}
 We need to determine the degrees of the polynomials in \cref{eq:tilde,eq:NomDen,eq:NomDen2}.
 Here we use the fact that for all polynomials $p,q$ which are not the $0$-polynomial it holds 
$|p\cdot q|=|p|+|q|.$
With the convention that $|0|=-\infty$, the above identity also holds for the $0$-polynomial. Moreover, unless $|p|=|q|$ and the leading coefficients are canceling, it holds that
$|p+q|=\max\{|p|,|q|\}.$
By algebraic independence, cancellation of leading coefficients does not occur.
%
%
%
%
\end{proof}

However, in order to apply \cref{lem:surjective}, we also need that $\Nx_i$ and $\Ny_i$ are \crr with~$\D_i$. 
Therefore, we are interested in sufficient conditions.
\begin{restatable}{lemma}{nofactors}
\label{lem:nofactors}
 Let $\circ\in\{x,y\}$. Suppose $\N^\circ_j$ and $\D_j$ are \crr for all $j<i$.
 Then, $\N^\circ_i$ and $\D_i$ have a common zero $z$ if and only if the following properties hold 
 \begin{compactitem}
 \item[-] $z$ is a zero of $E_i$
 \item[-] $z$ is independent of $a_i$ and $b_i$
 \end{compactitem}
 and additionally 
 \begin{compactitem}
  \item [(i)] $z$ is a zero of at least two of  $\{\D_\textsc{f},\D_\textsc{m},\D_\textsc{l}\}$ or
  \item [(ii)] $z$ is a zero of both of  $\{\N^\circ_\textsc{l}\D_\textsc{m}-\N^\circ_\textsc{m}\D_\textsc{l},\N^\circ_\textsc{f}\D_\textsc{m}-\N^\circ_\textsc{m}\D_\textsc{f}\}$.
 \end{compactitem}
\end{restatable}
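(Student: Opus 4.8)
The plan is to read off a decomposition of $\N^\circ_i$ that is linear in $a_i$ and $b_i$, to use algebraic independence of $\A$ to show that a common real zero of $\N^\circ_i$ and $\D_i$ must be independent of $a_i,b_i$, and then to convert the remaining polynomial conditions into the stated alternative by repeatedly invoking the inductive \crr property of the predecessors. Concretely, starting from \cref{eq:NomDen,eq:NomDen2,eq:EF}, collecting the coefficients of $a_i$ and $b_i$ and using $\D_i=\D_\textsc{m}E_i$, I would write
\[
 \N^\circ_i \;=\; a_i\,C^\circ_a + b_i\,C^\circ_b + \N^\circ_\textsc{m}E_i,
 \qquad
 C^\circ_a:=F_i\D_\textsc{f}\bigl(\N^\circ_\textsc{l}\D_\textsc{m}-\N^\circ_\textsc{m}\D_\textsc{l}\bigr),
 \quad
 C^\circ_b:=F_i\D_\textsc{l}\bigl(\N^\circ_\textsc{f}\D_\textsc{m}-\N^\circ_\textsc{m}\D_\textsc{f}\bigr).
\]
By \cref{obs:wiggle} (which records that $\N^\circ_i$ involves $a_i,b_i$ while $\D_i$ does not), together with the fact that $E_i,F_i$ are determined by $\D_\textsc{m}$ and $\tilde\D_i$ (see \cref{eq:tilde}) --- neither of which involves $a_i$ or $b_i$ --- the polynomials $C^\circ_a$, $C^\circ_b$, $\N^\circ_\textsc{m}E_i$ and $\D_i$ are all free of $a_i,b_i$; note also that $\N^\circ_\textsc{l}\D_\textsc{m}-\N^\circ_\textsc{m}\D_\textsc{l}$ and $\N^\circ_\textsc{f}\D_\textsc{m}-\N^\circ_\textsc{m}\D_\textsc{f}$ are precisely the polynomials appearing in condition~(ii).

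Next I would show that a common real zero $z$ of $\N^\circ_i$ and $\D_i$ is forced to be independent of $a_i$ and $b_i$. Since $\D_i$ has coefficients that are polynomials in the prescribed areas other than $a_i,b_i$, the root $z$ is algebraic over the field $K$ generated over $\Q$ by those areas, and hence $C^\circ_a(z),C^\circ_b(z),(\N^\circ_\textsc{m}E_i)(z)$ all lie in $\overline K$. Then $a_i\,C^\circ_a(z)+b_i\,C^\circ_b(z)+(\N^\circ_\textsc{m}E_i)(z)=0$ is a polynomial relation in $a_i,b_i$ with coefficients in $\overline K$; as $\A$ is algebraically independent, $a_i$ and $b_i$ are algebraically independent over $\overline K$, so this relation is trivial, i.e.\ $C^\circ_a(z)=C^\circ_b(z)=(\N^\circ_\textsc{m}E_i)(z)=0$. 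This proves the necessity of independence from $a_i,b_i$ and reduces the statement to the equivalence: a real $z$ satisfies $C^\circ_a(z)=C^\circ_b(z)=(\N^\circ_\textsc{m}E_i)(z)=\D_i(z)=0$ if and only if $z$ is a zero of $E_i$ and either (i) or (ii) holds.

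The remaining equivalence is a case analysis driven by the inductive \crr hypothesis. From $\D_i=\D_\textsc{m}E_i$ and $(\N^\circ_\textsc{m}E_i)(z)=0$: if $E_i(z)\neq0$ then $\D_\textsc{m}(z)=\N^\circ_\textsc{m}(z)=0$, contradicting that $\N^\circ_\textsc{m},\D_\textsc{m}$ are \crr; hence $E_i(z)=0$, and then $F_i(z)\neq0$ since $E_i,F_i$ are \crr. Dividing $C^\circ_a(z)=C^\circ_b(z)=0$ by $F_i(z)$ leaves the simultaneous disjunctions ``$\D_\textsc{f}(z)=0$ or $(\N^\circ_\textsc{l}\D_\textsc{m}-\N^\circ_\textsc{m}\D_\textsc{l})(z)=0$'' and ``$\D_\textsc{l}(z)=0$ or $(\N^\circ_\textsc{f}\D_\textsc{m}-\N^\circ_\textsc{m}\D_\textsc{f})(z)=0$''. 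Distributing produces four cases; in each of the three mixed cases, substituting $\D_\textsc{f}(z)=0$ (respectively $\D_\textsc{l}(z)=0$) into the other factor and using that $\N^\circ_\textsc{f},\D_\textsc{f}$ (respectively $\N^\circ_\textsc{l},\D_\textsc{l}$) are \crr forces $\D_\textsc{m}(z)=0$, so two of $\D_\textsc{f},\D_\textsc{m},\D_\textsc{l}$ vanish --- that is condition~(i); the remaining case is exactly condition~(ii). The converse direction is obtained by running the same chain backwards: if $z$ is a zero of $E_i$ and (i) or (ii) holds, then $C^\circ_a(z)$, $C^\circ_b(z)$, $(\N^\circ_\textsc{m}E_i)(z)$ and $\D_i(z)$ all vanish, so $\N^\circ_i(z)=\D_i(z)=0$ regardless of the values of $a_i,b_i$.

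The step I expect to be the main obstacle is the reduction via algebraic independence: there is no obvious reason a common root should be robust to $a_i$ and $b_i$, and the key is to exploit that $\D_i$ --- hence every root of it --- does not involve $a_i,b_i$, so that the vanishing of $\N^\circ_i$ at $z$, being linear in $a_i,b_i$, is an algebraic relation over a field not containing $a_i,b_i$ and must be trivial. Everything else is bookkeeping: the rearrangement in the first paragraph and the four-way case split, which the \crr property of the three predecessors resolves cleanly.
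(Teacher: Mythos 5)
Your proposal is correct and follows essentially the same route as the paper's proof: both use the explicit formulas for $\N^\circ_i$ and $\D_i=\D_\textsc{m}E_i=\tilde\D_iF_i$, exploit that $\D_i$ (and hence any common root $z$) is free of $a_i,b_i$ so that algebraic independence forces the $a_i$-, $b_i$- and constant parts of $\N^\circ_i$ to vanish separately at $z$, and then resolve the resulting cases with the inductive \crr hypothesis to land in (i) or (ii), with the converse by direct substitution. The only difference is bookkeeping: you collect $\N^\circ_i$ linearly in $a_i,b_i$ up front and deduce $E_i(z)=0$ afterwards, whereas the paper first cases on which factor of $\D_\textsc{m}E_i$ vanishes and applies the independence argument within each case; the content is the same.
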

Due to their technicality, we have moved the proofs of \cref{lem:nofactors} and of the two following lemmas to \cref{sec:appendixUsefulLemmas}.
Now, we study a more specific situation which occurs for \graphClass and double stacking graphs.
\paragraph{Stacking on same angle}
Recall that, by \cref{lem:PolyMethodInsertion}, vertex $v_i$ is inserted in a 4-face of $T_{i-1}$.
In this section, we analyze the situation that in the \pOrder several vertices are repeatedly inserted in the same angle.  In particular, we say vertices $v_{i+1}$ and $v_{i+2}$ are \emph{stacked on the same angle} if their first and last predecessors are identical and the middle predecessor of $v_{i+2}$ is $v_{i+1}$. Specifically, $v_{i+1}$ and  $v_{i+2}$ have predecessors $(p_\textsc{f},v_i,p_\textsc{l})$ and $(p_\textsc{f},v_{i+1},p_\textsc{l})$, respectively. \cref{fig:stacking} illustrates two vertices which are stacked on the same angle. 

\begin{figure}[htb]
 \centering
 \includegraphics[page=3,scale=.75]{}
 \caption{Vertices $v_{i+1}$ and $v_{i+2}$ are stacked on the same angle.}
 \label{fig:stacking}
\end{figure}

\begin{restatable}{lemma}{sameAngle}\label{lem:sameAngle} 
 If $v_{i+1}$ and $v_{i+2}$ are stacked on the same angle in the \pOrder and $\N^\circ_{i+1}$ and $\D_{i+1}$ 
 are \crr, then it holds that
 \begin{linenomath}
 \begin{align*}
  E_{i+2}&=E_{i+1}-(a_{i+1}+b_{i+1})\D_\textsc{f}\D_\textsc{l}F_{i+1}\\
  F_{i+2}&=F_{i+1}.
 \end{align*}
 \end{linenomath}
\end{restatable}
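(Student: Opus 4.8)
The statement to prove is \cref{lem:sameAngle}: when $v_{i+1}$ and $v_{i+2}$ are stacked on the same angle, the ``correction polynomials'' $E$ and $F$ satisfy $E_{i+2}=E_{i+1}-(a_{i+1}+b_{i+1})\D_\textsc{f}\D_\textsc{l}F_{i+1}$ and $F_{i+2}=F_{i+1}$. The plan is to unwind the definitions from \cref{eq:tilde,eq:EF,eq:NomDen,eq:NomDen2} for the special case where the predecessors of $v_{i+2}$ are $(p_\textsc{f}, v_{i+1}, p_\textsc{l})$, i.e.\ the first and last predecessors coincide with those of $v_{i+1}$ (call them $p_\textsc{f},p_\textsc{l}$) while the middle predecessor of $v_{i+2}$ is $v_{i+1}$ itself. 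First I would write out $\tilde\D_{i+2}$ using \cref{eq:tilde} with $(\textsc{f},\textsc{m},\textsc{l}) = (p_\textsc{f}, v_{i+1}, p_\textsc{l})$, so that the numerator/denominator data of the middle predecessor are exactly $\Nx_{i+1},\Ny_{i+1},\D_{i+1}$.

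The key computational step is to show that $\tilde\D_{i+2} = \D_{i+1}\cdot(\text{something})$, more precisely that $\tilde\D_{i+2}$ is, up to the known factor structure, a multiple of $\D_{i+1}$ by the same ``reduced denominator'' that appeared one step earlier. Concretely: substituting $x_{i+1}=\Nx_{i+1}/\D_{i+1}$, $y_{i+1}=\Ny_{i+1}/\D_{i+1}$ and the representations of $p_\textsc{f},p_\textsc{l}$ into the collinearity determinant, one finds that $\tilde\D_{i+2}$ equals (up to clearing $\D_{i+1}$) the determinant measuring collinearity of $p_\textsc{f}, v_{i+1}, p_\textsc{l}$ in $D(x_4)$, which is governed by the lines through these three points. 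But $v_{i+1}$ itself was placed (via \cref{obs:uniquePos}) on the intersection of two lines through $p_\textsc{f}, p_\textsc{m}'$ and $p_\textsc{m}', p_\textsc{l}$ for the predecessor $p_\textsc{m}'=v_i$ of $v_{i+1}$ — and since $v_{i+1}$, $v_i$, $p_\textsc{f}$, $p_\textsc{l}$ are the four vertices of a fixed $4$-face, the line $p_\textsc{f} v_{i+1}$ and the line $v_{i+1}p_\textsc{l}$ have slopes that are determined (parallel to $p_\textsc{f}v_i$ resp.\ $v_i p_\textsc{l}$ shifted). The upshot is an identity expressing $\tilde\D_{i+2}$ in terms of $\tilde\D_{i+1}$ and $\D$-polynomials of the predecessors; I expect it to collapse to $\tilde\D_{i+2}F_{i+1} = \D_{i+1}\cdot(\text{the reduced denominator used for }v_{i+2})$, forcing $F_{i+2}=F_{i+1}$ by the uniqueness of the \crr{} (coprime) reduction in \cref{eq:EF}. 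Then $E_{i+2}$ is read off from $\D_{i+2}=\D_\textsc{m}E_{i+2}=\D_{i+1}E_{i+2}$ combined with the expanded numerator \cref{eq:NomDen}: comparing \cref{eq:NomDen} for $v_{i+1}$ and for $v_{i+2}$, the only change is that the middle predecessor's denominator $\D_\textsc{m}$ got replaced by $\D_{i+1}$ and the term $E$ gets decremented by exactly the ``$(a+b)\D_\textsc{f}\D_\textsc{l}F$'' contribution that was absorbed into $v_{i+1}$'s own numerator; unwinding this gives $E_{i+2}=E_{i+1}-(a_{i+1}+b_{i+1})\D_\textsc{f}\D_\textsc{l}F_{i+1}$.

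The main obstacle I anticipate is bookkeeping the coprimality/uniqueness carefully: \cref{eq:EF} defines $E_i,F_i$ only up to the requirement that they be \crr, so to pin down $E_{i+2},F_{i+2}$ exactly I must verify that the candidate expressions on the right-hand side are themselves \crr{} (this is where the hypothesis that $\N^\circ_{i+1}$ and $\D_{i+1}$ are \crr{} enters, presumably via \cref{lem:nofactors}), and that no spurious common factor is introduced or lost when passing from $\tilde\D_{i+1}$ to $\tilde\D_{i+2}$. A secondary subtlety is sign and orientation conventions for which of the two incident faces of $v_{i+2}$ is $a_{i+2}$ versus $b_{i+2}$, and ensuring the ``stacked on the same angle'' hypothesis is used in the form that makes $p_\textsc{f},p_\textsc{l}$ genuinely identical for both vertices (not merely a relabeling). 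Once the algebraic identity $\tilde\D_{i+2}=\D_{i+1}\cdot(\tilde\D_{i+1}/\D_{i+1}\text{-type quotient})$ is established, the rest is a direct substitution into \cref{eq:NomDen,eq:NomDen2} and matching terms, so I would present that identity as the crux and relegate the substitution to a short verification, deferring the fully expanded manipulation — as the paper does — to \cref{sec:appendixUsefulLemmas}.
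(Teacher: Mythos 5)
Your proposal follows essentially the same route as the paper's proof: expand $\tilde\D_{i+2}$ via \cref{eq:tilde} with $v_{i+1}$ as middle predecessor, substitute \cref{eq:NomDen} for $\N^\circ_{i+1}$, derive the identity $\D_{i+1}E_{i+2}=F_{i+1}\tilde\D_{i+2}$ (the paper establishes it in the form $\tilde\D_{i+2}=\tilde\D_{i+1}E_{i+2}$), and then identify $E_{i+2},F_{i+2}$ through the defining relation \cref{eq:EF} once the candidate pair is known to be \crr. Two minor corrections to your bookkeeping: the exact divisibility is by $\tilde\D_{i+1}$, not by $\D_{i+1}$ as your first phrasing suggests (your collapsed form $\tilde\D_{i+2}F_{i+1}=\D_{i+1}E_{i+2}$ is the right one), and the fact that the candidate pair is \crr follows directly from $F_{i+2}=F_{i+1}$ together with the \crr property of $E_{i+1},F_{i+1}$, rather than from \cref{lem:nofactors}.
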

For the proof of \cref{lem:sameAngle}, we refer to \cref{sec:appendixUsefulLemmas}.
For the degrees, we obtain the following expressions.
 \begin{restatable}{lemma}{degreesStacked}\label{lem:degreesStacked} 
 If $v_{i+1}$ and $v_{i+2}$ are stacked on the same angle in the \pOrder, and if $\N^\circ_{i+1}$,$\D_{i+1}$, and  $\N^\circ_{i+2}$, $\D_{i+2}$ are \crr. Then, for $M:=\max\{|E_{i+1}|,|\D_\textsc{f}|+|\D_\textsc{l}|+|F_{i+1}|)\}$ and $\circ\in\{x,y\}$ it holds that
 \begin{linenomath}
 \begin{align*}
 |\N^\circ_{i+2}|&=|\D_{i+1}|+M+ d^\circ_{i+1}\\
 |\D_{i+2}|&=|\D_{i+1}|+M.
 \end{align*}
 \end{linenomath}
 In particular, it holds that $d^\circ_{i+2}=d^\circ_{i+1}$.

\end{restatable}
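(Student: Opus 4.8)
\textbf{Proof plan for \cref{lem:degreesStacked}.}
The plan is to plug the conclusions of \cref{lem:sameAngle} into the degree formulas of \cref{lem:degrees} and simplify, using the fact that the polynomials involved are \crr so that no cancellation of leading coefficients occurs. Concretely, since $v_{i+1}$ and $v_{i+2}$ are stacked on the same angle, their first and last predecessors $p_\textsc{f},p_\textsc{l}$ are the same and the middle predecessor of $v_{i+2}$ is $v_{i+1}$. So when I apply \cref{lem:degrees} to $v_{i+2}$, the roles are: $\D_\textsc{m}=\D_{i+1}$, $\Nx_\textsc{m}=\Nx_{i+1}$, $\Ny_\textsc{m}=\Ny_{i+1}$, while $\D_\textsc{f},\D_\textsc{l},\Nx_\textsc{f},\Nx_\textsc{l},\Ny_\textsc{f},\Ny_\textsc{l}$ and the associated $d^\circ_\textsc{f},d^\circ_\textsc{l}$ are exactly the same as those appearing when $v_{i+1}$ was inserted. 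Thus $d^\circ_\textsc{m}=d^\circ_{i+1}$ in the formula for $v_{i+2}$.

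First I would compute $|\D_{i+2}|$. By \cref{lem:degrees} and \cref{lem:sameAngle}, $|\D_{i+2}|=|\D_\textsc{m}|+|E_{i+2}|=|\D_{i+1}|+|E_{i+2}|$, and $E_{i+2}=E_{i+1}-(a_{i+1}+b_{i+1})\D_\textsc{f}\D_\textsc{l}F_{i+1}$. Since $\A$ is algebraically independent the two terms cannot cancel at the top degree, so $|E_{i+2}|=\max\{|E_{i+1}|,\,|\D_\textsc{f}|+|\D_\textsc{l}|+|F_{i+1}|\}=:M$; here I also use $|a_{i+1}+b_{i+1}|=0$ (the $a_i,b_i$ are the prescribed areas, constants in $x_4$). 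Hence $|\D_{i+2}|=|\D_{i+1}|+M$, as claimed.

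Next I would compute $|\Nx_{i+2}|$ (the $y$-case is identical by symmetry of the formula in $x$ and $y$). Plugging $F_{i+2}=F_{i+1}$ and the predecessor identifications into the $\N^\circ_i$-formula of \cref{lem:degrees} gives
\begin{linenomath*}
\[
|\N^\circ_{i+2}|=|\D_{i+1}|+|\D_\textsc{f}|+|\D_\textsc{l}|+|F_{i+1}|+\max\bigl\{d^\circ_\textsc{l},\,d^\circ_\textsc{f},\,d^\circ_{i+1}+\max\{|E_{i+2}|-|\D_\textsc{f}|-|\D_\textsc{l}|-|F_{i+1}|,\,0\}\bigr\}.
\]
\end{linenomath*}
Now $|E_{i+2}|=M\ge|\D_\textsc{f}|+|\D_\textsc{l}|+|F_{i+1}|$, so the inner $\max$ equals $M-|\D_\textsc{f}|-|\D_\textsc{l}|-|F_{i+1}|$, and the bracket becomes $\max\{d^\circ_\textsc{l},\,d^\circ_\textsc{f},\,d^\circ_{i+1}+M-|\D_\textsc{f}|-|\D_\textsc{l}|-|F_{i+1}|\}$. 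Adding the front term $|\D_\textsc{f}|+|\D_\textsc{l}|+|F_{i+1}|$ back in and recalling $|\D_{i+1}|=|\D_\textsc{f}|+|\D_\textsc{m}|+|\D_\textsc{l}|+\cdots$ was built from the same predecessors, I would show (using $|\N^\circ_{i+1}|=|\D_{i+1}|+d^\circ_{i+1}$, i.e. the definition of $d^\circ_{i+1}$, together with the fact that $d^\circ_{i+1}\ge d^\circ_\textsc{f},d^\circ_\textsc{l}$ — which itself follows because $|E_{i+1}|\ge|\D_\textsc{f}|+|\D_\textsc{l}|+|F_{i+1}|$ in the formula that defined $\N^\circ_{i+1}$) that the maximum is attained by the last entry, yielding $|\N^\circ_{i+2}|=|\D_{i+1}|+M+d^\circ_{i+1}$. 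Subtracting, $d^\circ_{i+2}=|\N^\circ_{i+2}|-|\D_{i+2}|=d^\circ_{i+1}$.

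The main obstacle I anticipate is bookkeeping: correctly tracking which of the three "max" branches in the $\N^\circ$-formula is active, and justifying that the dominant branch is the middle-predecessor one. This hinges on the inequality $|E_{i+1}|\ge|\D_\textsc{f}|+|\D_\textsc{l}|+|F_{i+1}|$ (equivalently $M=|E_{i+1}|$ would simplify things, but even if $M$ comes from the other term the argument goes through as written) and on $d^\circ_\textsc{m}=d^\circ_{i+1}$ dominating $d^\circ_\textsc{f},d^\circ_\textsc{l}$, which I would extract from the previous step's application of \cref{lem:degrees} to $v_{i+1}$; all cancellations are ruled out by algebraic independence of $\A$. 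Everything else is routine substitution.
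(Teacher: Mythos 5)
Your overall strategy is the same as the paper's: apply \cref{lem:degrees} to the insertion of $v_{i+2}$ with middle predecessor $v_{i+1}$, feed in $E_{i+2}=E_{i+1}-(a_{i+1}+b_{i+1})\D_\textsc{f}\D_\textsc{l}F_{i+1}$ and $F_{i+2}=F_{i+1}$ from \cref{lem:sameAngle}, and use algebraic independence to rule out cancellation of leading terms; your treatment of the denominator, $|E_{i+2}|=M$ and hence $|\D_{i+2}|=|\D_{i+1}|+M$, is correct. The gap is in your justification that the outer maximum in the numerator formula is attained by the middle-predecessor branch. You reduce this to the claim $d^\circ_{i+1}\ge d^\circ_\textsc{f},d^\circ_\textsc{l}$ and assert it follows because $|E_{i+1}|\ge|\D_\textsc{f}|+|\D_\textsc{l}|+|F_{i+1}|$. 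That inequality is not guaranteed (which is exactly why $M$ is defined as a maximum), and, more importantly, the implication runs in the opposite direction: applying \cref{lem:degrees} to $v_{i+1}$ gives
\[
d^\circ_{i+1}=|\D_\textsc{f}|+|\D_\textsc{l}|+|F_{i+1}|-|E_{i+1}|+\max\bigl\{d^\circ_\textsc{l},\,d^\circ_\textsc{f},\,d^\circ_i+\max\{|E_{i+1}|-|\D_\textsc{f}|-|\D_\textsc{l}|-|F_{i+1}|,\,0\}\bigr\},
\]
so $d^\circ_{i+1}\ge\max\{d^\circ_\textsc{f},d^\circ_\textsc{l}\}$ is what you obtain when $|E_{i+1}|\le|\D_\textsc{f}|+|\D_\textsc{l}|+|F_{i+1}|$; if instead $\delta:=|E_{i+1}|-|\D_\textsc{f}|-|\D_\textsc{l}|-|F_{i+1}|>0$, the identity gives $d^\circ_{i+1}=\max\{d^\circ_\textsc{l}-\delta,\,d^\circ_\textsc{f}-\delta,\,d^\circ_i\}$, which is smaller than $\max\{d^\circ_\textsc{f},d^\circ_\textsc{l}\}$ whenever $d^\circ_i$ is small. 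So the intermediate claim you lean on can fail, and the reason you cite for it is reversed.

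The step itself is nevertheless true, and the repair is exactly what the paper does. All you need is $d^\circ_{i+1}+M-|\D_\textsc{f}|-|\D_\textsc{l}|-|F_{i+1}|\ge\max\{d^\circ_\textsc{f},d^\circ_\textsc{l}\}$, i.e.\ $d^\circ_{i+1}+M\ge m^\circ$ with $m^\circ:=|\D_\textsc{f}|+|\D_\textsc{l}|+|F_{i+1}|+\max\{d^\circ_\textsc{f},d^\circ_\textsc{l}\}$. The displayed identity yields $d^\circ_{i+1}\ge m^\circ-|E_{i+1}|$, and $M\ge|E_{i+1}|$ holds by the definition of $M$ with no case distinction, so $d^\circ_{i+1}+M\ge m^\circ$ always. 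The paper packages this by writing $|\N^\circ_{j+1}|=|\D_j|+\max\{m^\circ,d^\circ_j+M\}$ for both insertions, substituting $d^\circ_{i+1}=-|E_{i+1}|+\max\{m^\circ,d^\circ_i+M\}$ into the expression for $|\N^\circ_{i+2}|$, and using only $M-|E_{i+1}|\ge0$ to conclude that the outer maximum is attained by the second term. Replacing your ``hinge'' inequality with this bookkeeping closes the gap; the remainder of your write-up then coincides with the paper's proof.
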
 
For the proof of \cref{lem:degreesStacked}, we refer to \cref{sec:appendixUsefulLemmas}.

\subsection{Proofs of the Degree-Lemmas}\label{sec:appendixUsefulLemmas}
In this section, we present the pending proofs of the previous section.
\nofactors*
\begin{proof} 
The proof consists of two directions. For both recall the formulas for $\N^\circ_i$ and $\D_i$ given in \cref{eq:NomDen,eq:NomDen2}. Suppose that  $\N^\circ_i$ and $\D_i$ have a common zero $z$. We think of $z$ as an algebraic function of ($a,a_5,b_5,\dots,a_i,b_i$). By \cref{obs:wiggle},  the variables $a_i$ and $b_i$ do not occur in $D_i$. Consequently,  $z$ does not depend on $a_i$ and $b_i$, and is 
 thus algebraically independent of $a_i$ and $b_i$.
Since $\D_i=\D_\textsc{m} E_i$, by \cref{eq:NomDen2},  $z$ is a zero of at least one of $\D_\textsc{m}$ or $E_i$.  We distinguish three cases.  \smallskip
 
Case 1: If $z$ is a zero of both, $\D_\textsc{m}$ and $E_i$, then \cref{eq:NomDen} simplifies to
\[\N^\circ_i[z]=-(a_i+b_i)(\N^\circ_\textsc{m}\D_\textsc{f}\D_\textsc{l}F_i)[z]=0.\]
By the assumption of being \crr, $z$ is neither a zero of $\N^\circ_\textsc{m}$ nor $F_i$. Hence, $z$ is a zero of $\D_\textsc{f}\D_\textsc{l}$. In conclusion, $z$ is a zero of $E_i$, $\D_\textsc{m}$ as well as $\D_\textsc{f}$ or $\D_\textsc{l}$ (or both). In other words, condition (i) is fulfilled.\smallskip
 
Case 2: If $z$ is a zero of $\D_\textsc{m}$ and not of $E_i$, then \cref{eq:NomDen} reads as 
\[\N^\circ_i[z]=(\N^\circ_\textsc{m}(E_i-(a_i+b_i))\D_\textsc{f}\D_\textsc{l}F_i)[z]=0.\]
Since $z,a_{i},b_{i}$ are algebraically independent, $\N^\circ_i[z]$ vanishes on each summand. However, $z$ is not a zero of $E_i$ by assumption of Case 2 and   $z$ is not a zero of $\N^\circ_\textsc{m}$ since $\D_\textsc{m}$ and $\N^\circ_\textsc{m}$ are \crr. Thus, we arrive at a  contradiction and this case does not occur.\smallskip
 
Case 3: If $z$ is a zero of $E_i$ and not of $\D_\textsc{m}$, $z$ is not a zero of $F_i$; since $E_i$ and $F_i$ are \crr. Consequently, \cref{eq:NomDen} implies
$(\D_\textsc{m}(a_i\N^\circ_\textsc{l}\D_\textsc{f}+b_i \N^\circ_\textsc{f}\D_\textsc{l}) - (a_i+b_i)\N^\circ_\textsc{m}\D_\textsc{f}\D_\textsc{l})[z]=0$. 
Reordering for $a_i$ and $b_i$ results in 
\[\big(a_i\D_\textsc{f}(\D_\textsc{m}\N^\circ_\textsc{l}-\N^\circ_\textsc{m}\D_\textsc{l}) + b_i\D_\textsc{l} (\D_\textsc{m} \N^\circ_\textsc{f} - \N^\circ_\textsc{m}\D_\textsc{f})\big)[z]=0.\]
As argued above, $z$, $a_{i}$, and $b_{i}$ are algebraically independent, and thus $z$ is a zero of both summands.
If $z$ is a zero of $\D_\textsc{f}$, then it is also a zero of $\D_\textsc{l}\D_\textsc{m} \N^\circ_\textsc{f}$. However, by assumption, it is not a zero of $\D_\textsc{m} \N^\circ_\textsc{f}$, and thus it is a zero of $\D_\textsc{l}$. Likewise, if $z$ is a zero of $\D_\textsc{l}$ then it follows that $z$ is also a zero of $\D_\textsc{f}$. Hence, (i) is satisfied.

Thus, in the following we may assume that $z$ is not a zero of $\D_\textsc{f}\D_\textsc{l}$, but of both polynomials  $(\N^\circ_\textsc{l}\D_\textsc{m}-\N^\circ_\textsc{m}\D_\textsc{l})$ and $( \N^\circ_\textsc{f}\D_\textsc{m} - \N^\circ_\textsc{m}\D_\textsc{f})$. This is condition (ii).

\smallskip

It remains to show the reverse direction. Since $z$ is a zero of $E_i$, it follows that $z$ is a zero of $\D_i$. By construction of our cases, $z$ was a zero of $\N^\circ_i$. Alternatively, it is easy to check that $z$ is also a zero of $\N^\circ_i$ as given in \cref{eq:NomDen} in all cases.
%
%
%
%
%
%
 Consequently, $z$ is a zero of both $\N^\circ_i$ and $\D_i$.
\end{proof}


\sameAngle*
\begin{proof} 
Note that the irreducibility of $E_{i+1}$ and $F_{i+1}$ directly implies the irreducibility of $E_{i+2}$ and $F_{i+2}$: Since $F_{i+2}=F_{i+1}$, every zero of $F_{i+2}$ is a zero of $F_{i+1}$. Therefore, $z$ is a zero of $E_{i+2}$ and $F_{i+2}$ if and only if $z$ is a zero of $E_{i+1}$ and $F_{i+1}$.
It remains to show that 
\[E_{i+2}\D_{i+1}=F_{i+2}\tilde \D_{i+2}.\]
To do so, we will show that $\tilde \D_{i+1}$ is a factor of $\tilde \D_{i+2}$.
By \cref{eq:tilde}, we obtain the following formula for $\tilde \D_{i+2}$:
\begin{linenomath}
 \begin{align*}
  \tilde \D_{i+2}
  =& \Nx_{i+1}(\Ny_\textsc{l}\D_\textsc{f} -\Ny_\textsc{f}\D_\textsc{l})+ \Ny_{i+1}(\Nx_\textsc{f}\D_\textsc{l}-\Nx_\textsc{l}\D_\textsc{f}) +\D_{i+1}\textcolor{teal}{(\Nx_\textsc{l}\Ny_\textsc{f}-\Nx_\textsc{f}\Ny_\textsc{l})}
 \end{align*}
 \end{linenomath}
 The last summand is clearly divisible by $\D_{i+1}$ and, hence by $\tilde \D_{i+1}$. Therefore we focus on the first two summands, in which we replace $\N^\circ_{i+1}$ with the help of \cref{eq:NomDen} by 
 \begin{linenomath}
 \begin{align*}
  \N^\circ_{i+1}&=F_{i+1}\D_i(a_{i+1}\N^\circ_\textsc{l}\D_\textsc{f}+b_{i+1}\N^\circ_\textsc{f}\D_\textsc{l}) + \N^\circ_iE_{i+2}.
 \end{align*}
 \end{linenomath}
 Replacing $\N^\circ_{i+1}$ by the above expressions and using \cref{eq:tilde} for $\tilde \D_{i+1}$, we  obtain:
 \begin{linenomath*}\begin{align*}
  \Nx_{i+1}&(\Ny_\textsc{l}\D_\textsc{f} -\Ny_\textsc{f}\D_\textsc{l})+ \Ny_{i+1}(\Nx_\textsc{f}\D_\textsc{l}-\Nx_\textsc{l}\D_\textsc{f}) \\
  =&(F_{i+1}\D_i[\textcolor{violet}{a_{i+1}\Nx_\textsc{l}\D_\textsc{f}}+\textcolor{magenta}{b_{i+1} \Nx_\textsc{f}\D_\textsc{l}}]
  +\Nx_iE_{i+2})(\Ny_\textsc{l}\D_\textsc{f} -\Ny_\textsc{f}\D_\textsc{l})
  \\
  +&  (F_{i+1}\D_i[\textcolor{violet}{a_{i+1}\Ny_\textsc{l}\D_\textsc{f}}+\textcolor{magenta}{b_{i+1} \Ny_\textsc{f}\D_\textsc{l}}]+\Ny_iE_{i+2})(\Nx_\textsc{f}\D_\textsc{l}-\Nx_\textsc{l}\D_\textsc{f})
  \\
  =&{\D_\textsc{f}\D_\textsc{l}\D_iF_{i+1}(\Ny_\textsc{l}\Nx_\textsc{f}-\Ny_\textsc{f}\Nx_\textsc{l})(\textcolor{violet}{a_{i+1}}+\textcolor{magenta}{b_{i+1}})}\\&
 +E_{i+2}(\textcolor{orange}{\Nx_i(\Ny_\textsc{l}\D_\textsc{f} -\Ny_\textsc{f}\D_\textsc{l})+\Ny_i(\Nx_\textsc{f}\D_\textsc{l}-\Nx_\textsc{l}\D_\textsc{f})})\\
  =&{\D_\textsc{f}\D_\textsc{l}\D_iF_{i+1}(\Ny_\textsc{l}\Nx_\textsc{f}-\Ny_\textsc{f}\Nx_\textsc{l})(a_{i+1}+b_{i+1})}\\&
 +\textcolor{teal}{E_{i+2}}(\textcolor{orange}{\tilde \D_{i+1}-\D_{i}(\Nx_\textsc{l}\Ny_\textsc{f}-\Nx_\textsc{f}\Ny_\textsc{l})})
\end{align*} \end{linenomath*}
We remove the term $E_{i+2}\tilde \D_{i+1}$ which clearly divisible by $\tilde \D_{i+1}$. In the remainder we factor out $(\Nx_\textsc{f}\Ny_\textsc{l}-\Nx_\textsc{l}\Ny_\textsc{f})\D_i$ and recall the definitions of $E_{i+2}$ and $D_{i+1}$. It remains
 \begin{linenomath}\begin{align*}
  &{\D_\textsc{f}\D_\textsc{l}\textcolor{red}{\D_i}F_{i+1}\textcolor{red}{(\Nx_\textsc{f}\Ny_\textsc{l}-\Nx_\textsc{l}\Ny_\textsc{f})}(a_{i+1}+{b_{i+1}})}+E_{i+2}(\textcolor{red}{-\D_{i}(\Nx_\textsc{l}\Ny_\textsc{f}-\Nx_\textsc{f}\Ny_\textsc{l})})\\
 &=\textcolor{red}{(\Nx_\textsc{f}\Ny_\textsc{l}-\Nx_\textsc{l}\Ny_\textsc{f})\D_i}(\D_\textsc{f}\D_\textsc{l}F_{i+1}(a_{i+1}+b_{i+1})+E_{i+2})\\
 &=(\Nx_\textsc{f}\Ny_\textsc{l}-\Nx_\textsc{l}\Ny_\textsc{f})\D_iE_{i+1}
 =\textcolor{teal}{(\Nx_\textsc{f}\Ny_\textsc{l}-\Nx_\textsc{l}\Ny_\textsc{f})}\D_{i+1}
 \end{align*} \end{linenomath}
 By definition, $\D_{i+1}$ divisible by $\tilde \D_{i+1}$.
 The remainder consists of three summands, namely ${(\Nx_\textsc{l}\Ny_\textsc{f}-\Nx_\textsc{f}\Ny_\textsc{l})}+E_{i+2}+{(\Nx_\textsc{f}\Ny_\textsc{l}-\Nx_\textsc{l}\Ny_\textsc{f})}$.
 Note that the first and last summand of the remainder are canceling. Consequently, the remainder is $E_{i+2}$, i.e., 
 $\tilde \D_{i+2}=\tilde \D_{i+1}E_{i+2}.$
 This directly implies that $ F_{i+1}\tilde \D_{i+2}= F_{i+1}\tilde\D_{i+1}E_{i+2}=\D_{i+1}E_{i+2}$
 and therefore finishes the proof.
\end{proof}

\degreesStacked*
\begin{proof} 
 Recall that, by definition $d^\circ_{i+2}=|\N^\circ_{i+2}|-|\D_{i+2}|$. Consequently, if the degree of $\N^\circ_{i+2}$ and $\D_{i+2}$ are as claimed, it follows directly that  $d^\circ_{i+2}=d^\circ_{i+1}$. 
 For $\circ\in\{x,y\}$, we define $m^\circ:=|\D_\textsc{f}|+|\D_\textsc{l}|+|F_{i+1}|+\max\{d^\circ_\textsc{l},d^\circ_\textsc{f}\}$. Together with \cref{lem:degrees}, the degrees 
 can be expressed as follows:
 \begin{linenomath}\begin{align*}
 |\N^\circ_{j+1}|
 &=|\D_j|+\max\big\{m^\circ, d^\circ_j+M \big\}\\
 |\D_{j+1}|&=|\D_j|+|E_{j+1}|
 \end{align*} \end{linenomath}
 This implies the following formula for $d_{j+1}^\circ$:
 \begin{linenomath}\begin{align*}
  d_{j+1}^\circ&=-|E_{j+1}|+\max\big\{m^\circ, d^\circ_i+ M \big\}
 \end{align*} \end{linenomath}
 Recall that by \cref{lem:sameAngle}, $E_{i+2}=E_{i+1}-(a_{i+1}+b_{i+1})\D_\textsc{f}\D_\textsc{l}F_j$ and $F_{i+2}=F_{i+1}$. 
 Consequently, it holds that 
 $|E_{i+2}|
 =M$
 and with the above formula it follows that $|\D_{i+2}|=|\D_{i+1}|+|E_{i+2}|=|\D_{i+1}|+M$, as claimed.
 For the numerator, we replace $d_{i+1}^\circ$ in $\N^\circ_{i+2}$:
   \begin{linenomath}\begin{align*}
   |\N^\circ_{i+2}| &=|\D_{i+1}|+\max\big\{m^\circ, d^\circ_{i+1}+M\big\}\\
   &=|\D_{i+1}|+\max\bigg\{m^\circ,  M-|E|+\max\big\{m^\circ, d^\circ_i+ M \big\}\bigg\}\\
   &=|\D_{i+1}|+ M-|E|+\max\big\{m^\circ, d^\circ_i+ M \big\}\\
   &=|\D_{i+1}|+ M+d_{i+1}^\circ
  \end{align*} \end{linenomath}
 By definition of $M$, $M-|E|$ is non-negative and hence,  the outer-maximum in line 2 is attained for the second term. The last term in the third line is exactly $d_{i+1}^\circ$. Hence the numerator degree $|\N^\circ_{i+2}|$ is of the claimed form. 
\end{proof}

\newpage
\subsection{Proof of \cref{thm:accordion}}\label{app:acc}
\accordions*

To complete the proof of \cref{thm:accordion}, we have to show \cref{lem:accDegree} and \cref{lem:accCRR}.
\accDegree*

\begin{proof} 
 We denote the two face areas incident to $v_i$ and its predecessors by $a_i$ and $b_i$ for $i\geq5$, see \cref{fig:funnyClass1} for an illustration.
\begin{figure}[htb]
 \centering
\begin{subfigure}[t]{.45\textwidth}
\centering
 \includegraphics[page=1,scale=.75]{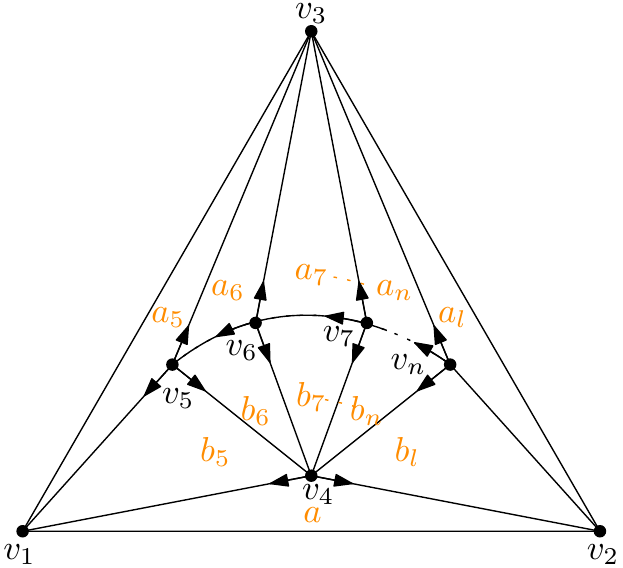}
 \caption{A \pOrder  of an  \graphClass graph.}
 \label{fig:funnyClassA}
\end{subfigure}
\hfill
\begin{subfigure}[t]{.45\textwidth}
\centering
 \includegraphics[page=4,scale=.75]{funnyClass}
 \caption{An \arVP; the green faces are realized.}
 \label{fig:funnyClassB}
\end{subfigure}
 \caption{Illustration of \cref{thm:accordion} and its proof.}
 \label{fig:funnyClass1}
\end{figure}
 
 We show this claim by induction.
 For $v_5$, we have already evaluated the \crr vertex coordinates in \cref{eq:v5}. 
Thus, it holds that 
\[|\Nx_{5}|=|\D_{5}|=|\Ny_{5}|+1=1.\]
For $\circ\in\{x,y\}$, we define $d_i^\circ:=|\N^\circ_i|-|\D_i|$. Consequently, it holds that $d_5^x=0$ and $d_5^y=-1$. Moreover, since $\D_\textsc{m}=1$, it holds that $F_5=1$ and $\D_5=E_5=\tilde \D_5$.
 
Recall that all vertices $v_i$ with $i>5$, the three predecessors of $v_i$ are $p_\textsc{f}=v_3, \quad p_\textsc{m}=v_{i-1}\quad \text{and} \quad p_\textsc{l}=v_4.$
In other words, all vertices $v_i$ with $i>4$ are stacked on the same angle.
Consequently, for all $i>5$, it holds that 
$|\N^\circ_\textsc{f}|=|\D_\textsc{f}|=|\D_\textsc{l}|=0$ and  $d^x_\textsc{f}=d^y_\textsc{f}=0$, and 
$d^x_\textsc{l}=1, d^y_\textsc{l}=0$. 
Defining $M:=\max\{|E_5|,|\D_\textsc{f}|+|\D_\textsc{l}|+|F_5|\}=\max\{1,0\}=1$, we obtain for $v_i$, $i>5$, with the help of \cref{lem:degreesStacked}:
\begin{linenomath}
\begin{align*}
|\Nx_{i+1}|&=|\D_{i}|+M+ d^x_{5}=|\D_{i}|+1\\
|\Ny_{i+1}|&=|\D_{i}|+M+ d^y_{5}=|\D_{i}|\\
|\D_{i+1}|&=|\D_{i}|+M=|\D_{i}|+1.
\end{align*}
\end{linenomath}
\end{proof}

\accCRR*
\begin{proof}
We show this by induction on $i$. The induction base is settled for $i=5$ since the polynomials in \cref{eq:v5} are \crr. 

Suppose, for a contradiction, that $\N^\circ_i$ and $\D_i$, share a common real root. Then, we are either in case (i) or (ii) of \cref{lem:nofactors}.
The fact that $\D_\textsc{f}=\D_\textsc{l}=1$ excludes case (i).

Thus, we are in case (ii). To arrive at the final contradiction, we work a little harder.
Since $\D_\textsc{f}=\D_\textsc{l}=1$, there exists $z$ which is a zero of both $\{\N^\circ_\textsc{l}\D_\textsc{m}-\N^\circ_\textsc{m},\{\N^\circ_\textsc{f}\D_\textsc{m}-\N^\circ_\textsc{m}\}$. 
This implies that either $\D_\textsc{m}(z)=0$ or $\N^\circ_\textsc{l}(z)=\N^\circ_\textsc{f}(z)$.
In the first case, $\D_\textsc{m}(z)=0$, it follows that $\N^\circ_\textsc{m}[z]=0$. This is an immediate contradiction to the fact that $\D_\textsc{m}$ and $\N^\circ_\textsc{m}$ are \crr. 

Thus it remains to consider the latter case, namely that $\N^\circ_\textsc{l}(z)=\N^\circ_\textsc{f}(z)$.
For $\circ=y$, we immediately obtain  a contradiction since  $\Ny_\textsc{l}=a<\Sigma\A=\Ny_\textsc{f}$.

For $\circ=x$ it follows that $z=1$ since $\Nx_\textsc{l}=x$, $\Nx_\textsc{f}=1$. 
Moreover, by \cref{lem:nofactors}, $z$ is a zero of $E_i$.
Consequently, it suffices to show that $E_i[1]\neq 0$. 
In order to analyze the zeros of $E_i$, we define for $i\in \{5,\dots,n\}$
\[\alpha_i:=a +\sum_{j=5}^{i-1} (a_j+b_j).\]
Recall that  $E_5[x]=(\Sigma\A)x-a$ by \cref{eq:v5}.
Thus, by \cref{lem:sameAngle}, it holds for $i\in \{5,\dots,n-1\}$ that 
\begin{equation}\label{eq:Ei}
 E_{i+1}[x]=E_i[x]-(a_i+b_i)=E_5[x]-\sum_{j=5}^{i} (a_j+b_j)=(\Sigma\A)x-\alpha_{i+1}.
\end{equation}
Since $\alpha_i<\Sigma\A$, it follows  $E_i[1]\neq0$ that for all $i\geq 5$. Consequently, $\N^\circ_i$ and $\D_i$ are \crr.
\end{proof}

\subsection{Proof of \cref{thm:LKdoubleStacking}}\label{app:double}

\doubleStacking*

\begin{proof}
We start to consider $\dS{\ell}{k}$.
 Note that the degree of all but four vertices is exactly four; namely, the degree of $B$ and $\ell$ is $k+3$, the degree of $v$ and $C$ is $\ell+3$.  
 Thus, if both $\ell$ and $k$ are odd, then $\dS{\ell}{k}$ is Eulerian and thus not area-universal as shown  in \cite[Theorem 1]{kleist1Journal}.  Since the degree depends on the planar graph, all plane graphs in $[\dS{\ell}{k}]$ are Eulerian and not area-universal if $\ell\cdot k$ is odd.
 
Assume that $\ell\cdot k$ is even. In order to show the area-universality of $\dS{\ell}{k}$, we consider the \pOrder $(A,B,C,v,1,\dots, \ell, 1'\dots, k')$ in which $k'C$ is the unique undirected edge. 
For an algebraically independent area assignment \A, we define $a:=\A(ABv)$ and 
place $v_3$ at $(1,\Sigma \A)$ and $v_4$ at $(x_4,a)$. Observe that the vertices $1,2,\dots \ell$ have the predecessors $C$ and $v$ and are locally identical to an \graphClass graph. Consequently, by \cref{lem:accDegree} and \cref{lem:accCRR}, the coordinates of vertex $\ell$ can be expressed by \crr polynomials $\N^\circ_\ell, \D_\ell$ with the degrees
\[|\Nx_\ell|=|\Ny_\ell|+1=|\D_\ell|=\ell.\]
 Since $\D_v=1$ and $C_{1'}=E_{1'}\D_v=\tilde D_{1'} F_{1'}$ by definition, see \cref{eq:EF}, it follows that $E_{1'}=\tilde D_{1'}$ and $F_{1'}=1$. As we will see it holds that $|E_{1'}|=|\tilde D_{1'}|=\ell$; this implies that  
 \[\max\{|E_{1'}|-|\D_\ell|-|\D_B|-|F_{1'}|,0\}
 =\max\{\ell-\ell-0-0,0\}=0.\]
\begin{figure}[htb]
 \centering
\begin{subfigure}{.47\textwidth}
\centering
 \includegraphics[page=5,scale=.75]{funnyClass}
 \caption{A \pOrder  of a   \lkGraph.}
 \label{fig:funnyDoubleA}
\end{subfigure}
\hfill
\begin{subfigure}{.48\textwidth}
\centering
 \includegraphics[page=6,scale=.75]{funnyClass}
 \caption{An \arVP.}
 \label{fig:funnyDoubleB}
\end{subfigure}
 \caption{Illustration of \cref{thm:LKdoubleStacking} and its proof.}
 \label{fig:funnyDouble}
\end{figure}
 
 Note that $d^x_B=0$, $d^y_B=-\infty$, $d^x_v=1,$ $d^y_v=0$, $d^x_\ell=0$, and $d^y_\ell=-1$.
 \cref{lem:degrees} yields the following degrees:
\[
|\Nx_{1'}|=\ell+1 \quad \text{ and }\quad |\Ny_{1'}|=|\D_{1'}|=|\tilde \D_{1'}|=\ell
\]

 
We will later show that these polynomials are \crr. Now, we proceed to compute the degrees of the vertex coordinates.
Defining $M:=\max\{|E_{1'}|,|\D_{\ell}|+|\D_B|+|F_{1'}|)\}=\max\{\ell,\ell+0+0)\}=\ell$ and by \cref{lem:degreesStacked}, 
it follows for $j>1$ 
\begin{linenomath}
\begin{align*}
  |\Nx_{j'}|&=|\D_{1'}|+(j-1)\cdot M+ d^x_{1'}=j\cdot\ell+1\\
  |\Ny_{j'}|=|\D_{j'}|&=|\D_{1'}|+(j-1)\cdot M=j\cdot\ell.
\end{align*}
\end{linenomath}
 
Assume for now, that the resulting polynomials are \crr. 
As our last face we choose the triangle $k'BC$. Then the \lastfaceFun $\f$ evaluates to 
\[\f(x):=\det(k',B,C)=1-x_{k'}=1-\frac{\Nx_{k'}}{\D_{k'}}.\]
For the last vertex $k'$, the degree of the numerator, namely $k\cdot\ell+1$, exceeds the degree of the denominator $k\cdot \ell$ and is odd since $\ell\cdot k$ is even. Consequently, $\f$ has odd max-degree and is almost surjective by \cref{lem:surjective}. Consequently,  \cref{thm:everyEmbedding} shows that every plane graph in $[\dS{\ell}{k}]$ is area-universal.\medskip

It remains to guarantee that the polynomials are \crr. 

\begin{lemma}
For all $j\geq 1$, it holds that 
 $\N^\circ_{j'}$ and $\D_{j'}$ are \crr.
\end{lemma}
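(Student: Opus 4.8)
The plan is to follow the proof of \cref{lem:accCRR}, arguing by induction on $j$ and invoking \cref{lem:nofactors} at each step; the inductive inputs required by \cref{lem:nofactors} (\crr ness of all earlier vertices) are available, because $A,B,C,v$ are placed directly with constant denominator $1$ and the accordion-like vertices $1,\dots,\ell$ are \crr by \cref{lem:accCRR}. I would use the following structural facts. For the first primed vertex $1'$ the three predecessors are $B,v,\ell$ with $v$ in the middle and $\D_B=\D_v=1$, while for $j'$ with $j\ge2$ they are $B,(j-1)',\ell$ with $(j-1)'$ in the middle and $\D_B=1$. By \cref{lem:sameAngle} one has $F_{j'}=1$ for all $j$, hence $\D_{j'}=\D_{(j-1)'}E_{j'}=E_{1'}\cdots E_{j'}$ and $E_{(i+1)'}=E_{i'}-(a_{i'}+b_{i'})\D_\ell$, so that, recalling $E_{1'}=\tilde\D_{1'}$, one gets $E_{j'}=\tilde\D_{1'}-\beta_j\D_\ell$ with $\beta_j:=\sum_{i=1}^{j-1}(a_{i'}+b_{i'})$. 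Crucially, the $x$-coordinate polynomials $\Nx_1,\dots,\Nx_\ell$ of the accordion-like part never involve $\Sigma\A$ (because $C$ and $v$ have $\Sigma\A$-free $x$-coordinates), so the real roots of any $\Nx_i$ with $i\le\ell$ depend only on the accordion areas and are algebraically independent of the primed areas; this is what makes the $\circ=x$ case go through.

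\textbf{Base case $j=1$.} Since $\D_B=\D_v=1$, alternative~(i) of \cref{lem:nofactors} is vacuous, so any common real root $z$ of $\N^\circ_{1'}$ and $\D_{1'}$ must be a root of $E_{1'}=\tilde\D_{1'}$ and, by alternative~(ii) together with $\D_B=\D_v=1$, a common root of $\N^\circ_B-\N^\circ_v$ and of $\N^\circ_\ell-\N^\circ_v\D_\ell$. For $\circ=y$ this is immediate: $\Ny_B-\Ny_v$ is a nonzero constant by the algebraic independence of $\A$. For $\circ=x$, $\Nx_B-\Nx_v$ is a degree-one polynomial (the difference of the $x$-coordinates of two directly placed vertices), so $z$ is forced to one explicit value $z_0$; plugging $z_0$ into $\Nx_\ell-\Nx_v\D_\ell$ and into $E_{1'}=\tilde\D_{1'}$, and using the accordion recursion \cref{eq:NomDen,eq:v5} together with the fact that $\D_\ell$ evaluated at any fixed point is a nonzero product of the linear factors $E_m[x]=(\Sigma\A)x-\alpha_m$ of \cref{eq:Ei}, one checks by algebraic independence that no such $z_0$ exists.

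\textbf{Induction step $j\ge2$.} Assume the claim for all smaller indices. The predecessors of $j'$ are $B,(j-1)',\ell$ with $\D_B=1$. Alternative~(i) of \cref{lem:nofactors} requires $z$ to be a common real root of $\D_{(j-1)'}=E_{1'}\cdots E_{(j-1)'}$ and of $\D_\ell$; since $E_{(i+1)'}-E_{i'}$ is a multiple of $\D_\ell$, walking the index down shows that $z$ is then also a root of $E_{1'}=\tilde\D_{1'}$, so alternative~(i) reduces to the auxiliary claim that $\tilde\D_{1'}$ and $\D_\ell$ have no common real root. I would prove this by evaluating $\tilde\D_{1'}$ via \cref{eq:tilde} (using $\D_B=\D_v=1$ and the initial placements it equals, up to sign, $x(\Ny_\ell-\D_\ell)+(1-a)\Nx_\ell$) at a root of $\D_\ell$, at which $\Nx_\ell$ and $\Ny_\ell$ are nonzero since each is \crr with $\D_\ell$ by \cref{lem:accCRR}; the vanishing of $\tilde\D_{1'}$ then becomes an algebraic relation among the areas which, by the degree data of \cref{lem:accDegree}, is not an identity. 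For alternative~(ii), using $\D_B=1$ the condition coming from $B$ forces $z$ to be a root of $\N^\circ_B\D_{(j-1)'}-\N^\circ_{(j-1)'}$; the condition coming from $\ell$ then gives either $\D_{(j-1)'}(z)=0$ (impossible by induction, as it would make $z$ a common root of $\N^\circ_{(j-1)'}$ and $\D_{(j-1)'}$) or pins $z$ down to a root of $\Ny_\ell-\D_\ell$ (for $\circ=y$) resp.\ of $\Nx_\ell$ (for $\circ=x$); writing out $0=E_{j'}(z)=\tilde\D_{1'}(z)-\beta_j\D_\ell(z)$ and noting that $\beta_j$ carries the primed areas $a_{i'},b_{i'}$ ($i<j$) whereas $z,\Nx_\ell,\Ny_\ell,\D_\ell$ do not depend on them (here using the $\Sigma\A$-freeness of the $\Nx_i$ and treating a primed area as an independent indeterminate), one differentiates the resulting relation with respect to such an area to obtain a contradiction.

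\textbf{Main obstacle.} The hard part is the $\circ=x$ analysis together with the auxiliary claim that $\tilde\D_{1'}$ and $\D_\ell$ are \crr: unlike in \cref{lem:accCRR}, where the only candidate bad value of $x_4$ was the explicit point $x_4=1$, here the presence of the non-constant denominator $\D_\ell$ among the predecessors means the candidates are roots of honest degree-$\ell$ polynomials, so the contradictions must be drawn from the abstract ``areas-as-indeterminates'' viewpoint via algebraic independence, leaning on the degree and leading-coefficient bookkeeping of \cref{sec:coordinates} (\cref{lem:degrees,lem:sameAngle,lem:degreesStacked}) to keep the expressions manageable.
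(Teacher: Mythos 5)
Your induction skeleton coincides with the paper's: invoke \cref{lem:nofactors}, use $\D_B=\D_v=1$ and the constant coordinates of $B$ and $v$ to rule out case (i) and pin down $z$ in the base case, and in the inductive step use $E_{(j+1)'}=E_{j'}-(a_{j'}+b_{j'})\D_\ell$ from \cref{lem:sameAngle} together with the fact that the accordion-part polynomials do not involve the primed areas to reduce every case (your case (i) via the chain of $E_{i'}$'s, your case (ii) via varying a primed area) to the claim that $z$ would be a common real root of $\D_\ell$ and of $E_{1'}=\tilde\D_{1'}$, i.e.\ of $\D_\ell$ and $a\Nx_\ell+(1-x)\Ny_\ell$. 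The genuine gap is that you never prove this coprimality claim, and it is the core of the lemma, not a routine consequence of algebraic independence. Your justification (``the vanishing of $\tilde\D_{1'}$ becomes an algebraic relation among the areas which, by the degree data of \cref{lem:accDegree}, is not an identity'') does not stand on its own: the candidate root $z$ is itself an algebraic function of the areas, so the relation $a\Nx_\ell[z]+(1-z)\Ny_\ell[z]=0$ is not a rational-coefficient polynomial identity in the areas, and algebraic independence can only be applied after one shows that the eliminated relation (equivalently, that $\D_\ell$ and $a\Nx_\ell+(1-x)\Ny_\ell$ are coprime as polynomials in $x$ over the field generated by the areas) is nontrivial; knowing $\Nx_\ell[z]\neq 0\neq \Ny_\ell[z]$ from \cref{lem:accCRR} does not exclude cancellation of the two summands, and degree information does not show a resultant is nonzero. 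The paper closes exactly this hole with a separate induction: because in the accordion part the $E_m$ are the explicit linear polynomials of \cref{eq:Ei}, $\D_\ell$ factors with explicit roots governed by $\alpha_m=a+\sum_{j=5}^{m-1}(a_j+b_j)$, and one proves $G_j[\alpha_i]\neq 0$ for $G_j:=a\Nx_j+(1-x)\Ny_j$ by induction on $j$, using the recursion \cref{eq:NomDen} and sign arguments ($E_{j+1}[\alpha_i]=\alpha_i-\alpha_{j+1}<0$, positivity of the areas, plus one appeal to algebraic independence). Without an argument of this kind your induction step (and your reduction of case (i)) is incomplete.

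There are also repairable bookkeeping errors. Under the paper's placement ($A=(0,0)$, $B=(1,0)$, $C=(1,\Sigma\A)$, $v=(x_4,a)$), \cref{eq:tilde} gives $\tilde\D_{1'}=a(\Nx_\ell-\D_\ell)+(1-x)\Ny_\ell$, not $x(\Ny_\ell-\D_\ell)+(1-a)\Nx_\ell$, and correspondingly in case (ii) of the induction step the correct conclusions are $\Ny_\ell[z]=0$ for $\circ=y$ and $\Nx_\ell[z]=\D_\ell[z]$ for $\circ=x$ (your $x$/$y$ roles are swapped). Moreover, the claim that the numerators $\Nx_i$ of the accordion part are $\Sigma\A$-free is false for $i\geq 6$, since $\D_{i-1}$ and $E_i$ in \cref{eq:NomDen} carry $\Sigma\A$; what you actually need, and what is true, is that all polynomials attached to $A,B,C,v,1,\dots,\ell$ are free of the primed areas once $\Sigma\A$ is treated as an independent transcendental, which is how the paper varies $a_{j'},b_{j'}$. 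The base case itself is fine in spirit (it is the paper's argument via \cref{obs:wiggle} and evaluation at the forced point), but it should be stated with the correct coordinates.
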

 
We prove this claim by induction and start with settling the base for $j=1$ using \cref{lem:nofactors}. Suppose by contradiction that $\N^\circ_{1'}$ and $\D_{1'}$ share a common zero $z$. The fact that $\D_\textsc{l}=\D_\textsc{m}=1$ excludes case~(i). Thus, case (ii) holds and, since $\D_\textsc{l}=\D_\textsc{m}=1$, $z$ is a zero of the simplified polynomials
$(\N^\circ_\textsc{l}-\N^\circ_\textsc{m})$ and $(\N^\circ_\textsc{f}-\N^\circ_\textsc{m}\D_\textsc{f})$.
Recall that $\Nx_\textsc{l}=1$ and $\Ny_\textsc{l}=0$.
Thus for $\circ=y$, it follows that $z$ is a zero of $\Ny_\textsc{l}$ and thus also of $\Ny_\textsc{m}$. However, $\Ny_\textsc{m}=a>0$ yields a contradiction.
For $\circ=x$, $\Nx_\textsc{l}=1$ and $\Nx_\textsc{m}=x$ imply that $z=1$.
Consequently, it holds that  $\Nx_\textsc{f}[1]-\D_\textsc{f}[1]=0$. Recall that in our case $\textsc{f}=\ell$.
By \cref{obs:wiggle}, $\Nx_\textsc{f}=\Nx_\ell$ depends on $a_\ell$ while $\D_\textsc{f}$ does not. Consequently, $\Nx_\textsc{f}[1]$ and $\D_\textsc{f}[1]$ are polynomials in $\A$; due to the algebraic independence they cannot coincide.
\medskip

Now, we come to the induction step and suppose, for a contradiction, that $\N^\circ_{j'+1}$ and $\D_{j'+1}$ share a common real root $z$.
By \cref{lem:nofactors} we distinguish two cases. In all cases $z$ is zero of $E_{j'+1}$.
By \cref{lem:sameAngle}, we know that for $j\in[k-1]$ it holds that $E_{j'+1}=E_{j'}-(a_{j'}+b_{j'})\D_\ell.$
Together with $E_{1'}=a(\Nx_\ell-\D_\ell)+(1-x)\Ny_\ell$, we obtain 
\[E_{j'+1}=a\Nx_\ell+(1-x)\Ny_\ell-\left(a+\sum_{k=1}^j (a_{k'}+b_{k'})\right)\D_\ell.\]
We claim that $z$ does not depend on $a_{j'}$ and $b_{j'}$. Then, it follows from the algebraic independence,  that $z$ is a zero of both $a\Nx_\ell+(1-x)\Ny_\ell$ and $\D_\ell$.

\medskip

To prove this claim we distinguish the cases suggested by \cref{lem:nofactors}. 
Recall that the predecessor indices $\textsc{f,m,l}$ of $j'+1$ are given by $\ell,j',2$.
If case (i) of \cref{lem:nofactors} holds, then $z$ is a zero of $\D_{\ell}$ and $\D_{j'}$ since $\D_2=1$. Then clearly $z$ does not depend on $a_{j'}$ and $b_{j'}$ since $\D_\ell$ does not.

If case (ii) of \cref{lem:nofactors} holds, then $z$ is a zero of  $\N^\circ_2\D_{j'}-\N^\circ_{j'}\D_2$ and $\N^\circ_{\ell}\D_{j'}-\N^\circ_{j'}\D_{\ell}$. We distinguish two cases for $\circ\in\{x,y\}$.
For $\circ=y$, it holds that  $\Ny_2=0$ and $\D_2=1$. Thus, it follows that $\Ny_{j'}[z]=0$ and $(\Ny_{\ell}\D_{j'})[z]=0$.
Since $\Ny_{j'}$ and $\D_{j'}$ are \crr by the induction hypothesis, it holds that $\Ny_{\ell}[z]=0$. Then as a zero of $\Ny_{\ell}$, $z$ does neither depend on $a_{j'}$ nor on $b_{j'}$.

For $\circ=x$,  $\Nx_2=1$ and $\D_2=1$ imply that $\Nx_{j'}[z]=\D_{j'}[z]$ 
and $\D_{j'}[z](\Nx_{\ell}-\D_{\ell})[z]=0$. Since $\D_{j'}[z]\neq 0$, as otherwise $\Nx_{j'}$ and $\D_{j'}$ are not \crr, it holds that $\Nx_{\ell}[z]=\D_{\ell}[z]$.
Using the last fact, 
$E_{j'+1}[z]$ simplifies to $(1-z)\Ny_\ell[z]-\left(\sum_{k=1}^j (a_{k'}+b_{k'})\right)\D_\ell[z]=0.$
This implies that 
\[\Nx_{\ell}[z]
=\D_{\ell}[z]
=\frac{1}{\sum_{k=1}^j (a_{k'}+b_{k'})}(1-z)\Ny_\ell[z].\]
Since neither $\Nx_{\ell}$ nor $\D_{\ell}$ depend on $a_{k'}$ and $b_{k'}$, these three polynomial do not coincide at $z$ for small variations of $a_{k'}$. Thus for a dense set of algebraically independent area assignments, these three polynomials share no common real root. Consequently, we can assume that $z$ does not depend on $a_{j'}$ and $b_{j'}$ and thus $z$ is a zero of both $(a\Nx_\ell+(1-x)\Ny_\ell$) and $\D_\ell$.
 
However, we show that this is not the case.
\begin{clm}
 $\D_\ell$ and $(a\Nx_\ell+(1-x)\Ny_\ell)$ are \crr.
\end{clm}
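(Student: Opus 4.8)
The plan is to show that $\D_\ell$ and $g:=a\Nx_\ell+(1-x)\Ny_\ell$ have no common real root. The lever is the elementary identity
\[
g+\psi_\ell=\Sigma\A\cdot\Nx_\ell,\qquad\text{where}\quad\psi_j:=(\Sigma\A-a)\,\Nx_j+(x-1)\,\Ny_j ,
\]
together with an auxiliary assertion that I would prove separately by induction along the \pOrder: for every vertex $j$ of the subdivision chain $1,\dots,\ell$ and every real root $z$ of $\D_j$ one has $\psi_j(z)=0$ (geometrically, at a value of $x_4$ where vertex $j$ runs off to infinity, its escape direction $(\Nx_j(z):\Ny_j(z))$ is parallel to the segment $v_3v_4$, with $v_3=(1,\Sigma\A)$ and $v_4=(x_4,a)$ as placed in the proof of \cref{thm:LKdoubleStacking}). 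Granting this for $j=\ell$: if $z$ were a common real root of $\D_\ell$ and $g$, then $\Sigma\A\cdot\Nx_\ell(z)=g(z)+\psi_\ell(z)=0$, so $\Nx_\ell(z)=0$ because $\Sigma\A>0$; but then $z$ is a common root of $\Nx_\ell$ and $\D_\ell$, contradicting \cref{lem:accCRR}.

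For the auxiliary assertion I would reuse the recursions behind \cref{lem:almostReal} (namely \cref{eq:v5,eq:NomDen,eq:tilde}), specialized to the present local accordion. By \cref{eq:v5}, $\Nx_1=a_1x+b_1$, $\Ny_1=aa_1+b_1\Sigma\A$ and $\D_1=E_1=\Sigma\A x-a$; and since vertices $2,\dots,\ell$ are stacked on the same angle with facial predecessors $v_3,v_4$ (for which $\D_{v_3}=\D_{v_4}=1$, and for which $F_j=1$ as in \cref{lem:accDegree} via \cref{lem:sameAngle}), the recursion \cref{eq:NomDen} simplifies, for $j\ge 2$ and up to the common nonzero scalar that normalizes $\D_j$, to
\[
\Nx_j=\D_{j-1}(a_jx+b_j)+\Nx_{j-1}E_{j+1},\qquad\Ny_j=\D_{j-1}(a_ja+b_j\Sigma\A)+\Ny_{j-1}E_{j+1},\qquad\D_j\propto\D_{j-1}E_j ,
\]
with $E_{j+1}=E_j-(a_j+b_j)$. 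The base case follows from the one-line identity $(\Sigma\A-a)(a_1x+b_1)+(x-1)(aa_1+b_1\Sigma\A)=(a_1+b_1)(\Sigma\A x-a)$, which gives $\psi_1=(a_1+b_1)\D_1$, so $\psi_1$ vanishes at the unique root of $\D_1$. For the step, substituting the recursions into the definition of $\psi_j$ and collapsing the analogous identity $(\Sigma\A-a)(a_jx+b_j)+(x-1)(a_ja+b_j\Sigma\A)=(a_j+b_j)(\Sigma\A x-a)$ gives
\[
\psi_j\ \propto\ (a_j+b_j)(\Sigma\A x-a)\,\D_{j-1}\ +\ E_{j+1}\,\psi_{j-1} .
\]
Now let $z$ be a real root of $\D_j\propto\D_{j-1}E_j$. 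If $\D_{j-1}(z)=0$, the first term vanishes and $\psi_j(z)\propto E_{j+1}(z)\,\psi_{j-1}(z)=0$ by induction. If $E_j(z)=0$ but $\D_{j-1}(z)\ne 0$, then $\D_j\propto\tilde\D_jF_j$ and the \crr-ness of $E_j$ and $F_j$ (from the construction in \cref{lem:almostReal}) force $\tilde\D_j(z)=0$, i.e.\ $v_3,v_4,v_{j-1}$ are collinear at $x_4=z$; expanding the corresponding $3\times3$ determinant yields $\Det(v_3,v_4,v_{j-1})=(a-\Sigma\A z)+\psi_{j-1}(z)/\D_{j-1}(z)$, so $\psi_{j-1}(z)=(\Sigma\A z-a)\,\D_{j-1}(z)$; plugging this, together with $E_{j+1}(z)=E_j(z)-(a_j+b_j)=-(a_j+b_j)$, into the displayed recursion for $\psi_j$ makes the two summands cancel, so $\psi_j(z)=0$. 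This closes the induction.

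The main obstacle I anticipate is bookkeeping rather than conceptual. First, one must handle the normalization with care: $\N^\circ_j$ and $\D_j$ are fixed only up to a common scalar before $\D_j$ is forced monic, so every displayed identity is read up to such a scalar -- which is harmless for the ``vanishing at a root'' and ``ratio of $\Nx$ to $\Ny$'' statements actually used. Second, one must confirm that the simplified recursions hold along the whole chain $2,\dots,\ell$, i.e.\ that $F_j=1$ and $\D_j\propto\D_{j-1}E_j$; this is exactly the accordion-type bookkeeping already carried out in \cref{lem:sameAngle,lem:accDegree,lem:accCRR}, for a chain whose two facial predecessors carry constant coordinate data. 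The one genuinely delicate point is the inductive step when $E_j(z)=0$: there it is essential that such a root is also a root of the \emph{collinearity} polynomial $\tilde\D_j$, not merely of $E_j$, and this is precisely where the \crr-ness of $E_j$ and $F_j$ built into the construction of \cref{lem:almostReal} is invoked.
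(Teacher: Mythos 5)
Your proof is correct, and while it runs on the same inductive machinery as the paper (the specialization of \cref{eq:NomDen} to the chain, with $\D_\textsc{f}=\D_\textsc{l}=1$, $F_j=1$ and $E_{j+1}=E_j-(a_j+b_j)$ from \cref{lem:sameAngle}), the pivotal reduction is genuinely different. The paper uses that the real roots of $\D_\ell$ are explicitly known, since $E_i=\Sigma\A x-\alpha_i$ and $\D_\ell=\prod_{i\le \ell}E_i$ (\cref{eq:Ei}), and shows by a sign-tracking induction that $G_j:=a\Nx_j+(1-x)\Ny_j$ is nonzero at every $\alpha_i$ with $i\le j$: in the recursion for $G_j$ one summand vanishes at $\alpha_i$ and the other is nonzero by positivity of the areas and $\alpha_i<\alpha_{j+1}$. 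You instead pass to the complementary combination $\psi_j=(\Sigma\A-a)\Nx_j+(x-1)\Ny_j$, prove that it vanishes at every real root of $\D_j$, and then the identity $g+\psi_\ell=\Sigma\A\,\Nx_\ell$ for $g:=a\Nx_\ell+(1-x)\Ny_\ell$, together with $\Sigma\A>0$, hands the claim to the fact that $\Nx_\ell$ and $\D_\ell$ are \crr (\cref{lem:accCRR}). I checked your two ingredient identities, $(\Sigma\A-a)(a_jx+b_j)+(x-1)(a_ja+b_j\Sigma\A)=(a_j+b_j)(\Sigma\A x-a)$ and $\tilde\D_j=(\Sigma\A x-a)\D_{j-1}-\psi_{j-1}$, and both hold, so the two cases of your induction close; in fact, combining them with \cref{eq:Ei} shows $\psi_j=\bigl(\sum_{k\le j}(a_k+b_k)\bigr)\D_j$ identically, so the collinearity detour in the case $E_j(z)=0$ could even be avoided. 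Your route buys independence from the explicit root values $\alpha_i$ and from all positivity estimates (and it silently corrects the paper's typo $\Ny_1=a_1a+b_1$, which should read $a_1a+b_1\Sigma\A$), at the price of invoking \cref{lem:accCRR} once more; the paper's route is more explicit and self-contained at this point. One cosmetic remark: for $j=\ell$ your symbol $E_{\ell+1}$ is only the shorthand $E_\ell-(a_\ell+b_\ell)$ and not the $E$-polynomial of the next vertex $1'$, which has different predecessors, but since you use it purely as that shorthand, nothing breaks.
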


Suppose $z$ is a zero of $\D_{\ell}$ and $a\Nx_\ell+(1-x)\Ny_\ell$.
Recall that by \cref{eq:Ei} and since  $\D_{i+1}=E_{i+1}D_i$ it follows for $i\in[\ell]$ 
for $i\in[\ell]$ it holds that 
\[E_{i}=x-\alpha_{i} \quad \text{ and } \quad \D_i=\prod_{j=1}^i E_j=\prod_{j=1}^{i} (x-\alpha_j).\]
Therefore, the zero set of $\D_i$ is given by  $\{\alpha_i\mid i\in [\ell]\}$.
We define 
\[G_j[x]:=a\Nx_j[x]+(1-x)\Ny_j[x]\]
and aim to show by induction on $j\in[\ell]$ that  for all $i\leq j$: $G_j[\alpha_i]\neq 0$. Note that the claim is equivalent to $G_\ell[\alpha_i]\neq 0$ for all $i\leq \ell$.
For the induction base, \cref{eq:v5} shows that $\N^x_{1}=a_1x+b_1$ and $\N^y_{1}=a_1a+b_1$. Consequently,  it holds that 
$G_1[\alpha_1]=G_1[a]=a_1a+b_1\neq 0.$
By \cref{eq:NomDen}, for $i\in[\ell-1]$, the numerator polynomials can be expressed by
$\N^\circ_{j}=\D_{j}(a_j\N^\circ_v+b_j) + \N^\circ_{j}E_{j+1}$.
This yields 
\begin{linenomath}
\begin{align*}
 G_j[\alpha_i]&=a\big(\D_j(a_j\Nx_v+b_j) + \Nx_jE_{j+1}\big)[\alpha_i]+(1-\alpha_i)\big(\D_j(a_j\Ny_v+b_j) + \Ny_j E_{j+1}\big)[\alpha_i]\\
 &=\D_{j}[\alpha_i]\cdot(aa_j+b_j(1+a-\alpha_i))+
 E_{j+1}[\alpha_i]\cdot\big(a\Nx_{j}[\alpha_i]+(1-\alpha_i)\Ny_{j}[\alpha_i]\big)
\end{align*}
\end{linenomath}

If $i\leq j$, then the first summand vanishes since $\D_{j}[\alpha_i]=0$. The second summand does not vanish by induction and since $E_{j+1}[\alpha_i]=\alpha_i-\alpha_{j+1}<0$. 
For~$i=j+1$, the second term vanishes since $E_{j+1}[\alpha_{j+1}]=0$ and the first term does not vanish since both factors do not. Consequently, it holds that $G_\ell[\alpha_i]\neq 0$. This finishes both, the proof of the claim and the theorem.
\end{proof}

\fi
\end{document}